 \newcommand{\SortNoop}[1]{}
\newtheorem{theorem}{Theorem}
\newtheorem{assumption}{Assumption}
\newtheorem{question}{Question}
\newcommand{\btheo}{\begin{theorem}}
\newcommand{\etheo}{\end{theorem}}
\newcommand{\bproof}{\begin{proof}}
\newcommand{\eproof}{\end{proof}}
\newtheorem{definition}[theorem]{Definition}
\newcommand{\bdefi}{\begin{definition}}
\newcommand{\edefi}{\end{definition}}
\newtheorem{fact}[theorem]{Fact}
\newcommand{\bprop}{\begin{fact}}
\newcommand{\eprop}{\end{fact}}
\newtheorem{corollary}[theorem]{Corollary}
\newcommand{\bcor}{\begin{corollary}}
\newcommand{\ecor}{\end{corollary}}
\newtheorem{example}[theorem]{Example}
\newcommand{\bex}{\begin{example}}
\newcommand{\eex}{\end{example}}
\newtheorem{lemma}[theorem]{Lemma}
\newcommand{\blemma}{\begin{lemma}}
\newcommand{\elemma}{\end{lemma}}
\newtheorem{remark}[theorem]{Remark}
\newcommand{\bremark}{\begin{remark}}
\newcommand{\eremark}{\end{remark}}
\newtheorem{conj}[theorem]{Conjecture}
\newcommand{\bconj}{\begin{conj}}
\newcommand{\econj}{\end{conj}}
\newcommand{\naturals}{\ensuremath{\mathbb{N}}}
\def\0{{\tt 0}} 
\def\1{{\tt 1}} 
\def\?{{\tt *}} 
\renewcommand{\mid}{\,|\,}
\newcommand{\EEx}{\hfill $\Diamond$}
\begin{document}
 \title{Finite-Length Scaling for Polar Codes}

\author{S.~Hamed~Hassani, Kasra Alishahi, and~R\"{u}diger~Urbanke%
\thanks{This work was supported in part by grant No. $200020_146832/1$ and grant No. $200021-121903$ of the Swiss National Science Foundation. This paper was presented in part in \cite{HKU10}, \cite{goli}.
S. H. Hassani is with the Department of Computer Science,
ETHZ, 8092 Zurich, Switzerland
(e-mail: hamed@inf.ethz.ch).}%
\thanks{K. Alishahi is with the department of Mathematical Sciences, Sharif University of Technology, Tehran, Iran
(email:  alishahi@sharif.edu).}%
\thanks{R. Urbanke is with the School of Computer and Communication Science,
EPFL, 1015 Lausanne, Switzerland
(e-mail: rudiger.urbanke@epfl.ch).}
}
 \maketitle
 \begin{abstract}

Consider a binary-input memoryless output-symmetric channel $W$.
Such a channel has a capacity, call it $I(W)$, and for any $R<I(W)$
and strictly positive constant $P_{\rm e}$ we know that we can
construct a coding scheme that allows transmission at rate $R$ with
an error probability not exceeding $P_{\rm e}$.  Assume now that
we let the rate $R$ tend to $I(W)$ and we ask how we have to ``scale''
the blocklength $N$ in order to keep the error probability fixed
to $P_{\rm e}$.  We refer to this as the ``finite-length scaling''
behavior.  This question was addressed by Strassen as well as
Polyanskiy, Poor and Verdu, and the result is that $N$ must grow
at least as the square of the reciprocal of $I(W)-R$.

Polar codes are optimal in the sense that they achieve capacity.
In this paper, we are asking to what degree they are also optimal in terms of their
finite-length behavior.  Since the exact
scaling behavior depends on the choice of the channel our objective
is to provide scaling laws that hold universally for all
binary-input memoryless output-symmetric channels.
 Our approach is based on analyzing the dynamics of the
un-polarized channels. More precisely, we provide bounds on (the
exponent of) the number of sub-channels whose Bhattacharyya constant
falls in a fixed interval $[a,b]$.  Mathematically, this can be
stated as bounding the sequence $\bigl\{\frac{1}{n} \log {\rm{Pr}}(Z_n
\in [a,b])\bigr\}_{n \in \mathbb{N}}$, where $Z_n$ is the Bhattacharyya
process.  We then use these bounds to derive trade-offs between the
rate and the block-length.

The main results of this paper can be summarized as follows.  Consider the sum of
Bhattacharyya parameters of sub-channels chosen (by the polar coding
scheme) to transmit information.  If we require this sum to be
smaller than a given value $P_{\rm e}>0$, then the required
block-length $N$ scales in terms of the rate $R < I(W)$ as $N \geq
\frac{\alpha}{(I(W)-R)^{\underline{\mu}}}$, where $\alpha$ is a
positive constant that depends on $P_{\rm e}$ and $I(W)$.  We show
that $\underline{\mu} = 3.579$ is a valid choice, and we conjecture
that indeed the value of $\underline{\mu}$ can be improved to
$\underline{\mu}=3.627$, the parameter for the binary erasure
channel.  Also, we show that with the same requirement on the sum
of Bhattacharyya parameters, the block-length scales in terms of
the rate  like $N \leq \frac{\beta}{(I(W)-R)^{\overline{\mu}}}$,
where $\beta$ is a constant that depends on $P_{\rm e}$ and $I(W)$,
and $\overline{\mu}=6$.
 \end{abstract}

\section{Introduction}
Polar coding schemes \cite{Ari09} provably achieve the capacity of
a wide class of channels including binary-input memoryless output-symmetric (BMS)
channels.

In coding, the three most important parameters are:   rate ($R$),
block-length ($N$), and  block error probability ($P_{\rm e}$).
Ideally, given a family of codes such as the family of polar codes,
one would like to be able to describe the exact relationship between
these three parameters. This however is a formidable task.  It is slightly
easier to fix one of the parameters and then to describe the
relationship (scaling) of the remaining two.

For example, assume that we fix the rate and consider the relationship
between the error probability and the block-length.  This is the
study of the classical error exponent.  For instance, for random
codes a closer look shows that $P_{\rm e}=e^{-N E(R,W) + o(N)}$,
where $E(R,W)$ is the so-called {\em random coding error exponent} \cite{Gal}
of the channel $W$.  For polar codes, Ar\i kan and Telatar \cite{AT09}
showed that when $W$ is a BMS channel, for any fixed rate $R <I(W)$ the
block error probability of polar codes with the successive cancellation (SC) decoder is upper bounded 
by $2^{-N^{\beta}}$ for
any $\beta<\frac 12$ and $N$ large enough.  This result was refined
later in \cite{HMTU} to be dependent on $R$, i.e. for polar codes
with the SC decoder \begin{equation*}
P_{\rm e}= 2^{-2^{\frac{n}{2} + \sqrt{n} Q^{-1} (\frac{R}{I(W)}) +
o(\sqrt{n})}}, \end{equation*} where\footnote{In this paper all  the
logarithms are in base $2$.} $n=\log N$ and $Q(t) \triangleq
\int_{t}^\infty e^{-z^2 / 2} dz/{\sqrt{2 \pi}}$.

Another option is to fix the error probability and to consider the
relationship between the block-length and the rate.  In other words,
given a code and a desired (and fixed) error probability $P_{\rm
e}$, what is the block-length $N$ required, in terms of the rate
$R$, so that the code has error probability less than $P_{\rm e}$?
This scaling is arguably more relevant (than the error exponent)
from a practical point of view since we typically have a certain
requirement on the error probability and then are interested in
 using the shortest code possible to transmit at a certain rate. 

As a benchmark, let us mention what is the shortest block-length
that we can hope for.  Some thought clarifies that the random
variations of the channel itself require $R \leq I(W)-
\Theta(\frac{1}{\sqrt{N}})$ or equivalently $N \geq
\Theta(\frac{1}{(I(W)-R)^2})$.  Indeed, a sequence of works starting
from \cite{Dob}, then \cite{Str}, and finally  \cite{PPV} showed
that the minimum possible block-length $N$ required to achieve a
rate $R$ with a fixed error probability $P_{\rm e}$ is roughly equal
to
\begin{equation} \label{min_block}
 N \approx \frac{V( Q^{-1}(P_{\rm e}))^2 }{(I(W)-R)^2},
\end{equation}
where $V$ is a characteristic of the channel referred to as channel
dispersion.  In other words, the best codes require a block-length
of order $\Theta(\frac{1}{(I(W)-R)^2})$.

The main objective of this paper is to characterize similar types of
relations for polar codes with the SC decoder.  We argue in this
paper that this problem is fundamentally related to the dynamics
of channel polarization and especially the speed of which the
polarization phenomenon is taking place. We then provide analytical 
bounds on the speed of polarization for BMS channels. Finally,
by using these bounds we derive scaling laws between the block-length and the 
rate (given a fixed error probability) that 
hold universally for all BMS channels.  To state things in a more convenient
language, let us begin by reviewing some conventional definitions, settings, and results
regarding polarization and polar codes.

 %
 %

 \subsection{Preliminaries}
 Let $W: \mathcal{X} \to \mathcal{Y}$ be a BMS channel, with input alphabet $\mathcal{X}
 =\{0,1\}$,  output alphabet\footnote{Throughout this paper we assume for simplicity that the output alphabet of the channel is discrete. 
 However, the results can be naturally extended to channels with continuous alphabet.}  $\mathcal{Y}$, and the transition probabilities $\{W(y \mid x): x \in \mathcal{X}, y \in \mathcal{Y}\}$. 
  We consider the following three parameters for the channel $W$
 \begin{align}
 & H(W) =\! \sum_{y \in \mathcal{Y}}W(y \mid 1) \log \frac{W(y \mid 1) + W(y\mid 0)}{W(y \mid 1)}, 
  \label{H(W)} \\
 & Z(W) =\! \sum_{y \in \mathcal{Y}} \sqrt{W(y\mid 0)W(y \mid 1)},  \label{Z(W)}\\
&  E(W) =\! \sum_{y \in \mathcal{Y}} \! W(y | 1) \bigl( \mathbbm{1}_{\bigl \{W(y | 0) > W(y | 1) \bigr \}} \! + \! \frac 12 \mathbbm{1}_{\bigl \{W(y | 0) = W(y | 1) \bigr \}} \bigr ), \label{E(W)}
 \end{align}
 where $\mathbbm{1}_{\{A\}}$ is equal to $1$ if $A$ is true and $0$ otherwise.  The parameter $H(W)$ is equal to the entropy of the input 
 of $W$ given its output when we assume  uniform distribution on 
 the inputs, i.e.,  $H(W)=H(X\mid Y)$. Hence, we call the parameter $H(W)$ the
  entropy of the channel $W$.  Also note that the capacity of $W$, which we denote by $I(W)$, is given by $I(W)=1-H(W)$.  
 The parameter $Z(W)$ is called the Bhattacharyya parameter of $W$ and $E(W)$ is called the error probability of $W$. It can be shown that $E(W)$ is equal to the error probability 
 in estimating the channel input $x$ on the basis of 
 the channel output $y$ via the maximum-likelihood 
 decoding of $W(y|x)$ (with the further assumption 
 that the input has uniform distribution). The following relations hold between these parameters
  (see for e.g., \cite{Ari09} and\footnote{One way to prove all these inequalities is by using an equivalent representation of BMS channels as probability distributions on the uniform interval (\cite[Section 4.1.4]{RiU08}). 
  Speaking very briefly, any BMS channel $W$ can be represented by a density $a_W (x)$ where $x \in [0,1]$. In this setting, the equivalent definitions of the parameters $H(W), Z(W)$ and $E(W)$ are as follows: $H(W) = \int_{0}^1 h_2(\frac{1-x}{2}) a_W(x) dx$, $Z(W) = \int_{0}^1 \sqrt{1-x^2} a_W(x) dx$, and $E(W) = \int_{0}^1 \frac{1-x}{2} a_W(x) dx$. 
  Now, by using these new definitions, the relation \eqref{bounds1} is easy to prove by comparing the corresponding kernels of the integrals. 
 Relation \eqref{bounds2} follows in the same way and by further noting that the function $h_2(x)$ is concave. More precisely, we can write 
 $H(W) = \int_{0}^1 h_2(\frac{1-x}{2}) a_W(x) dx \leq h_2( \int_{0}^1 \frac{1-x}{2} a_W(x) dx ) = h_2(E(W))$. Relations \eqref{bounds3} and \eqref{EZ_bound} 
 also follow in the same manner. } \cite[Chapter 4]{RiU08}):
 \begin{align} \label{bounds1}
 & 0 \leq 2 E(W) \leq H(W) \leq Z(W) \leq 1,\\
 \label{bounds2}
 & H(W)\leq h_2(E(W)),\\ \label{bounds3}
  &  Z(W) \leq \sqrt{1-(1-H(W))^2}\\
  & 2E(W) \geq 1-\sqrt{1-Z(W)^2}, \label{EZ_bound}
 \end{align}
 where $h_2(\cdot)$ denotes the binary entropy function, i.e.,
 \begin{equation}
  h_2(x)=-x \log (x)-(1-x)\log (1-x).
 \end{equation}

 \subsection{Channel Transform} \label{chantrans}
 Let $\mathcal{W}$ denote the set of all BMS channels and consider a transform $W \to (W^0, W^1)$ that maps $\mathcal{W}$ to $\mathcal{W}^2$ in the following manner.
 Having the channel $W: \{0,1\} \to \cal Y$,  the channels $W^0: \{0,1\} \to {\cal Y}^2$ and $W^1: \{0,1\} \to \{0,1\} \times  {\cal Y}^2$  are
 defined as
 \begin{align} \label{1}
 & W^0(y_1,y_2 | x_1)= \sum_{x_2 \in \{0,1\} } \frac 12 W(y_1| x_1 \oplus x_2) W(y_2|x_2)  \\ \label{2}
 & W^1(y_1,y_2,x_1 | x_2)= \frac 12 W(y_1 | x_1 \oplus x_2) W(y_2 | x_2),
 \end{align}
 The transform  $W \to (W^0, W^1)$ is also known as the channel splitting transform. A direct consequence of the chain rule of entropy yields
 \begin{equation} \label{I_preserve}
 \frac{H(W^0)+ H(W^1)}{2}= H(W).
 \end{equation}
 Regarding the other parameters, we have (see \cite{Ari09} and\footnote{More precisely, we refer to \cite[Theorem 4.141]{RiU08} as well as \cite[Exercise 4.62]{RiU08}.} \cite[Chapter 4]{RiU08})
 \begin{align}
 &  Z(W) \sqrt{2-Z(W)^2} \leq  Z(W^0) \leq 1-(1-Z(W))^2, \label{Zgen-}\\
 & Z(W^1)=Z(W)^2, \label{Zgen+}
 \end{align}
 and (see\footnote{See the previous footnote.} \cite[Chapter 4]{RiU08})
 \begin{align}
 & E(W^0)=1-(1-E(W))^2,\\
 & E(W)^2 \leq E(W^1) \leq E(W).
 \end{align}
 
\subsection{Channel Polarization} \label{chanpol}
Consider an infinite binary tree with the root node placed at the top. In this tree each vertex has $2$ children and 
there are $ 2^n$ vertices at level $n$.
Assume that we label these vertices from left to right from $0$ to $
2^n-1$. Here, we intend to assign to each vertex of the tree a BMS channel.
We do this by a recursive procedure. Assign to the root node the channel $W$
itself. Now consider the channel splitting transform $W \to (W^0, W^1)$  
and from left to right, assign $W^0$ and $W^1$ to the children
of the root node. In general, if $Q$ is the channel that is assigned to vertex $v$, 
we assign $Q^0$ to the ``left" child of $v$ and $Q^1$ to the ``right" child of $v$. 
In this way, we recursively assign a channel to all the 
vertices of the tree. Figure \ref{SKA2}\begin{figure}[ht!]
\centering
\begin{tikzpicture}[scale=.8]
\tikzstyle{every node}=[draw,shape=circle]; 

\draw(4cm,4cm) node[draw=none](x) {$W$};

\draw(1.5cm,3cm) node[draw=none] (x0){$W^0$};
\draw(6.5cm,3cm) node[draw=none] (x1){$W^1$};

\draw(-.3cm,1cm) node[draw=none] (x00){$(W^0)^0$};
\draw(3.3,1cm) node[draw=none] (x01){$(W^0)^1$};
\draw(4.7cm,1cm) node[draw=none] (x10){$(W^1)^0$};
\draw(8.3,1cm) node[draw=none] (x11){$(W^1)^1$};

\draw[-] (x) to node[draw=none,above]{}(x0);
\draw[-] (x) to node[draw=none,above]{}(x1);

\draw[-] (x0) to node[draw=none,above]{}(x00);
\draw[-] (x0) to node[draw=none,above]{}(x01);

\draw[-] (x1) to node[draw=none,above]{}(x10);
\draw[-] (x1) to node[draw=none,above]{}(x11);

\draw(8.8cm,.5cm) node[draw=none] (x001){$\ddots$};
\draw(4.7cm,.5cm) node[draw=none] (x001){$\vdots$};
\draw(3.3cm,.5cm) node[draw=none] (x001){$\vdots$};
\draw(-.5cm,.5cm) node[draw=none] (x001){$\udots$};
\end{tikzpicture}
\caption{The infinite binary tree and the channels assigned to it.}
\label{SKA2}
\end{figure}
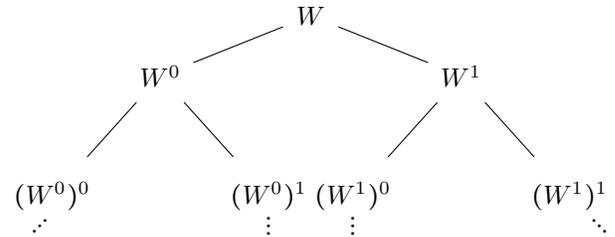 shows the first $2$ levels 
of the binary tree.
Assuming $N=2^n$, we let $W_{N}^{(i)}$ denote the channel that is assigned to a vertex with label 
$i$ at level $n$ of the tree, $0
\leq i \leq N-1$. As a result, one can equivalently relate 
the channel $W_{ N} ^{(i)}$ to $W$ via the following procedure: let the binary representation
of $i$ be $b_1 b_2 \cdots b_n$, where $b_1$ is the most significant
digit. Then we have
\begin{equation*}
 W_{ N} ^{(i)}=(((W^{b_1})^{b_2})^{\cdots})^{b_n}.
\end{equation*}
As an example, assuming $i=6$, $n=3$ we have $W_{8} ^{(6)} =((W^{1})^{1})^{0}$.  
We now proceed with defining a stochastic process called the polarization process. 
This process can be considered as a stochastic representation of the channels associated 
to different levels of the infinite binary tree.  
 
\subsection{Polarization Process} \label{polproc}
Let $\{B_n, n\geq 1\}$ be a sequence of independent and identically distributed (iid) Bernoulli($\frac12$) 
random variables. Denote by $(\mathcal{F}, \Omega, {\rm{Pr}})$ the 
probability space generated by 
this sequence and let  $(\mathcal{F}_n, \Omega_n, {\rm{Pr}}_n)$ be the probability 
space generated by $(B_1, \cdots,B_n)$.  For a BMS channel $W$, define a 
random sequence of channels $W_n$, $n \in \mathbb{N} \triangleq \{0,1,2,\cdots\}$, 
as $W_0=W$ and
\begin{equation} \label{W_n}
W_{n}=  \left\{
\begin{array}{lr}
W_{n-1} ^{0} &  \text{if $B_n=0$},\\
W_{n-1} ^{1} &   \text{if $B_n=1$},
\end{array} \right.
\end{equation}
where the channels on the right side are given by the transform $W_{n-1}\to (W_{n-1}^0, W_{n-1}^1)$. 
Let us also define the random processes $\{H_n\}_{n \in \mathbb{N}}$, $\{I_n\}_{n \in \mathbb{N}}$, $\{Z_n\}_{n \in \mathbb{N}}$ and $\{E_n\}_{n \in \mathbb{N}}$ as
$H_n=H(W_n)$, $I_n=I(W_n)=1-H(W_n)$, $Z_n=Z(W_n)$ and $E_n=E(W_n)$. 
\begin{example}
 By a straightforward calculation one can show that for $W=\text{BEC}(z)$ we have
\begin{align}
&W^0={\rm{BEC}}(1-(1-z)^2)\\
& W^1={\rm{BEC}}(z^2).
\end{align}
Hence, when $W=\text{BEC}(z)$, the channel $W_n$ is always a BEC. Furthermore, the processes $H_n, I_n,Z_n$ and $E_n$ admit simple closed form recursions as follows. We have 
$H_0=z$ and for $n \geq 1$
\begin{equation} \label{H_n}
H_{n} =\left\{ \begin{array}{cc} 
1-(1-H_{n-1})^2,&\text{ w.p. }\frac12\\
H_{n-1}^2,&\text{ w.p. }\frac12.
\end{array}\right.
\end{equation}
Also, we have\footnote{For the channel $W=\text{BEC}(z)$, it is easy to show that $2E(W)=H(W)=Z(W)=z$.} $2E_n=H_n=1-I_n=Z_n$.

For channels other than the BEC, the channel $W_n$ gets quite complicated
in the sense that  the cardinality
of the output alphabet of the channel $W_n$ is doubly exponential in $n$ (or exponential in $N$). 
Thus, tracking the exact outcome of $W_n$ seems 
to be a difficult task 
(for more details see \cite{TV10, RHTT}). Instead, as we will see in the sequel, 
one can prove many interesting properties regarding the processes $H_n,Z_n$ and $E_n$.  
\end{example}

Let us quickly review the limiting properties of the above mentioned processes \cite{Ari09, AT09}. 
From \eqref{I_preserve} and \eqref{W_n}, one can write for $n \geq 1$
\begin{equation}
 \mathbb{E}[H(W_n) \mid W_{n-1}]\stackrel{\eqref{W_n}}{=}\frac{H(W_{n-1}^0)+H(W_{n-1}^1)}{2}\stackrel{\eqref{I_preserve}}{=}H(W_{n-1}).
\end{equation}
Hence, the process $H_n$ is a martingale. Furthermore, since $H_n$ is also bounded (see \eqref{bounds1}), by 
Doob's martingale convergence theorem, the process $H_n$ converges almost 
surely to a limit random variable $H_\infty$. As $H_n$ is also bounded,  we have for $n \to \infty$
\begin{equation*}
\mathbb{E}\bigl[ | H_{n}-H_{n-1}| \bigr]= \mathbb{E}\bigl[ |H(W_n^0)-H(W_n)| \bigr] \to 0.
\end{equation*}
As a result, we must have that $H(W_n^0)-H(W_n)$ converges to $0$ almost surely (a.s.). We will shortly prove that for a channel $P$, in order to have $H(P^0) \approx H(P)$ we must either have $H(P) \approx 0$ 
(i.e., $P$ is the noiseless channel) or $H(P) \approx 1$ (i.e., $P$ is the completely noisy channel). By this claim and the fact that $H_n$ converges a.s. to $H_\infty$, we conclude that $H_\infty$ 
takes its values in the set $\{0,1\}$. Also, as $\mathbb{E}[H_n]=\mathbb{E}[H_\infty]=H(W)$, we obtain
\begin{equation} \label{H_inf}
H_{\infty}=  \left\{
\begin{array}{lr}
0 &  \text{w.p. $1-H(W)$},\\
1  &   \text{w.p. $H(W)$}.
\end{array} \right.
\end{equation}
It remains to prove the claim mentioned above. It is sufficient\footnote{Here, we are skipping some unnecessary details. For the sake of completeness, we note that the function $H(\cdot)$ is a continuous function over the space of BMS channels. For more details, we refer to \cite[Chapter 4]{RiU08}.} to show that for a channel $P$, in order to have $H(P^0)=H(P)$ we must have $H(P) \in \{0,1\}$. We use the so called \textit{extremes of information combining} inequalities \cite[Theorem 4.141]{RiU08}:  Let $P$ be an arbitrary BMS channel. To simplify
notation, let $h \triangleq H(P)$ and also let $\epsilon \in [0, \frac 12]$ be such that $h_2(\epsilon)=h$ (in this way, the two channels BEC($h$) and BSC($\epsilon$) have the same capacity). We have
\begin{align} 
& \label{ex-}  h   \leq \overbrace {H(\text{BSC}(\epsilon)^0 )}^{h_2(2\epsilon(1-\epsilon))}  \leq  H(P^0) 
 \leq  \overbrace{H(\text{BEC}(h)^0)}^{ 1-(1-h)^2}, \\
& \label{ex+}  \underbrace{H(\text{BEC}(h)^1 )}_{h^2} \leq H(P^1) \leq \underbrace{ H(\text{BSC}(\epsilon)^1)}_{2h-h_2(2 \epsilon(1-\epsilon))} \leq h. 
\end{align}
Now, to prove the claim, assume that $P$ is such that $H(P^0)=H(P)=h$. Using \eqref{ex-} we obtain $H(\text{BSC}(h)^0)=H(P)$ or equivalently
$h_2(2 \epsilon(1-\epsilon))=h=h_2(\epsilon)$. As a result, $\epsilon$ must be a solution of the equation $\epsilon=2\epsilon(1-\epsilon)$ which yields $\epsilon \in \{0,\frac 12\}$. 
Also, as 
$H(P)=h_2(\epsilon)$, then $H(P)$ can either be $0$ or $1$ and hence the claim is justified. 
Using the bounds \eqref{bounds1}-\eqref{bounds3} it is clear that the processes $Z_n$ and $E_n$ converge a.s. to $H_\infty$ and 
$\frac 12 H_\infty$, respectively.    
 \subsection{Polar Codes}
Given the rate $R<I(W)$, polar coding  is based on selecting a 
set of $2^nR$ rows of 
the matrix 
$G_n= \bigl [ \begin{smallmatrix} 1 & 0 \\ 1 &1  \end{smallmatrix} \bigr]^{\otimes n}$ 
to form a $2^nR \times 2^n$ matrix 
which is used as the generator matrix in the 
encoding 
procedure. The way this set is selected is dependent on the channel $W$ and is briefly explained as follows:  
Order the the set of channels  $\{W_{N}^{(i)}\}_{0 \leq i \leq N-1}$ according to their error probability (given in \eqref{E(W)}). Then, pick the $N\cdot R$ channels which have the smallest error probability and consider the rows of $G_n$ with the same indices as these
channels.\footnote{One can also construct polar codes by choosing the channels that have the least Bhattacharyya parameter or the least entropy (see \eqref{H(W)} and \eqref{Z(W)}). In essence, these constructions are all equivalent except that a few indices might be different. Choosing the channels that have the least error probability has the advantage of minimizing the ``union"-type bounds that can be provided on the block-error probability when we use SC decoding (see e.g. the right side of \eqref{P_e}).} 
E.g., if the channel $W_{N}^{(i)}$ is chosen, then 
the $i$-th row of $G_n$ is selected. 
In the following, given $N$, we call the set of indices of $N \cdot R$ channels 
with the least error probability \emph{the set of good indices} and denote it by 
$\mathcal{I}_{N,R}$. 
Moreover, we will frequently use the terms 
``the set of good indices'' and $\mathcal{I}_{N,R}$ interchangeably.

We now briefly explain why such a code construction is reliable for any rate $R<I(W)$, provided that the block-length is large enough. It is proven in \cite{Ari09} that the block error probability of such polar coding scheme
 under SC decoding, denoted by $P_{\rm e}$, is bounded from both sides by\footnote{Note here that by \eqref{E(W)} the
 error probability of a BMS channel is less than its Bhattacharyya value. Hence, the right side of 
\eqref{P_e} is a better upper bound for the block error probability than the sum of Bhattacharyya values.}
\begin{equation} \label{P_e}
\max_{i \in \mathcal{I}_{N,R} } E(W_N^{(i)}) \leq P_{\rm e} \leq \sum_{i \in \mathcal{I}_{N,R}} E(W_{N}^{(i)}).
\end{equation}   
Recall 
from Subsection~\ref{polproc} that the process $E_n=E(W_n)$ converges a.s. to a random variable $E_\infty$ such that ${\rm{Pr}}(E_\infty=0)=I(W)$. 
Hence, it is clear from the definition of the set of good indices, $\mathcal{I}_{N,R}$, that the left side of \eqref{P_e} decays to $0$ for any $R < I(W)$ as $n$ grows large. However, the story is not over yet as this is only a lower bound on $P_{\rm e}$. 
Nonetheless, one can also show that the right side of \eqref{P_e} decays to $0$. This was initially shown in \cite{Ari09}, and later in \cite{AT09} it was proven that all of the three terms in 
\eqref{P_e} behave like $2^{-2^{\frac{n}{2}+ o(n)}}$.

\section{Problem Formulation}  \label{prob_form}
As we have seen in the previous section, the processes $H_n$ and $Z_n$  polarize in the sense that they
converge a.s. to  $\{0,1\}$-valued random variables $H_\infty$ and $Z_\infty$, respectively. 
In other words, almost surely as $n$ grows, the value of $Z_n$ (or $H_n$) 
is either very close to $0$ or very close to $1$. 
Here,  we investigate the dynamics of polarization.
We start by noting that at each time $n$ there still exists a (small and  in $n$ vanishing) 
probability that the process $Z_n$ (or $H_n$) takes
a value far away from the endpoints of the unit interval (i.e., $0$ and $1$). 
Our primary objective is to study these small  probabilities. 
More concretely, let $0 < a < b < 1$ be constants and consider the quantity
${\rm Pr} (Z_n \in [a, b])$. This quantity represents the fraction of
sub-channels that are still un-polarized at time $n$.  An important question is how fast (in terms of $n$) the quantity 
${\rm Pr} (Z_n \in [a, b])$ decays to zero. This question is intimately related to measuring the limiting properties 
of the sequence $ \{ \frac{1}{n} \log {\rm Pr} (Z_n \in [a, b]) \}_{n \in \mathbb{N}} $.  
\begin{example}
Assume $W={\rm{BEC}}(z)$. In this case the process $Z_n$ has a simple closed form recursion as $Z_0=z$ and
\begin{equation} \label{Z_n}
Z_{n+1} =\left\{ \begin{array}{cc} 
Z_{n}^2,&\text{ w.p. }\frac12,\\
1-(1-Z_{n})^2,&\text{ w.p. }\frac12.
\end{array}\right.
\end{equation}
 Hence, it  is  straightforward  to compute the value 
${\rm Pr} (Z_n \in [a, b])$ numerically. Let $a=1-b=0.1$.  Figure~\ref{fig:BEC_log} shows the value $\frac{1}{n} \log({\rm Pr}(Z_n \in [a,b]))$ in terms of 
$n$ for $z=0.5, 0.6, 0.7$. 
\begin{figure}[htb] 
\begin{center} \input{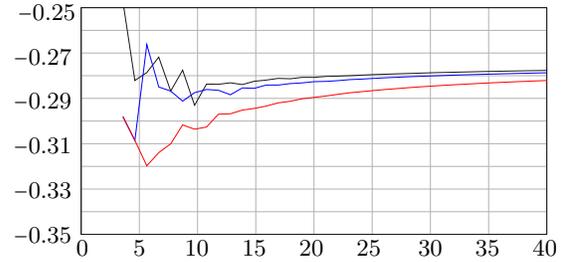} 
\end{center}
\caption{The value of $\frac{1}{n} \log({\rm Pr}(Z_n \in [a,b])$ versus $n$ for $a=1-b=0.1$ when $W$ is a BEC with
erasure probability  $z=0.5$ (top curve), $z=0.6$ (middle curve), and $z=0.7$ (bottom curve).}\label{fig:BEC_log}
\end{figure} 
This figure suggests that the sequence $\{ \frac{1}{n} \log {\rm Pr} (Z_n \in [a, b]) \}_{n \in \naturals}$ converges to a limiting 
value that is somewhere between $-0.27$ and $-0.28$. Note that for different values of $z$, the 
limiting values are very close to each other.
\EEx
\end{example}
For other BMS channels, the process $Z_n$ does not have a simple closed form recursion
as for the BEC, and hence 
we need to use  approximation  methods (for more details see \cite{TV10, RHTT}).  Using such methods, we have plotted in Figure~\ref{fig:general_log} 
the value 
of ${\rm Pr} (Z_n \in [a, b])$ ($a=1-b=0.1$) for the channel families 
BSC($\epsilon$), and BAWGNC($\sigma$) with different parameter values. 
\begin{figure}[htb] 
\begin{center} \input{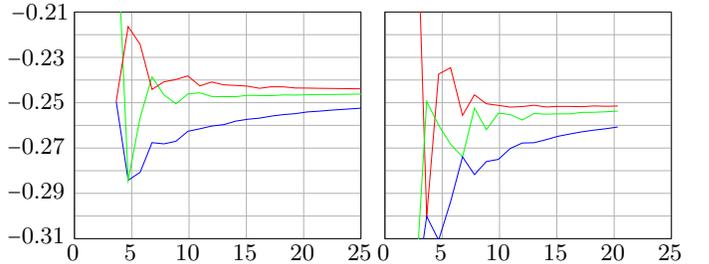} 
\end{center}
\caption{ \emph{Left figure:} The value of $\frac{1}{n} \log({\rm Pr}(Z_n \in [a,b])$ versus $n$ for $a=1-b=0.1$ and $W$ being a BSC with
cross-over probability  $\epsilon=0.11$ (top curve), $\epsilon=0.146$ (middle curve), and $\epsilon=0.189$ (bottom curve).
These BSC channels have capacity $0.5$, $0.4$ and $0.3$, respectively.
\emph{Right figure:} 
the value of $\frac{1}{n} \log({\rm Pr}(Z_n \in [a,b])$ versus $n$ for $a=1-b=0.1$ and $W$ is a BAWGN with
noise variance  $\sigma=0.978$ (top curve), $\sigma=1.149$ (middle curve), and $\sigma= 1.386$ 
(bottom curve). These BAWGN channels have capacities $0.5$, $0.4$ and $0.3$, respectively. 
}\label{fig:general_log}
\end{figure} 

The above numerical evidence suggests that the 
quantity ${\rm Pr} (Z_n \in [a, b])$ decays to zero exponentially 
fast in $n$. Further, we observe that the limiting value of this sequence is  dependent 
on the starting channel $W$ (e.g., from the figures it is clear that the channels BEC, BSC and BAWGN have different limiting values). Let us
 now be concrete and rephrase the above speculations as follows.
\begin{question}\label{Q1}
Does the quantity ${\rm Pr}(Z_n \in [a,b])$ decay exponentially in $n$? If yes,
 what is the limiting value of $\frac{1}{n} \log {\rm Pr}(Z_n \in [a,b])$ and how is this limit 
 related to the starting channel $W$ and the choice of $a$ and $b$?  
\end{question}
From Figures~\ref{fig:BEC_log} and \ref{fig:general_log}, 
we observe that the value of  $\frac{1}{n} \log {\rm Pr}(Z_n \in [a,b])$ is the least when $W$ is a BEC
 and this suggests that the channel BEC polarizes faster than the other BMS channels.  This is intuitively 
justified as follows: Fix a value $z\in (0,1)$ and assume that $W$ is a BMS channel with Bhattacharyya
parameter $Z(W)=z$. Now, consider the values $Z(W^0)$ and $Z(W^1)$. Using relations \eqref{Zgen-} and 
\eqref{Zgen+}, it is clear that the values $Z(W^0)$ and $Z(W^1)$ are closest to the end points of the 
unit interval if $W$ is a BEC. In other words, at the channel splitting transform, the channel ${\rm BEC}(z)$ polarizes faster 
than the other BMS channels.    
\begin{question}\label{Q2}
For which set of channels  does the 
quantity ${\rm Pr}(Z_n \in [a,b])$ decay the fastest or the slowest?
\end{question}
Let us now be more ambitious and aim for the ultimate goal.
\begin{question}\label{Q3}
Can we characterize the 
exact behavior of  ${\rm Pr}(Z_n \in [a,b])$ as a function of $n$, $a,b$ and $W$? 
\end{question}
Finally, we ask how the answers to the above questions will guide 
us through the understanding of the finite-length scaling behavior of polar codes. 
An immediate relation stems from the fact that the 
quantity ${\rm Pr}(Z_n \in [a,b])$  indicates 
the portion of the sub-channels that have not polarized at time $n$. In particular, all the 
channels in this set  have a large Bhattacharyya value (and hence a large error probability).  
Consequently, if any of such un-polarized channels (or equivalently indices) are included in the set of good indices then the error probability would not be small (see \eqref{P_e}). 
Thus, the maximum reliable 
rate that we can achieve is restricted by the portion of these yet un-polarized channels.  
The answers to  Questions \ref{Q1}-\ref{Q3} posed above will therefore be crucial in finding answers to the following question.
\begin{question}\label{Q4}
Fix the channel $W$ and a target block error probability $P_{\rm e}$.   To have a polar code with error probability less than $P_\text{e}$, how does the 
required block-length $N$ scale with the rate $R$?
\end{question}

Finding a suitable answer to the above questions is an easier task when the channel $W$ is a BEC. This is due to the simple closed form expression of
the process $Z_n$ given in \eqref{Z_n}.  
In the next section (Section~\ref{heur}), we provide heuristic methods that lead to 
suitable numerical answers to  Questions~\ref{Q1} and \ref{Q3} for the BEC. As we will see in the next section, such heuristic derivations are 
in excellent compliance with numerical experiments.
Using such  derivations, we also give an answer to Question~\ref{Q4} for the BEC.  

The heuristic results of Section~\ref{heur} provide us then with a concrete path to analytically tackle the above questions. In Section~\ref{scale_anal}
we provide analytical answers to  Questions~\ref{Q1}-\ref{Q4} for the BEC as well as other BMS channels. 
Providing a complete answer to Questions~\ref{Q1}-\ref{Q4} is beyond what we achieve in Section~\ref{scale_anal}, nevertheless, we provide close and useful  bounds.  Finally, in Section~\ref{conclusion} we conclude the paper.

\section{Heuristic Derivation for the BEC} \label{heur}
In this section we provide a heuristic (and numerical) procedure that leads to a clear picture of how the process 
$Z_n$ evolves through time $n$ when the channel $W$ is a BEC. 
As we will see, this procedure guides us to a number of conclusions 
about the process $Z_n$ which we refer to as \emph{assumptions}. By using these assumptions we can (numerically)  compute 
the important parameters for the process $Z_n$ which will then enable us to predict scaling laws for the evolution of 
$Z_n$ as well as scaling laws for polar codes. Several plots are provided to show the excellent compliance of these scaling predictions with  reality.  
The intuitive discussions as well as the numerical observations of this section will then 
help us in building a rigorous framework for the analysis of the evolution of $Z_n$. 
This is the subject of the next section (Section~\ref{scale_anal}). 
Let us emphasize that none of heuristic assumptions of the current section (Section~\ref{heur}) 
will be used in any of the proofs of the next section.   

Throughout this section we assume that the channel $W$ is the BEC($z$) where $z \in [0,1]$. 
To avoid  cumbersome notation, let
us define\footnote{${\rm Pr}(Z_n \in [a,b] \,  \big | \, Z_0=z  )$ and ${\rm Pr}(Z_n \in [a,b] )$ denote the same concept. We occasionally use the longer one only for the sake of a better illustration.}
\begin{equation} \label{p_n}
p_n(z,a,b)=  {\rm Pr}(Z_n \in [a,b] \,  \big | \, Z_0=z  ),
\end{equation}
where the condition $Z_0 = z$ means that $Z_n$ is the Bhattacharyya process of the  BEC($z$). We start by noticing that by \eqref{Z_n} the function $p_n(z,a,b)$ satisfies the following recursion
\begin{equation}\label{p_n}
 p_{n+1}(z,a,b)=\frac{p_n(z^2,a,b)+ p_n(1-(1-z)^2,a,b) }{2},
\end{equation}
with 
\begin{equation}\label{p-init} 
p_0(z,a,b)= \mathbbm{1}_{\{z \in [a,b] \}}.
\end{equation}
More generally, one can easily observe the following. Let $g: [0,1] \to \mathbb{R}$ 
be an arbitrary bounded function. Define the functions $\{g_n\}_{n \in \mathbb{N}}$, $g_n :[0,1] \to \mathbb{R}$, as
\begin{equation}\label{31}
g_n(z) = \mathbb{E} [ g(Z_n) \,  \big | \, Z_0=z].
\end{equation}
The functions $\{g_n\}_{n \in \mathbb{N}}$ satisfy the following recursion for $n \in \mathbb{N}$
\begin{equation}
g_{n+1}(z)= \frac{g_n(z^2) + g_n(1-(1-z)^2)}{2}.
\end{equation}
This observation motivates us to define the {\em polar operator}, denoted by $T$,  as follows. 
Let $\mathcal{B}$ be  the space of all bounded and real valued functions $g$ over $[0,1]$. The polar operator $T: \mathcal{B} \to \mathcal{B}$ maps a function $g \in \mathcal{B}$ 
to another function in $\mathcal{B}$ in the following way
\begin{equation}\label{pol_op}
T(g)= \frac{g(z^2) + g(1-(1-z)^2)}{2}.
\end{equation}   
It is now clear that  
\begin{equation} \label{g-T}
 \mathbb{E}[g(Z_n) \,  \big | \, Z_0=z ]=\overbrace{T \circ T \circ \cdots \circ T (g)}^{n \text{ times}} \triangleq T^n(g).
\end{equation}
In this new setting, our objective is to study the limiting behavior as well as the dynamics of the functions $T^n(g)$ when $g$ is a simple function  as in \eqref{p-init}.
This task is  intimately related to studying the eigenvalues of the polar operator $T$ and their corresponding eigenfunctions.
Also, a check shows that both of the functions 
\begin{align} \label{v_0,v_1}
v_0(z)=1, v_1(z)=z, 
\end{align}
are eigenfunctions associated to the eigenvalue  $\lambda=1$.

Consider now a function $g \in \mathcal{B}$. For simplicity, let us also assume that $g$ is continuous at $z =0$ and $z=1$. 
By using the fact that $Z_n$ polarizes, it is easy to see that
\begin{equation*}
\mathbb{E} [g(Z_n) \,  \big | \, Z_0=z] \stackrel{n \to \infty}{ \longrightarrow} (1-z)g(0) + z g(1).
\end{equation*}
Equivalently by \eqref{g-T} we have
\begin{equation} \label{T-conv}
T^n (g) \stackrel{n \to \infty}{ \longrightarrow} g(0) - z (g(1) - g(0)).
\end{equation}
In other words, $T^n(g)$ converges to a linear combination of the two eigenfunctions $v_0(z)=1$ and $v_1(z)=z$ that are associated to the eigenvalue  $\lambda=1$.
However, our main interest is to find out how fast the convergence in \eqref{T-conv} is taking place in terms of $n$. In this regard, to keep things simple and in a more manageable setting,  let us consider finite-dimensional approximations  of $T$. 
This is done by discretizing the unit interval into very small sub-intervals with the same length and by assuming that $T$ operates on all the points of 
each sub-interval in the same way. More concretely, consider a (large) number $L \in \mathbb{N} $ and let the 
numbers $x_i$, $i\in \{0,1,\cdots, L-1\}$, be defined as $x_i= \frac{i}{L-1}$. Hence, the unit interval $[0,1]$ can be thought of  as the union of the small 
sub-intervals $[x_i,x_{i+1}]$. Now, for simplicity assume that $g$ is a (piece-wise) continuous  function on $[0,1]$. 
Intuitively, by assuming $L$ to be large, we expect that the value of $g$ 
is the same throughout each of the intervals $[x_i,x_{i+1})$. Such an assumption seems also reasonable for the function $T(g)$ given in \eqref{pol_op}.
Thus, we can approximate the function $g$ as an $L$ dimensional vector 
\begin{equation} \label{g_L}
g_L \approx  [ g(x_0), g(x_1), \cdots , g(x_{L-1}) ].
\end{equation} 
In this way, from \eqref{pol_op} we expect that the function $T(g)$ can be well approximated by a matrix multiplication 
\begin{equation}
T (g)\approx g_L T_L, 
\end{equation}
where the $L \times L$ matrix $T_L$ is defined as follows. 
Let $T_L(i,j)$ be an element of $T_L$ in the $i$-th row and the $j$-th column. 
Define $T_L(1,1)=T_L(L,L)=1$ and for the other elements $i,j \in \{0,1, \cdots, L-1\}$ we let
\begin{equation} \label{T_L}
T_L(i+1,j+1)=  \left\{
\begin{array}{lr}
 \frac 12, &  \text{if $i =\lfloor (L-1) (\frac{j}{L-1})^2 \rfloor$},\\
 \\
\frac 12, &   \text{if $i=\lceil (L-1) (1- (1-\frac{j}{L-1})^2) \rceil$},\\
\\
0, & \text{o.w.}
\end{array} \right.
\end{equation}
As an example, the matrix $T_L$ for $L=10$ has the following form
\[ T_{10}= \left( \begin{array}{cccccccccc}
         1   &      \frac 12    &     \frac 12   &      0             &       0             &     0             &      0            &    0            &     0            &      0 \\
         0   &      0               &     0              &      \frac 12  &       \frac 12  &     0             &      0            &    0            &     0            &      0 \\
         0   &      \frac 12    &     0              &      0             &       0             &     \frac 12  &      0            &    0            &     0            &      0 \\
         0   &      0               &     0              &      0             &       0             &     0             &      0            &    0            &     0            &      0 \\
         0   &      0               &     \frac 12   &      0             &       0             &     0             &      \frac 12 &    0            &     0            &      0 \\
         0   &      0               &     0              &      \frac 12  &       0             &     0             &      0            &    \frac 12 &     0            &      0 \\
         0   &      0               &     0              &      0             &       0             &     0             &      0            &    0            &     0            &      0 \\
         0   &      0               &     0              &      0             &       \frac 12  &     0             &      0            &    0            &     \frac 12 &      0 \\
         0   &      0               &     0              &      0             &       0             &     \frac 12  &      \frac 12 &    0            &     0            &      0 \\
         0   &      0               &     0              &      0             &       0             &     0             &      0            &    \frac 12 &     \frac 12 &      1 \\
\end{array} 
\right).\]   
All the columns of $T_L$ sum up to $1$.  Hence, an application of 
the Perron-Frobenius  theorem \cite[Chapter 8]{PF} shows that the eigenvalues of $T_L$ are all inside the interval $[-1,+1]$. Also,  
a check shows that the matrix $T_L$ has an eigenvalue equal to $\lambda_0=1$ with two corresponding (left) eigenvectors 
\begin{align*}
&v_{0,L}= (1, 1, \cdots, 1),\\
&v_{1,L}= (x_0,x_1, \cdots, x_{L-1}),
\end{align*}
where $x_i = \frac{i}{L-1}$. By using \eqref{g_L}, it is easy to see that  the vectors $v_{0,L}$ and $v_{1,L}$ are the corresponding $L$-dimensional approximations of the eigenfunctions $v_0$ and $v_1$ given in \eqref{v_0,v_1}. We thus expect 
\begin{align} \label{g_L-T_L}
g_L T_L^n  \stackrel{n \to \infty}{\longrightarrow} c_0 v_{0,L} + c_1 v_{1,L},
\end{align}
where $c_0$ and $c_1$ are constants. Moreover, from \eqref{T-conv} we have
\begin{align*}
 & c_0 \stackrel{L \to \infty}{\longrightarrow} g(0), \\
 & c_1  \stackrel{L \to \infty}{\longrightarrow} g(1) - g(0).
\end{align*}
In order to find out how fast the convergence in \eqref{g_L-T_L} is, we  look at the 
second and third largest eigenvalues (in absolute value) of $T_L$ as $L$ grows large. 
We denote the second largest eigenvalue of $T_L$ by $\lambda_2(L)$, and the third largest eigenvalue is denoted by $\lambda_3(L)$. 
Table~\ref{eig-L} contains the value of these eigenvalues computed numerically for several (large) values of $L$.
\begin{table}
\centering
\begin{tabular}{c c c c c c c c c  }
\hline\hline 
 $L$ & $1000$ & & $2000$  & & $4000$ & & $8000$   \\
\hline
$\lambda_2(L)$  & $0.8227$ & & $0.8240$ & & $0.8248$ & & $0.8253$ \\
\hline
$\lambda_3(L)$    & $0.6878$ & &  $0.6958$ & &$0.7012$ & & $0.7046$   \\
\hline
\end{tabular}
\caption{{\small Values of $\lambda_2(L)$ and $\lambda_3(L)$, which correspond to the second and third largest eigenvalues of $T_L$ (in absolute value), are computed numerically for different values of $L$.}}
\label{eig-L}
\end{table}  
 It can thus be conjectured that
\begin{align}
\label{lambda_2} &\lim_{L \to \infty} \lambda_2(L) \approx 0.826, \\
\label{lambda_3} &\lim_{L \to \infty} \lambda_3(L) \approx 0.705.
\end{align}
This belief guides us to conclude that, for $L$ growing large, if we start from any vector $g_L$ then 
\begin{equation} \label{T_L_approx}
g_L T_L^n  \approx c_0 v_{0,L} + c_1 v_{1,L} + c_2 \lambda_2^n v_2 + O(n \lambda_3^n).  
\end{equation}
The above approximate relation indicates that for large $L$, the distance of $g_L T_L^n $ from its value in the limit is roughly equal to $c_2 \lambda_2^n$. 

One particular instance of the function $g$, is the one given in \eqref{p-init}, i.e., $g(z) = \mathbbm{1}_{\{z \in [a,b]\}}$. If $a,b \in (0,1)$ we know that $T^n(g) = \text{Pr} (Z_n \in [a,b])$ converges to $0$ everywhere (see \eqref{p_n}). If we consider the $L$-dimensional approximations of $g$ and $T$ for $L$ large, then the final limit of $g_L T_L^n$ would be arbitrarily close to $0$ (depending on how large $L$ is). Also, by \eqref{T_L_approx} the distance to this final limit is around $\lambda_2^n = 2^{ -n \log \frac{1}{\lambda_2}}$.  In words, the \textit{speed} of this convergence is $\log \frac{1}{\lambda_2}$.    
Now, let us go back the  original polar operator $T$ defined in \eqref{pol_op}. As we argued above, the operators $T_L$, for $L$ large, are good 
finite-dimensional approximations of $T$. The (experimental) relation \eqref{T_L_approx} brings us to the following assumption about $T$.

\begin{assumption}[Scaling Assumption]\label{SA}
There exists $\mu\in (0,\infty)$ such that, for any $z,a,b \in (0,1)$ such that $a<b$, the limit
$\lim_{n\to\infty} 2^{\frac{n}{\mu}} p_n(z,a,b)$ exists in $(0,\infty)$.
We denote this limit by $q(z,a,b)$. In other words,
\begin{equation} \label{SL}
\lim_{n\to\infty} 2^{\frac{n}{\mu}} {\rm Pr}(Z_n \in [a,b]) =q(z,a,b).
\end{equation}
We call the value $\mu$ the {\emph{scaling exponent}} of polar codes for the BEC.
\end{assumption}

By \eqref{SL} the value of ${\rm Pr}(Z_n \in [a,b])$ converges to $0$ like $2^{- \frac{n}{\mu}}$. Hence, the \textit{speed} of polarization for the process $Z_n$ over the BEC is equal to $\frac{1}{\mu}$.

Note here that by \eqref{lambda_2} we expect that  
\begin{equation}
2^{-\frac{1}{\mu}} = \lim_{L \to \infty} \lambda_2 (L)  \approx 0.826 \Rightarrow \frac{1}{\mu} \approx 0.275.
\end{equation}
Let us now describe a numerical method for computing $\mu$ and $q(a,b,z)$.
In this regard, we follow the approach of \cite{KMTU10}.  
First, by \eqref{p_n} and the scaling law assumption we conclude that
\begin{equation}\label{q(x)}
 2^{-\frac{1}{\mu}}q(z,a,b)=\frac{q(z^2,a,b)+ q(1-(1-z)^2,a,b)}{2}. 
\end{equation}
Equation \eqref{q(x)} can be solved numerically by recursion. In general, 
this equation can have many solutions. The idea here is to use the scaling assumption 
to properly initialize a recursion procedure to compute the desired solution of \eqref{q(x)} that is compatible with \eqref{SL} (i.e., a recursion that gives us the desired function $q$ in \eqref{SL}).
Let us now describe the recursion.
First of all, note that equation \eqref{q(x)} is invariant under multiplicative 
scaling of $q$. Also, from this equation one can naturally guess that $q(z,a,b)$
can be factorized into
\begin{equation} \label{p_invariant}
 q(z,a,b)=c(a,b) q(z),
\end{equation}
where $q(z)$ is a solution of \eqref{q(x)} with\footnote{Note that choosing $q(\frac 12)=1$ is an arbitrary normalization choice.}  $q(\frac 12)=1$.
We  iteratively compute $\mu$ and $q(z)$. 

Initialize $q_0(z)$ --say-- with\footnote{This is an arbitrary choice for $q_0(z)$. One can try other starting 
points, e.g., $q_0(z)= \mathbbm{1}_{\{ z \in [a,b] \}}$ or $q_0(z) = 4z(1-z)$. All the initial points that we have tried have 
led to the same $q(z)$. This is indeed compatible with the scaling assumption and \eqref{p_invariant}.}   
$q_0(z)=\mathbbm{1}_{\{z \in [ \frac 14, \frac 34] \}}$ and compute recursively new estimates of $q_{n+1}(z)$ by first computing
\begin{align*}
  \hat{q}_{n+1}(z) = & q_{n}(z^2) +  q_n(1-(1-z)^2),
\end{align*}
and then by normalizing $q_{n+1}(z)=  \hat{q}_{n+1}(z) /  \hat{q}_{n+1}(\frac12)$,
so that $q_{n+1}(\frac12)=1$. It is easy to see that $\hat{q}_{n}$ indeed converges to $q(z)$ 
provided that the scaling assumption as well as \eqref{p_invariant} hold true. 
We have implemented the above functional recursion numerically
by discretizing the $z$ axis. 
Figure~\ref{fig:scaling} shows the resulting numerical approximation of 
$q_{\infty}(z)$ as obtained by iterating the above procedure
until $\Vert q_{n+1}(z) - q_n(z)\Vert_{\infty}\leq10^{-10}$ ($\forall z \in [0,1]$) 
and by using a discretization with $10^6$ equi-spaced values of $z$.
\begin{figure}[ht] 
\begin{center} \input{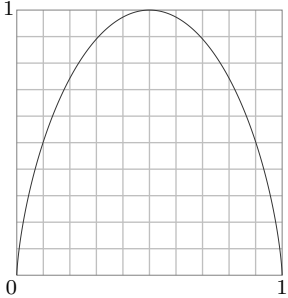} 
\end{center}
\caption{The function $ q(z)$ for 
$z \in [0, 1]$.}\label{fig:scaling}
\end{figure}
From this recursion we also get a numerical estimate of the scaling
exponent $\mu$. In particular we expect 
$ \hat{q}_n(1/2)\to 2^{1-\frac{1}{\mu}}$ 
as $n\to\infty$. Using this method, we 
obtain the estimate $1/\mu \approx 0.2757$.

\begin{figure}[h] 
\begin{center} \input{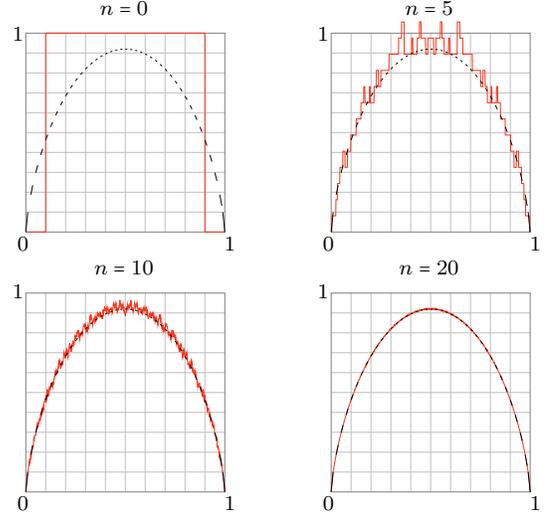} 
\end{center}
\caption{The functions $ 2^{\frac{n}{\mu}}q_n(a,b,z)$ for various values of 
$n$. Here we have fixed $a=1-b=0.9$ and $\frac{1}{\mu}=0.2757$.
In all of the four plots the dashed curve corresponds to
 $c(a,b) q(z)$ with $c(a,b)=0.92$.  Here, the function $q(z)$ corresponds to the numerical solution of \eqref{q(x)}. \label{fig:pemps}}
\end{figure}
As mentioned above, the function $q(a,b,z)$ differs from $q(z)$ by 
a multiplicative
constant $c(a,b)$ that is to be found by other means. In 
Figure~\ref{fig:pemps} we plot the functions 
$2^{\frac{n}{\mu}}q_n(z,a,b)$ for 
$a=1-b=\frac{1}{10}$ and different values of $n$. We observe that, as $n$ 
increases these plots and the curve 
$c(a,b) q(z)$ with $c(a,b)=0.92$ match very well. Even for moderate values of $n$ (such as $n=10$) we observe that  the curves have a fairly good agreement.

Let us now see what  the scaling law assumption implies about the finite-length behavior of polar codes.
For simplicity, we assume that communication takes place on the BEC($\frac 12$). We are given a target error 
probability $P_{\rm e}$ and want to achieve a rate of value at least $R$. 
What block-length $N$ should we choose?  

Consider the process $Z_n$ with $Z_0=z=\frac 12$. It is easy to see that the set of possible 
values that   $Z_n$ takes in $[0,1]$ is  symmetric  around $z=\frac 12$.  
Now, according to the scaling law for $x \in [0,\frac 12]$, there is a constant $q(\frac 12,x, \frac 12) \triangleq c(x)$ such that 
\begin{equation}
{\rm Pr}(Z_n \in [x, \frac 12]) \approx c(x) 2^{-\frac{n}{\mu}},
\end{equation}
As as result, by noticing the fact that $Z_n$ is symmetric around the point $z=\frac 12$, we get
\begin{equation} \label{ex:rel}
{\rm Pr}(Z_n \in [0, x]) \leq \frac 12- c(x) 2^{-\frac{n}{\mu}}.
\end{equation}   
From the construction procedure of polar codes (and specially relation \eqref{P_e}), we know the following. Let  $z(1) \leq z(2) \cdots \leq z(N)$ be a re-ordering
of the $N$ possible outputs of $Z_n$ in an ascending order. Then, by using \eqref{P_e} the error probability of a polar code with rate $R$ is bounded from below by\footnote{Note that if $W$ is a BEC, then we have $Z(W) = 2 E(W)$. Also, for general BMS channels we have the relation \eqref{EZ_bound}.} 
\begin{equation}
P_{{\rm e}} \geq \max_{i \in \mathcal{I}_{N,R} } E(W_N^{(i)}) = 
\max_{i \in \{1, \cdots, N \cdot R\} } \frac{z(i)}{2}  = \frac{z(N \cdot R)}{2} .
\end{equation}  
So in order to achieve error probability $P_{\rm e}$, we should certainly have $\frac{z(N \cdot R)}{2} \leq P_{\rm e}$ or $z(N \cdot R) \leq  2  P_{\rm e}$. 
As a result, we obtain
\begin{align*}
R  \leq  {\rm Pr}(Z_n \in [0, 2 P_{\rm e}]) ,
\end{align*}
and by using \eqref{ex:rel} we deduce that
\begin{align*}
R  & \leq  {\rm Pr}(Z_n \in [0, 2 P_{\rm e}]) \\
 & \leq \frac 12 -c( 2 P_{\rm e}) 2^{-\frac{n}{\mu}}\\
 &=\frac 12 - c(2  P_{\rm e}) N^{-\frac{1}{\mu}},
\end{align*} 
and finally
\begin{equation}
N \geq \biggl (\frac{c( 2 P_{\rm e})}{\frac 12 - R} \biggr )^{\mu}.
\end{equation} 
Now, from the above calculations we know that $\frac{1}{\mu}\approx 0.2757$. As a result, for the channel $W=\text{BEC}(\frac 12)$ we have
\begin{equation}
 N \geq \Theta \biggl(\frac{1}{(I(W)-R)^{3.627}}\biggr).
\end{equation}
For other empirical scaling laws of this type, we refer to \cite{KMTU10}.
In the next section, we provide methods that analytically validate the above observations. 
We also extend some of these observations and results to other BMS channels.

\section{Analytical Approach: from Bounds for the BEC to Universal Bounds for BMS Channels} \label{scale_anal}
In this section we provide a rigorous basis for the observations and derivations of the previous 
section. Proving the full picture of Section~\ref{heur} is beyond what we achieve here, but we come up with close and useful bounds.  
As previously mentioned, we only use 
the heuristic arguments as well as the numerical observations of the previous section to give an intuitive picture for 
the ideas and proofs of this section. 
In other words, the proofs of this section do not rely on any of the assumptions of the previous section and can be read independently.

This section consists of three smaller parts. In the first part we provide lower and upper bounds on 
 the speed of polarization for the BEC family. Similar types  of bounds are obtained for general BMS channels in the second part. Finally, in the last part we use these bounds to derive trade-offs between the
rate and the block-length for polar codes.   

\subsection{Speed of Polarization for the BEC Family} \label{mu-BEC}
The (heuristic) arguments of the previous section led us to the conclusion that  (see \eqref{SL}) for the channel $W=$BEC($z$) the quantity 
$\text{Pr}(Z_n \in [a,b])$ vanishes in $n$ like $\Theta(2^{-\frac{n}{\mu}})$ (here, $z,a,b \in (0,1)$).  
In other words, the speed of polarization for the process $Z_n$ is equal to $\frac{1}{\mu}$. The value of $\mu$ was also computed to be $\mu \approx 3.627$ (or $\frac{1}{\mu} \approx 0.2757$).

Analytically speaking, proving the scaling assumption \eqref{SL} seems to be a difficult task. It is not even clear whether the value $\mu$ exists.  
The objective of this section is to provide (analytical) lower and upper bounds on the value $\mu$. More precisely, 
we look for numbers $\underline{\mu}$ and $\overline{\mu}$ such that 
$\Theta(2^{-\frac{n}{\underline{\mu}}}) \leq \text{Pr}(Z_n \in [a,b] ) \leq \Theta(2^{-\frac{n}{\overline{\mu}}})$, 
or in words, the speed of polarization of $Z_n$ is bounded between the values $\frac{1}{\overline{\mu}}$ and $\frac{1}{\underline{\mu}}$.

In this regard, we provide two approaches that exploit different techniques.  
The  first approach is based on a more careful look at  equation \eqref{pol_op}. From the arguments of the previous section, the value $\mu$ is related to a significant (and non-trivial) eigenvalue of  the polar operator $T$. Here, we observe that  
simple bounds can be derived on the this eigenvalue of $T$ by carefully analyzing the effect of $T$ on some suitably chosen test functions.  
This approach provides  us with a sequence of 
analytic bounds on $\mu$. We conjecture (and observe empirically) that these bounds indeed converge to the value of $\mu$ that is computed in Section~\ref{heur}. 
The second approach considers all the possible compositions of the two operations $z^2$ and $2z-z^2$ and analyzes the asymptotic behavior of  these compositions. 
This approach provides us with 
a good lower bound on $\mu$. 
\subsubsection{First Approach} \label{second_approach}
Let us begin by providing an intuitive picture behind the first approach. This picture is only intended for a better explanation of the contents that appear later. Hence, these explanations can be skipped without losing the main track.   
Consider the polar operator defined in \eqref{pol_op} and its eigenvalues which are
the solutions of 
\begin{equation} \label{eigen}
T(q)= \lambda q. 
\end{equation}
A check shows that both $q(z)=z$ and $q(z)=1$ are eigenfunctions associated to the eigenvalue  $\lambda=1$.
Perhaps more interestingly, let us look at the eigenvalues of $T$ inside the interval $(0,1)$.   
Intuitively, equation \eqref{q(x)} together with the  scaling law \eqref{SL} can be reformulated as follows. 
The operator $T$ has an eigenvalue $\lambda \triangleq 2^{-\frac{1}{\mu}}$ and a corresponding eigenfunction $q(z)$ such that  if we take any step function 
$f(z)=\mathbbm{1}_{ \{z \in [a,b]\}}$, then
\begin{equation}\label{op_lim}
\lambda^{-n} T^n(f) \stackrel{n \to \infty}{\longrightarrow} c(a,b) q(z).
\end{equation}
Therefore, for $f(z)=\mathbbm{1}_{ \{z \in [a,b]\}}$, the value of  $T^n(f)$ vanishes in $n$ like $\Theta(\lambda^n)$ (or equivalently $\Theta(2^{-\frac{n}{\mu}})$). 
In fact, if the scaling law is true, then we naturally expect that \eqref{op_lim} holds for a much larger class of functions rather than the class of step functions. 
Heuristic arguments of the previous section also suggest that \eqref{op_lim} holds at least  for all (piece-wise) continuous  
functions $f(z)$ with $f(0)=f(1)=0$. Therefore, for any function $f$ in this larger class of functions the value of $T^n(f)$ decays like $\Theta(\lambda^n)$ (or equivalently $\Theta(2^{-\frac{n}{\mu}})$). So to compute (or to provide bounds on) the value of $\mu$, one can look for suitable continuous functions $f$  such that the speed of decay of $T^n(f)$ is ``easy" to compute (or provide bounds on). As we will see,  functions in the form of $f(z) = z^\alpha (1 - z)^\beta$ are among such suitable functions.  

Motivated by this picture, let us formalize the first approach to find bounds on the speed of polarization of $Z_n$ (or the value $\mu$) 
through of the following two steps: 
(1) choose a suitable ``test function" $f(z)$ for which we can provide good
bounds on how fast $T^n(f)$ approaches $0$ (in $n$), and (2) turn these bounds into bounds on the 
speed for polarization of $Z_n$ (or $\mu$).  
With this in mind, for a generic test function $f(z): [0,1]\to[0,1]$, let us define 
the sequence of functions $\{f_n(z) \}_{n \in \mathbb{N} }$ as $f_n:[0,1]\to[0,1]$ and for $z \in[0,1]$,
\begin{equation}
f_n(z) \triangleq \mathbb{E}[f(Z_n) \, \big | \, Z_0 = z] =T^n(f).
\end{equation}
Here, note that for $z \in [0,1]$ the value of $f_n(z)$ is a deterministic value 
that is dependent on the choice of $f$ and the process $Z_n$ with the starting value $Z_0=z$.  
Let us now recall once more the recursive relation of the functions $f_n$:
\begin{align} \label{f_n}
& f_0(z)=f(z), \\
& f_{n}(z)= \frac{f_{n-1}(z^2)+ f_{n-1}(1-(1-z)^2)}{2}. \nonumber
\end{align}
In order to find lower and upper bounds  on the speed of decay of the sequence $f_n$, we define  sequences of numbers $\{a_m\}_{m\in \mathbb{N}}$ and 
$\{b_m\}_{m\in \mathbb{N}}$   as
\begin{align}
& a_m = \inf_{z\in (0,1)} \frac{f_{m+1}(z)}{f_m(z)}, \label{a_m}\\
& b_m = \sup_{z\in (0,1)} \frac{f_{m+1}(z)}{f_m(z)}. \label{b_m}
\end{align}
\begin{lemma}\label{a_bec}
Fix $m\in \mathbb{N}$. For all $n \geq m$ and $z \in (0,1)$, we have
\begin{equation} \label{BEC_bound}
(a_m)^{n-m} f_{m} (z) \leq f_{n}(z) \leq (b_m)^{n-m} f_{m} (z)  .
\end{equation}
Furthermore, the sequence $a_m$ is an increasing sequence and the sequence $b_m$ is a decreasing sequence.
\end{lemma}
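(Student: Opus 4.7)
The plan is to prove the two-sided bound \eqref{BEC_bound} by induction on $n$, and then deduce the monotonicity of $\{a_m\}$ and $\{b_m\}$ from essentially the same mechanism. Throughout, I will assume that the test function $f$ is strictly positive on $(0,1)$ so that the ratios defining $a_m$ and $b_m$ make sense (this covers the intended applications such as $f(z)=z^{\alpha}(1-z)^{\beta}$); the conclusion extends to the boundary by continuity.

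\emph{Step 1: Induction for \eqref{BEC_bound}.} Fix $m$ and induct on $n \geq m$. The base case $n=m$ is trivial since both multiplicative factors are $1$. Assume the bound holds at level $n$. Using the recursion in \eqref{f_n} we have
\begin{equation*}
f_{n+1}(z) \;=\; \tfrac{1}{2}\bigl(f_n(z^2) + f_n(1-(1-z)^2)\bigr),
\end{equation*}
and applying the inductive hypothesis pointwise at $z^2$ and at $1-(1-z)^2$ gives
\begin{equation*}
(a_m)^{n-m}\, f_{m+1}(z) \;\leq\; f_{n+1}(z) \;\leq\; (b_m)^{n-m}\, f_{m+1}(z),
\end{equation*}
where I used \eqref{f_n} again, with $m+1$ in place of $n+1$, to recombine the average. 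Finally, the very definitions \eqref{a_m}--\eqref{b_m} yield the pointwise sandwich $a_m f_m(z) \leq f_{m+1}(z) \leq b_m f_m(z)$, and substituting this into the previous display completes the inductive step.

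\emph{Step 2: Monotonicity of $a_m$ and $b_m$.} For any $z\in(0,1)$, invoke the lower bound just established at level $m+1$ with base index $m$ (i.e., applying the pointwise inequality $f_{m+1} \geq a_m f_m$ at the two arguments of the recursion):
\begin{equation*}
f_{m+2}(z) \;=\; \tfrac{1}{2}\bigl(f_{m+1}(z^2) + f_{m+1}(1-(1-z)^2)\bigr) \;\geq\; a_m \cdot \tfrac{1}{2}\bigl(f_m(z^2) + f_m(1-(1-z)^2)\bigr) \;=\; a_m\, f_{m+1}(z).
\end{equation*}
Dividing by $f_{m+1}(z)>0$ and taking the infimum over $z$ gives $a_{m+1} \geq a_m$. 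An identical argument with $b_m$ and suprema in place of $a_m$ and infima shows $b_{m+1} \leq b_m$.

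The argument is essentially routine once the definitions are unwound; the only mildly delicate point is ensuring that the pointwise ratios are well defined, which is why one takes the test function strictly positive on the open interval. No deeper obstacle appears — the lemma is really a general fact about the operator $T$ acting on positive functions, with the factor structure of the recursion in \eqref{f_n} allowing the bounds $a_m, b_m$ obtained at level $m$ to propagate cleanly to every subsequent level.
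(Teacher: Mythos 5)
Your proof is correct and takes essentially the same approach as the paper: an induction on $n$ (equivalently $n-m$) using the recursion \eqref{f_n}, the pointwise sandwich $a_m f_m \leq f_{m+1} \leq b_m f_m$ at the inductive step, and the same one-step argument $f_{m+2}(z) \geq a_m f_{m+1}(z)$ to get monotonicity of $a_m$ (and symmetrically $b_m$).
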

\begin{proof}
Here, we only prove the left-hand side of \eqref{BEC_bound} and note that the right-hand side follows similarly. 
The proof goes by induction on $n-m$. For $n-m=0$ the result is trivial. 
Assume that the relation \eqref{BEC_bound} holds for a  $n-m \triangleq k$, i.e., for $z\in (0,1)$ we have
\begin{equation} \label{BEC_m}
(a_m)^{k} f_{m} (z) \leq f_{m+k}(z).
\end{equation}
 We show that \eqref{BEC_bound} is then true for  $k+1$ and $z\in (0,1)$.  We have
 \begin{align*}
 f_{m+k+1}(z) & \stackrel{(a)}{=} \frac{f_{m+k}(z^2) + f_{m+k}(1-(1-z)^2)}{2} \\
 & \stackrel{(b)}{\geq} \frac{(a_m)^k f_{m}(z^2) + (a_m)^k f_{m}(1-(1-z)^2)}{2} \\
 & = (a_m)^k f_{m+1}(z)  \\
 & = (a_m)^{k} \frac{f_{m+1}(z)}{f_m(z)} f_m(z) \\
& \geq  (a_m)^{k}  \bigl [\inf_{z \in (0,1)}  \frac{f_{m+1}(z)}{f_m(z)} \bigr] f_m(z) \\
&= (a_m)^{k+1} f_m(z). 
 \end{align*}
 Here, (a) follows from \eqref{f_n} and (b) follows from \eqref{BEC_m},
 and hence the lemma is proved via induction.
 
 Finally, the sequence $a_m$ increases by $m$ because if we plug in
$k = 1$ to the above set of ineqqualities, 
and stop after the third line, then we obtain that $f_{m+2}(z) \geq a_m f_{m+1}(z)$ for $z\in (0,1)$. 
From this and the definition of $a_m$ in \eqref{a_m}, it is then easy to see that $a_{m+1} \geq a_m$. 
\end{proof}
 Let us now begin searching for suitable test functions, i.e.,  candidates for $f(z)$  that provide us with good lower and upper bounds $a_m$ and $b_m$. 
 First of all, it is easy to see that a test function $f(z)=\mathbbm{1}_{\{z \in [a,b]\}}$ results in trivial values of $a_m$ and $b_m$ (namely $b_m = \infty$ and $a_m$ is not well-defined), and hence such test functions are not suitable for this bounding technique.  
 Second, we expect that having a polynomial test function might be slightly preferable. This is due to the fact that if   
$f$ is a polynomial, then $T^n(f)$ is also  a polynomial and computing $a_m$ and $b_m$ is equivalent to finding roots of polynomials which is a manageable task. Of course the simplest polynomial that takes the value $0$ on $z=0,1$ is $f(z)=z(1-z)$. Hence, let us take our test function as 
$f(z)=z(1-z)$ and consider 
 the corresponding sequence of functions $\{f_n(z) \}_{n \in \mathbb{N} }$ with $f_0(z) \triangleq f(z) = z(1-z)$ and
\begin{equation}\label{fnz(1-z)}
f_n(z)=  \mathbb{E}[Z_n(1-Z_n)] =T^n(f_0).
\end{equation}
A moment of thought shows that with $f_0=z(1-z)$ the function $2^n f_n$ is a 
polynomial of degree $2^{n+1}$ with integer coefficients. Let us first focus on computing the value of $a_m$ for $m\in \mathbb{N}$. 
\begin{remark} \label{rem_a}
One can compute the value of $a_m$ by finding the extreme points
of the function $\frac{f_{m+1}}{f_m}$ (i.e., finding the roots of
the polynomial $g_m={f'}_{m+1}f_m-f_{m+1}{f'}_m$),
 and then minimizing the function $\frac{f_{m+1}}{f_m}$ on these extreme points as well as 
 boundary points\footnote{Note that in spite of the fact that the supremum and the infimum are defined for $z \in (0,1)$, we should check the value of $\frac{f_{m+1}}{f_m}$ around the boundary points $z = 0,1$.} $z=0,1$. Assuming $f_0=z(1-z)$, for small values e.g., $m=0,
1$, pen and paper suffice to find the extreme points. For higher values of $m$, we can
automatize the process: all these polynomials have rational
coefficients and therefore it is possible to determine the number
of real roots exactly and to determine their value to any desired
precision.
This task can be accomplished precisely by computing so-called Sturm chains
(see Sturm's Theorem \cite{sturm}). Computing Sturm chains is equivalent to
running Euclid's algorithm starting with the second and third
derivative of the original polynomial. Hence,
we can  \textit{analytically} find the value of $a_m$ to any desired precision.
Table~\ref{a_val} contains the numerical value of $a_m$ up to
precision $10^{-4}$ for $m\leq 10$. As the table shows, the values $a_m$ are 
increasing (see Lemma~\ref{a_bec}), and we conjecture that they converge to $2^{-0.2757}=0.8260$, the corresponding value for the 
channel BEC. 
\end{remark}
\begin{table}
\centering
\begin{tabular}{c c c c c c c c }
$m$ & $0$  & $2$  & $4$    & $6$  & $10$  \\
\hline
$a_m$ & $0.75$  & $0.7897$  & $0.8074$  & $0.8190$  & $0.8239$ \\
\hline
$\log a_m$ &$-0.4150$  & $-0.3406$  & $-0.3086$  & $-0.2880$ & $-0.2794$ \\
\end{tabular}
\caption{The values of $a_m$ corresponding to the test function $f_0=z(1-z)$ are numerically computed for several choices of $m$. }
\label{a_val}
\end{table}  

We now focus on  computing the value of $b_m$. 
On the negative side, for the specific test function $f(z)=z(1-z)$  
we obtain $b_m=1$ for $m\in \mathbb{N}$ and therefore the upper bounds 
implied by \eqref{b_m} are trivial. In fact, it is not hard\footnote{This follows from repeated applications of L'H\^{o}pital's rule.} to show that if we plug in any polynomial as the test function then we get $b_m=1$ for any $m$.
On the positive side, we can consider other test functions that result in non-trivial values for $b_m$. The problem with non-polynomial functions is that methods such as the Sturm-chain method no longer apply. Hence, finding the precise value of $b_m$ up to any desired precision can in general be a difficult task
and we might  lose the analytical tractability of $b_m$.
As an example,  choose 
\begin{equation}
f_0(z)=z^\alpha (1-z)^\beta,
\end{equation}
for some choice of $\alpha, \beta \in (0,1)$. Then, from \eqref{b_m} we have
\begin{equation} \label{b_0-alpha-beta}
b_0=\sup_{z\in (0,1)} \frac{f_1(z)}{f_0(z)}= \sup_{z \in [0,1]} \frac{z^\alpha (1+z)^\beta + (2-z)^\alpha (1-z)^\beta}{2}.
\end{equation}
We can compute $b_0$ to any desired precision either by 
finding the extreme points of the expression in \eqref{b_0-alpha-beta}, or by simple numerical methods.
\begin{remark}
Let us explain what we mean by a simple numerical method. The idea is to take a fine
grid for the unit interval and maximise the right-hand side of \eqref{b_0-alpha-beta} on this grid. Let $g(z) = \frac{z^\alpha (1+z)^\beta + (2-z)^\alpha (1-z)^\beta}{2}$. We now describe briefly a numerical procedure to find precisely the maximum value that $g$ attains over $[0,1]$: (i) Fix a number $\delta>0$. The function $g$ has a finite derivative on the interval $(\delta, 1-\delta)$. Thus, the maximum of $g$ over the interval $(\delta, 1-\delta)$ can be found to any desired precision by making the grid sufficiently fine. (ii) The maximum value of $g$ over the region $[0,\delta] \cup [1-\delta,1]$ can be upper-bounded by simple Taylor-type methods. This upper bound becomes tighter when $\delta$ is smaller. It is then straight-forward to conclude that by this procedure  we can compute, to any desired precision,  the maximum value that $g$ attains over the unit interval (provided that we choose a sufficiently small $\delta$ and grid size). \label{rem5}
\end{remark}
By letting $\alpha=\beta=\frac 23$, we obtain $b_0=0.8312$ which is already a good bound for $\lambda$ (recall from the calculations done in Section~\ref{heur} that $\lambda \approx 2^{-0.2757} = 0.8260$). 
This suggests that the test function $f_0(z)=(z(1-z))^{\frac 23}$ is a suitable candidate for obtaining good upper bounds $b_m$. 
For this specific test function, the value of $b_m$ for various values of $m$ has been numerically computed in Table~\ref{nb_n}. As we observe 
from Table~\ref{nb_n}, even for moderate values of $m$ the (numerically computed) bound $b_m$ is very close to
the ``true" value of $\lambda$.   
\begin{table}
 \centering
 \begin{tabular}{c c c c c c c c }
 $m$ & $0$ & $2$   & $4$   & $6$ & $8$  \\
 \hline
 $b_m$  & $0.8312$ & $0.8294$ & $0.8279$ & $0.8268$ &$0.8264$ \\
 \hline
 $\log b_m$  &$-0.2663$ & $-0.2699$ & $-0.2725$ & $-0.2744$ & $-0.2751$ \\
 \end{tabular}
 \caption{ The values of $b_m$ corresponding to $f_0=(z(1-z))^{\frac 23}$ are numerically computed for several choices of $m$. 
 \label{nb_n}}
 \end{table}


Finally, let us relate the bounds $a_m$ and $b_m$ to bounds on the value of $\text{Pr}(Z_n \in [a,b])$. 
This is the subject of the following lemma which is proven in Appendix~\ref{proofs}.  
\begin{lemma}\label{lim:equal}
Let $a,b \in (0,1)$ be such that $\sqrt{a}\leq 1-\sqrt{1-b}$. Then, there exists a constant $c_1 >0$  such that for any $z \in [0,1]$
\begin{multline} \label{32}
\frac{1}{n} \log \mathbb{E}[Z_n(1-Z_n)] - \frac{c_1 \log n }{n} \\ \leq \frac{1}{n}\log( 2^{-n} + {\rm Pr} (Z_n \in [a,b]) ) 
\end{multline}
where $Z_n$ is defined in \eqref{Z_n} with $Z_0 = z$.
Also, for any continuous function $f:[0,1] \to [0,1]$ such that $f(z) >0$ for $z \in (0,1)$, we have for $a,b \in (0,1)$ that
\begin{equation}\label{33}
\frac{1}{n}\log {\rm Pr} (Z_n \in [a,b]) \leq \frac{1}{n} \log \mathbb{E}[f(Z_n)] + \frac{c_3}{n},
\end{equation}
where $c_3$ is a positive constant that depends on $a,b,$ and $f$. Examples of such function $f$ can be $f(z) = z(1-z)$ or $f(z) = (z(1-z))^{\frac 23}$.
\end{lemma}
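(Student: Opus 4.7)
I would prove~\eqref{33} directly: continuity of $f$ on $[0,1]$ together with strict positivity on $[a,b]\subset(0,1)$ yields $m := \min_{z\in[a,b]} f(z) > 0$, so $\mathbb{E}[f(Z_n)] \ge m\,\Pr(Z_n \in [a,b])$; taking logarithms and dividing by $n$ gives \eqref{33} with $c_3 = -\log m$. Both $z(1-z)$ and $(z(1-z))^{2/3}$ satisfy the hypothesis, as claimed.

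Inequality~\eqref{32} is the substantive direction. The plan is to set $V(z) := z(1-z)$ and split
\[
\mathbb{E}[V(Z_n)] \le \tfrac14 \Pr(Z_n\in[a,b]) + \mathbb{E}[Z_n \mathbbm{1}_{Z_n<a}] + \mathbb{E}[(1-Z_n) \mathbbm{1}_{Z_n>b}],
\]
using $V(z) \le \tfrac14$ on $[a,b]$, $V(z) \le z$ on $[0,a]$, and $V(z) \le 1-z$ on $[b,1]$. The last two pieces are symmetric under $z \leftrightarrow 1-z$ (the dynamics \eqref{Z_n} is invariant under this swap), so it suffices to bound the second.

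The hypothesis $\sqrt{a} \le 1-\sqrt{1-b}$ is a \emph{no-jump} condition: a direct check on $z\mapsto z^2$ and $z\mapsto 2z-z^2$ shows that from $z<a$ both images lie strictly below $b$, and from $z>b$ both lie strictly above $a$. Hence $Z_n$ cannot cross $[a,b]$ in one step, and once it leaves $[a,b]$ to one side it remains there until it re-enters. I would introduce the last visit time $\tau := \sup\{k \le n : Z_k \in [a,b]\}$ (with $\tau = -1$ if empty) and decompose $\mathbb{E}[Z_n \mathbbm{1}_{Z_n<a}]$ along $\tau$. On $\{\tau = k \ge 0\} \cap \{Z_n<a\}$, tracing back one step forces $Z_k \in [a,\sqrt{a}) \subset [a,b]$ and the squaring branch $Z_{k+1} = Z_k^2 \in [a^2,a)$ (the other branch $2Z_k - Z_k^2$ cannot fall below $a$). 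Conditioning on $\mathcal{F}_{k+1}$ and applying the martingale identity $\mathbb{E}[Z_n \mid \mathcal{F}_{k+1}] = Z_{k+1}$ (after bounding the indicator of the ``stays below $a$'' constraint by $1$) yields
\[
\mathbb{E}[Z_n \mathbbm{1}_{\tau = k,\, Z_n < a}] \le \tfrac{a}{2}\Pr(Z_k \in [a,b]).
\]
Summing over $k$ controls the contribution of trajectories that have visited $[a,b]$ by $\tfrac{a}{2}\sum_{k<n}\Pr(Z_k \in [a,b])$.

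The residual case $\tau = -1$ forces $z \notin [a,b]$; I would treat it by Doob's optional-stopping theorem applied at the hitting time of $[a,b]$ for the bounded martingale $Z_n$, together with the almost-sure polarization $Z_n \to \{0,1\}$, which accounts for the $2^{-n}$ correction in~\eqref{32}. The main technical obstacle will be converting the sum $\sum_{k<n}\Pr(Z_k\in[a,b])$ into a bound of the form $n^{c_1}\Pr(Z_n\in[a,b])$: because $\Pr(Z_k\in[a,b])$ decays essentially geometrically in $k$, the sum is dominated up to a polynomial factor by its last terms, but making this precise robustly enough to hold for every starting value $z\in[0,1]$ is the delicate step, and the polynomial slack it introduces is exactly what the exponent $c_1$ in the statement is designed to absorb.
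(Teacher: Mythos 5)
Your proof of \eqref{33} is correct and matches the paper's: Markov's inequality with $\min_{z\in[a,b]}f(z)>0$.

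Your plan for \eqref{32} has a genuine gap at the final step, and the sentence meant to close it is backwards. After dropping the ``stays below $a$'' constraint you arrive at $\tfrac{a}{2}\sum_{k<n}{\rm Pr}(Z_k\in[a,b])$, and you assert this is $O\bigl(n^{c_1}{\rm Pr}(Z_n\in[a,b])\bigr)$ because ``${\rm Pr}(Z_k\in[a,b])$ decays geometrically in $k$, so the sum is dominated by its last terms.'' A geometrically \emph{decaying} sequence has partial sums dominated by its \emph{first} terms: $\sum_{k<n}{\rm Pr}(Z_k\in[a,b])$ stays bounded away from zero uniformly in $n$ (the $k=0$ term alone equals $1$ when $z\in[a,b]$), while $n^{c_1}\bigl({\rm Pr}(Z_n\in[a,b])+2^{-n}\bigr)$ is exponentially small; that is an exponential gap, not one a polynomial factor can absorb. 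The loss is created precisely by discarding the ``stays below $a$'' event in the martingale step -- that event carries the exponential suppression in $n-k$ the bound needs, and without it the identity $\mathbb{E}[Z_n\mid\mathcal{F}_{k+1}]=Z_{k+1}$ gives a true but useless inequality. (The $\tau=-1$ residual is also not evidently $O(2^{-n})$: optional stopping yields only $\mathbb{E}[Z_n\mathbbm{1}_{T>n}]=\mathbb{E}[Z_T\mathbbm{1}_{n<T<\infty}]$, whose rate is unclear.)

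The paper's route is where the polynomial factor actually comes from and is qualitatively different: dyadically decompose $[0,1]$ into bands $[2^{-i-1},2^{-i}]$, use pigeonhole to isolate one dominant band (a factor $2n$), then build an explicit one-to-one map between length-$n$ bit-paths landing in that band near $0$ and paths landing in $[\tfrac14,\tfrac34]$, incurring only polynomial-in-$n$ multiplicity together with a $2^{-j}$ rescaling (this is Lemma~\ref{lem_clain_i}, Steps~1--2, in the appendix), and finally transport from $a=1-b=\tfrac14$ to general $a,b$ using exactly the no-jump structure you also noticed. Your opening split of $\mathbb{E}[Z_n(1-Z_n)]$ and the no-jump observation are sound, but the martingale shortcut cannot replace this path-by-path counting; to salvage your plan you would have to retain the stays-below-$a$ constraint and prove a geometric-in-$(n-k)$ conditional decay, which essentially reconstructs the paper's combinatorics.
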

We can now easily conclude the following.
\begin{corollary}
Fix $m \in \mathbb{N}$. For $a,b \in (0,1)$ such that $\sqrt{a}\leq 1-\sqrt{1-b}$ and $n \geq m$ we have 
\begin{multline} \label{ampbm}
\log a_m + O(\frac{\log n}{n}) \leq \frac{1}{n} \log ( 2^{-n} +  {\rm Pr}(Z_n \in [a,b]) )  \\ \leq \log b_m + O(\frac{1}{n}),
\end{multline}
where $a_m$ is defined in \eqref{a_m} with the test function $f(z)=z(1-z)$ (see Table~\ref{a_val}), and $b_m$ 
is defined in \eqref{b_n} 
with the test function $f(z)=(z(1-z))^{\frac 23}$ (see Table~\ref{nb_n}).
\end{corollary}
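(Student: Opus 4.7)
The plan is to chain together Lemma~\ref{a_bec} with Lemma~\ref{lim:equal}: use Lemma~\ref{a_bec} to show that $\mathbb{E}[f(Z_n)]$ decays geometrically with rate $a_m$ (lower bound) or $b_m$ (upper bound) for a suitable test function $f$, and then translate between this expectation and $\text{Pr}(Z_n \in [a,b])$ through the two inequalities of Lemma~\ref{lim:equal}. Both sides of \eqref{ampbm} follow by the same template, but with different test functions tailored to each direction.

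For the lower bound I take $f(z) = z(1-z)$, the polynomial whose corresponding $a_m$ values appear in Table~\ref{a_val}. Lemma~\ref{a_bec} then yields, for every $z \in (0,1)$ and $n \geq m$,
\begin{equation*}
f_n(z) \;\geq\; a_m^{\,n-m}\, f_m(z),
\end{equation*}
so that $\tfrac{1}{n}\log f_n(z) \geq \log a_m + \tfrac{1}{n}\log\bigl(a_m^{-m} f_m(z)\bigr) = \log a_m + O(1/n)$ for a fixed starting point $z \in (0,1)$ (the base term is harmless since $f_m(z) > 0$). Inequality \eqref{32} of Lemma~\ref{lim:equal}, which under the hypothesis $\sqrt{a} \leq 1 - \sqrt{1-b}$ asserts $\tfrac{1}{n}\log \mathbb{E}[Z_n(1-Z_n)] - c_1\log n/n \leq \tfrac{1}{n}\log\bigl(2^{-n} + \text{Pr}(Z_n\in[a,b])\bigr)$, then produces the left half of \eqref{ampbm}, the $O(\log n/n)$ slack coming entirely from the $c_1\log n/n$ term.

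For the upper bound I take the non-polynomial test function $f(z) = (z(1-z))^{2/3}$, whose $b_m$ values are in Table~\ref{nb_n}. Lemma~\ref{a_bec} gives $f_n(z) \leq b_m^{n-m} f_m(z)$, hence $\tfrac{1}{n}\log\mathbb{E}[f(Z_n)] \leq \log b_m + O(1/n)$. Applying \eqref{33} of Lemma~\ref{lim:equal} (with $c_3$ depending only on $a,b,f$) yields $\tfrac{1}{n}\log \text{Pr}(Z_n \in [a,b]) \leq \log b_m + O(1/n)$. To pass from $\text{Pr}(Z_n\in[a,b])$ to $2^{-n}+\text{Pr}(Z_n\in[a,b])$, I use the elementary bound $\log(A+B) \leq 1 + \max(\log A,\log B)$ (base $2$) together with the fact that $b_m \geq 1/2$, which holds because the sequence $b_m$ is decreasing and its limit is conjectured to be $\lambda \approx 0.826 > 1/2$ (numerically $b_m \geq 0.8264$ throughout Table~\ref{nb_n}). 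Consequently $\log b_m \geq -1$, the $2^{-n}$ term is dominated, and only an extra $O(1/n)$ is paid.

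The main obstacle is bookkeeping the various constants and ensuring they are uniform in $n$. Specifically, one must verify that: (i) the ``base'' contribution $\tfrac{1}{n}\log f_m(z)$ is truly $O(1/n)$, which requires the initial Bhattacharyya value $z$ to lie in $(0,1)$ so that $f_m(z) > 0$; (ii) the constants $c_1$ in \eqref{32} and $c_3$ in \eqref{33} depend only on $a,b,f$ and not on $n$ or $m$; and (iii) the floor $b_m \geq 1/2$ indeed holds uniformly, so that the $2^{-n}$ additive term never dictates the rate on the right-hand side. The condition $\sqrt{a} \leq 1-\sqrt{1-b}$ enters only through the hypotheses of Lemma~\ref{lim:equal} and plays no role in the iteration itself.
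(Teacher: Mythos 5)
Your overall strategy is exactly what the paper intends—the corollary is presented in the paper as an immediate consequence of Lemma~\ref{a_bec} combined with Lemma~\ref{lim:equal}, and you chain them together in precisely the expected way: geometric decay of $f_n$ at rate $a_m$ (resp.\ $b_m$) from Lemma~\ref{a_bec}, then the translation inequalities \eqref{32} and \eqref{33} of Lemma~\ref{lim:equal} with test functions $f(z)=z(1-z)$ and $f(z)=(z(1-z))^{2/3}$ respectively.

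The one genuine soft spot is your handling of the $2^{-n}$ additive term on the upper-bound side. You need $\log b_m \geq -1$, but you justify $b_m \geq 1/2$ only by appealing to the \emph{conjecture} that $b_m$ converges to $\lambda\approx0.826$ together with the numerical values in Table~\ref{nb_n}; the monotone decrease of $b_m$ does not by itself give a lower bound. This can be closed rigorously in two ways. One is to note, as the remark following the corollary in the paper observes, that $\mathrm{Pr}(Z_n\in[a,b])$ is either $0$ or at least $2^{-n}$ (it is a dyadic rational with denominator $2^n$), and for $n\geq n_0(z,a,b)$ it is strictly positive; hence $2^{-n}+\mathrm{Pr}(Z_n\in[a,b]) \leq 2\,\mathrm{Pr}(Z_n\in[a,b])$ for $n\geq n_0$, so the additive $2^{-n}$ only costs $1/n$, and the finitely many $n\in[m,n_0)$ are absorbed into the $O(1/n)$ constant. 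The other is to prove $b_m\geq1/2$ directly (in fact $b_m\geq 2^{-1/2}$): integrate the recursion for $f_{m+1}$ over $[0,1]$ and substitute $u=z^2$ and $u=2z-z^2$ to obtain
\[
\int_0^1 f_{m+1}(z)\,dz \;=\; \tfrac12\int_0^1 f_m(u)\Bigl(\tfrac{1}{2\sqrt{u}}+\tfrac{1}{2\sqrt{1-u}}\Bigr)du \;\geq\; \tfrac{\sqrt{2}}{2}\int_0^1 f_m(u)\,du,
\]
which is incompatible with $f_{m+1}\leq b_m f_m$ pointwise if $b_m<2^{-1/2}$. With either patch your argument is complete; everything else—the uniformity of $c_1$, $c_3$, the fact that $f_m(z)>0$ for $z\in(0,1)$, and the role of the condition $\sqrt{a}\leq 1-\sqrt{1-b}$—is handled correctly.
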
 
\begin{remark}
Two comments are in order: (i) The additional term $2^{-n}$ in \eqref{32} and \eqref{ampbm} is to avoid trivial conflicts when ${\rm Pr}(Z_n \in [a,b]) = 0$. However, these cases are very rare as for every  $z \in (0,1)$ and $a,b \in(0,1)$ s.t.  $\sqrt{a}\leq 1-\sqrt{1-b}$, it is not hard to prove that there exists an integer $n_0 \in \naturals$ such that we have ${\rm Pr} (Z_n \in [a,b]) >0$ for $n \geq n_0$. Note that if ${\rm Pr} (Z_n \in [a,b]) >0$, then we certainly have ${\rm Pr} (Z_n \in [a,b]) >2^{-n}$.
(ii) We expect that the result of of Lemma~\ref{lim:equal} holds
for any choice of $a$ and $b$ such that $a<b$. That is, the condition $\sqrt{a}\leq 1-\sqrt{1-b}$ is
only a technical restriction.
\end{remark}
\subsubsection{Second Approach} \label{second-approach}
 We will now explain an other approach for finding the value $\mu$ for the BEC.  Let us point out the fact that the content of this section (Section~\ref{second-approach}) is not necessary for the forthcoming parts of the paper and hence can be skipped  without losing the main track. 

Throughout this section we will prove the following theorem. 
\begin{theorem}\label{int_er}
We have  
\begin{equation}\label{low:ln}
\liminf_{n \to \infty} \frac{1}{n}\log \bigl \{ \int_0 ^1  {\rm Pr} (Z_n \in [a,b] ) dz \bigr \}  \geq  \frac {1}{2 \ln 2} - 1 \approx -0.2787.
\end{equation}
\end{theorem}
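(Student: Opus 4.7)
The starting observation is that $\int_0^1 \Pr(Z_n \in [a,b] \mid Z_0 = z)\, dz$ equals $\mu_n([a,b])$, the mass under the law $\mu_n$ of $Z_n$ when $Z_0$ is uniform on $[0,1]$. Writing $\phi_0(z) = z^2$, $\phi_1(z) = 2z - z^2$, and $\psi_{\vec{b}} = \phi_{b_n} \circ \cdots \circ \phi_{b_1}$, this integral is also $\mathbb{E}_{\vec{B}}[\psi_{\vec{B}}^{-1}(b) - \psi_{\vec{B}}^{-1}(a)]$, i.e.\ the expected length of the preimage of $[a,b]$ under a random composition of the two building blocks. Equivalently, it is $\int_a^b p_n(w)\, dw$ with $p_n(w) = \mathbb{E}_{\vec{B}}[1/R(w,\vec{B})]$, where $R(w,\vec{B}) = \psi_{\vec{B}}'(\psi_{\vec{B}}^{-1}(w))$ is the derivative of the composition evaluated at the preimage of $w$.

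To obtain the lower bound with rate $(\sqrt{e}/2)^n = 2^{n(\frac{1}{2\ln 2} - 1)}$, I would first apply Cauchy--Schwarz in the form $\int_a^b p_n \geq (b-a)^{-1}(\int_a^b \sqrt{p_n})^2$ and then Jensen's inequality (concavity of the square root) to push the square root inside the expectation: $\sqrt{p_n(w)} \geq \mathbb{E}_{\vec{B}}[R(w,\vec{B})^{-1/2}] =: 2^{-n/2}H_n(w)$. Here $H_n$ satisfies the closed recursion
\[
H_n(w) = \tfrac{1}{2}\bigl(w^{-1/4} H_{n-1}(\sqrt{w}) + (1-w)^{-1/4} H_{n-1}(1 - \sqrt{1-w})\bigr), \qquad H_0 \equiv 1.
\]
Combining the two reductions, the desired bound becomes $\int_a^b H_n(w)\, dw \geq C\,e^{n/4 + o(n)}$, since the $2^{-n}$ prefactor together with the square turns $e^{n/4}$ into $(\sqrt{e}/2)^n$. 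For the growth rate $e^{1/4}$ per step, the changes of variable $u = \sqrt{w}$ (in the first term) and $u = 1 - \sqrt{1-w}$ (in the second) collapse the recursion and, by the symmetry $H_n(1-w) = H_n(w)$, yield the clean identity
\[
\int_0^1 H_n(w)\, dw = 2 \int_0^1 u^{1/2}\, H_{n-1}(u)\, du.
\]
Iterating this identity against a suitable test function that remains invariant under the corresponding dual operator, and using the fact that on any compact subinterval $[a,b] \subset (0,1)$ the function $H_n$ is comparable to its maximum, then reduces the whole theorem to the one-step functional inequality $\int u^{1/2} H_{n-1}(u)\, du \geq (e^{1/4}/2)\int H_{n-1}(u)\, du$.

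The main obstacle is this last inequality: the operator acting on $H_n$ is not self-adjoint and has no closed-form leading eigenfunction, while the densities $H_{n-1}/\int H_{n-1}$ develop $u^{-\alpha_n}$ singularities at the endpoints with $\alpha_n$ approaching $1/2$, which tend to push mass toward the boundary and depress $\mathbb{E}[\sqrt{U}]$. The specific constant $e^{1/4}$, equivalently $\tfrac{1}{2\ln 2} - 1$ after restoring the $2^{-n}$ factor, ultimately reflects the integral identity $\int_0^1 h_2(z)\, dz = 1/(2\ln 2)$; the cleanest route to proving the one-step inequality is therefore to reformulate it as a functional inequality tied to the binary entropy $h_2$, so that the $h_2$-weight cancels the boundary singularities of $H_{n-1}$ and leaves a geometric interior estimate at the required rate.
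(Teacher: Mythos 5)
Your opening observation -- that $\int_0^1 \Pr(Z_n\in[a,b]\mid Z_0=z)\,dz = \mathbb{E}\bigl[\phi_{\omega_n}^{-1}(b)-\phi_{\omega_n}^{-1}(a)\bigr]$, the expected length of the random preimage interval -- is exactly the paper's starting point. After that, however, the two proofs part ways completely. The paper applies Jensen's inequality to $\log$, reducing the claim to computing $\lim_n \tfrac1n\mathbb{E}\bigl[\log(\phi_{\omega_n}^{-1}(b)-\phi_{\omega_n}^{-1}(a))\bigr]$; the mean value theorem plus the chain rule turns this into a Birkhoff average $\tfrac1n\sum_j\log\bigl(t_{b_j}^{-1}\bigr)'(\psi_{\omega_{j-1}}(c))$, and the key observation (proved in the paper via weak convergence to $z^*_\omega$) is that Lebesgue measure is the unique invariant measure of the reversed chain, so the ergodic theorem collapses everything to the closed-form constant $\mathbb{E}\bigl[\log(t_{B_1}^{-1})'(U)\bigr]=\tfrac{1}{2\ln 2}-1$. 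Your route -- Cauchy--Schwarz in $w$ followed by Jensen with $\sqrt{\cdot}$, reducing to a transfer operator acting on the half-moment $H_n(w)=2^{n/2}\mathbb{E}[R(w,\vec B)^{-1/2}]$ -- is genuinely different and, if pushed through, would in principle give a tighter constant, because $s\mapsto \Lambda(s)/s$ is nondecreasing for a nonnegative-moment generating structure and the half-moment at $s=\tfrac12$ dominates the logarithmic ($s\to 0$) Lyapunov estimate.

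The problem is that the decisive step is not proved. You need the one-step spectral inequality $\int_0^1 u^{1/2}H_{n-1}(u)\,du\geq (e^{1/4}/2)\int_0^1 H_{n-1}(u)\,du$ uniformly in $n$, and you explicitly flag this as ``the main obstacle.'' It is not a formality: by your own analysis the normalized $H_n$ develop $u^{-\alpha_n}$ edge singularities with $\alpha_n\uparrow \tfrac12$, and under the limiting arcsine-type density $\propto (u(1-u))^{-1/2}$ one gets $\int u^{1/2}\,d\mu = 2/\pi\approx 0.637$, which is \emph{below} $e^{1/4}/2\approx 0.642$; so the inequality is at best borderline in the asymptotic regime and there is no margin to iterate. (Whether the true eigendensity is exactly arcsine or not, this shows the inequality cannot be dismissed as a soft estimate.) The suggestion to ``reformulate as a functional inequality tied to $h_2$'' is an idea, not an argument; $\int_0^1 h_2 = 1/(2\ln 2)$ appears in the paper's computation through $-\tfrac12\int_0^1\log_2 x\,dx$, not through any pairing with your operator. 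Separately, you also need to pass from $\int_0^1 H_n$ to $\int_a^b H_n$, and your justification (``$H_n$ comparable to its maximum on a compact subinterval'') does not address the real issue, which is that the mass of $\int_0^1 H_n$ could be carried by the boundary singularities, making the ratio $\int_a^b H_n / \int_0^1 H_n$ decay. Both gaps would need to be closed before this could replace the paper's short ergodic-theorem proof.
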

Let us now explain, at the intuitive level, the main consequence of Theorem~\ref{int_er}.  By using the scaling law assumption, and specifically \eqref{T_L_approx}  and  \eqref{SL}, we have that
$ \int_0^1 {\rm Pr} (Z_n \in [a,b] ) dz \approx \int_0^1  2^{-\frac{n}{\mu}}  q(z,a,b) dz+  o(2^{-\frac{n}{\mu}})$.
This relation together with \eqref{low:ln} implies that $\mu \geq \frac {1}{2 \ln 2} - 1 \approx -0.2787 $.    
For the sake of brevity, we do not address here further (analytic) conclusions of Theorem~\ref{int_er} and we refer the reader to \cite{HKU10}. 

To proceed with the proof of Theorem~\ref{int_er}, let us recall  from Section~\ref{polproc} the definition of $Z_n$ (for the BEC) in terms of the sequence $\{B_n\}_{n \in \mathbb{N}}$.
We start by $Z_0=z$ and
\begin{equation} \label{process_B}
Z_{n+1}=  \left\{
\begin{array}{lr}
Z_{n-1}^2 &  ; \text{if } B_n=1,\\
2Z_{n-1}-Z_{n-1}^2 &   ; \text{if } B_n=0.
\end{array} \right.
\end{equation}
Hence, by considering the two maps $t_0,t_1:[0,1]\longrightarrow [0,1]$ defined as
\begin{equation} \label{T}
t_0(z)=2z-z^2 , t_1(z)=z^2,
\end{equation}
the value of $Z_{n}$ is obtained by applying $t_{B_n}$ on the value of $Z_{n-1}$, i.e., 
\begin{equation}
Z_n = t_{B_n}(Z_{n-1}).
\end{equation}
 The same rule applies for obtaining the value of $Z_{n-1}$ form $Z_{n-2}$ and so on. Thinking this through recursively, the value of $Z_n$ is
 obtained from the starting point of the process,  $Z_0=z$, via the following (random) maps.\footnote{The necessary notation is reviewed in Section~\ref{polproc}. }
\begin{definition}\label{phi}
For each $n \in \mathbb{N}$ and a realization $(b_1, \cdots,b_n) \triangleq \omega_n \in \Omega_n$ define the map $\phi_{\omega_n}$ by
\begin{equation}
\phi_{\omega_n}=t_{b_n} \circ t_{b_{n-1}} \circ \cdots t_{b_1}.
\end{equation}
Also, let $\Phi_n$ be the set of all such $n$-step maps. 
\end{definition}
As a result, an equivalent description of the process $Z_n$ is as follows. At time $n$ the value of $Z_n$ is obtained by picking uniformly
 at random one of the functions  $\phi_{\omega_n} \in \Phi_n$ and assigning the value  $ \phi_{\omega_n}(z)$ to $Z_n$. Consequently we have,
\begin{align}\label{equ:equivalent}
{\rm Pr}(Z_n \in [a,b]) &= \sum_{\omega_n \in \Omega_n}\frac{1}{2^n} \mathbbm{1}_{\{\phi_{\omega_n}(z) \in [a,b]\}}.
\end{align}

By using \eqref{equ:equivalent}, it is apparent that in order to analyze the behavior of the quantity $ {\rm Pr}(Z_n \in [a,b]) $ as $n$ grows large, it is necessary to characterize the asymptotic behavior of the random maps $\phi_{\omega_n}$. Continuing
the theme of Definition~\ref{phi}, we can assign to each realization of the infinite sequence $\{B_k\}_{k \in \mathbb{N}}$, denoted by $\{b_n\}_{n \in \mathbb{N}}$, a
sequence of maps $\phi_{\omega_1}(z), \phi_{\omega_2}(z),\cdots$ where $\omega_i\triangleq (b_1, \cdots,b_i)$. We call the sequence
$\{\phi_{\omega_k}\}_{k \in \mathbb{N}}$ the corresponding sequence of maps for
the realization $\{b_k\}_{k \in \mathbb{N}}$. We also use the
realization $\{b_k\}_{k \in \mathbb{N}}$ and its corresponding
$\{\phi_{\omega_k}\}_{k \in \mathbb{N}}$ interchangeably.
Let us now focus on the asymptotic characteristics of the functions $\phi_{\omega_n}$. Firstly, since $\{\phi_{\omega_n } (z) \}_{\omega_n \in \Omega_n}$  has the same law as $Z_n$ starting at
$Z_0=z$, we conclude that for $z \in [0,1]$, with
probability one, the quantity $\lim_{k \to \infty} \phi_{\omega_k} (z)$ takes on a value in the set $\{0,1\}$ . In Figure \ref{sample_threshold} the  functions $\phi_{\omega_n}$ are 
plotted for a random realization. As it is apparent from the figure, the 
functions $\phi_{\omega_n}$ seem to converge point-wise to a jump function (i.e., a sharp rise from $0$ to $1$). An intuitive justification of this fact is as follows. 
Consider a random function $\phi_{\omega_n}$. Due to polarization, as $n$ grows large, 
almost all the values that this function takes are very close to $0$ or $1$. This function is also increasing and continuous (more precisely, it is a polynomial).
A little thought reveals that the only choice to imagine for $\phi_{\omega_n}$ is a very sharp rise from being almost $0$ to almost $1$.
The formal and complete statement is given as follows.
\begin{figure}[htb] 
\begin{center} \input{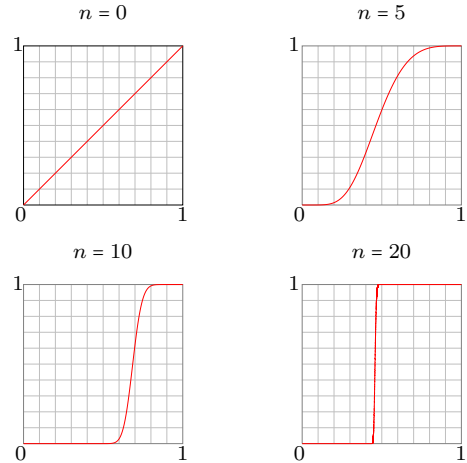} 
\end{center}
\caption{The functions $\phi_{\omega_n}$ associated to a random 
realization are plotted. As we see as $n$ grows large, the functions $\phi_{\omega_n}$ converge point-wise to a step function.}
 \label{sample_threshold}
\end{figure}

\begin{lemma}[Almost every realization has a threshold point]\label{threshold}
For almost every realization of $\omega \triangleq \{b_k\}_{k \in \mathbb{N}} \in \Omega$, there exists a point
$z_{\omega}^* \in [0,1]$, such that
\begin{equation}\nonumber
\lim_{n \to \infty} \phi_{\omega_n}(z) \rightarrow  \left\{
\begin{array}{lr}
0  & z \in [0,z_{\omega}^*) \\
1 &  z \in (z_{\omega}^*,1]
\end{array} \right.
\end{equation}
Furthermore, $z_{\omega}^*$ has uniform distribution on $[0,1]$. We call the point $z_{\omega}^*$ the threshold point of the realization $\{b_k\}_{k \in \mathbb{N}}$ or the threshold point of its corresponding sequence of maps $\{\phi_{\omega_k}\}_{k \in \mathbb{N}}$.
\end{lemma}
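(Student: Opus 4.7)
\medskip

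\textbf{Proof plan for Lemma~\ref{threshold}.}

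The plan is to exploit two structural properties of each $\phi_{\omega_n}$, namely that it is continuous and non-decreasing on $[0,1]$ with $\phi_{\omega_n}(0)=0$ and $\phi_{\omega_n}(1)=1$. These follow from the fact that $t_0(z)=2z-z^2$ and $t_1(z)=z^2$ each have these properties, and the class of such maps is closed under composition. Consequently, monotonicity will transfer to any pointwise limit and will let us pin down a single threshold.

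First, for every fixed $z\in[0,1]$, the sequence $\{\phi_{\omega_n}(z)\}_n$ is, by Definition~\ref{phi} and \eqref{equ:equivalent}, equal in distribution to the Bhattacharyya process $Z_n$ started from $Z_0=z$. By the polarization statement recalled in Subsection~\ref{polproc} (together with $H_n=Z_n$ for the BEC), $\phi_{\omega_n}(z)$ converges almost surely to a limit $L_\omega(z)\in\{0,1\}$ with
\[
{\rm Pr}(L_\omega(z)=1)=z,\qquad {\rm Pr}(L_\omega(z)=0)=1-z.
\]
Fix a countable dense set $D\subset[0,1]$ (e.g.\ the dyadic rationals together with $0$ and $1$). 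Intersecting countably many probability-one events, there is a single full-measure set $\Omega^*\subset\Omega$ on which $L_\omega(z)$ is well-defined for every $z\in D$ simultaneously.

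Next I would use monotonicity to build the threshold. For $\omega\in\Omega^*$, the restriction of $L_\omega$ to $D$ is non-decreasing (because each $\phi_{\omega_n}$ is) and takes values in $\{0,1\}$, so we may define
\[
z^*_\omega \;\triangleq\; \sup\bigl\{z\in D \,:\, L_\omega(z)=0\bigr\},
\]
with $\sup\emptyset=0$. I would then extend the convergence to all $z\in[0,1]$ by sandwiching: if $z<z^*_\omega$, pick $z'\in D$ with $z<z'<z^*_\omega$ and use $\phi_{\omega_n}(z)\le\phi_{\omega_n}(z')\to0$; symmetrically if $z>z^*_\omega$, use a $z'\in D$ with $z^*_\omega<z'<z$ to conclude $\phi_{\omega_n}(z)\to1$. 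This establishes the two-sided limit statement and gives the first part of the lemma.

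For the uniform distribution of $z^*_\omega$, the key observation is the chain of inclusions
\[
\{z^*_\omega < t\}\;\subseteq\;\{L_\omega(t)=1\}\;\subseteq\;\{z^*_\omega\le t\},
\]
valid for every $t\in[0,1]$: if $z^*_\omega<t$, there exists $z'\in D\cap(z^*_\omega,t)$ with $L_\omega(z')=1$, whence by monotonicity $L_\omega(t)=1$; conversely, if $L_\omega(t)=1$, then $t$ is not a candidate for the supremum defining $z^*_\omega$ with value $0$, so $z^*_\omega\le t$. Taking probabilities and invoking ${\rm Pr}(L_\omega(t)=1)=t$ gives ${\rm Pr}(z^*_\omega<t)\le t\le{\rm Pr}(z^*_\omega\le t)$; applying this at $t+\varepsilon$ and letting $\varepsilon\downarrow 0$ squeezes ${\rm Pr}(z^*_\omega\le t)=t$, which is exactly the uniform distribution.

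The main obstacle I anticipate is the extension step from $D$ to all of $[0,1]$, in particular ruling out pathological behavior of $\phi_{\omega_n}$ near the threshold $z^*_\omega$ itself (where convergence might be delicate and where, a priori, $L_\omega(z^*_\omega)$ need not even be well defined). Monotonicity of the $\phi_{\omega_n}$ is what rescues the argument: it lets us bound the value at an arbitrary $z$ above/below by values at nearby dyadics, converting a countable-index almost-sure statement into a uniform sandwich away from the single point $z^*_\omega$. The probability-measure identification in the last paragraph then follows purely from the monotone structure, with no extra regularity input.
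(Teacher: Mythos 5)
Your proof is correct and follows essentially the same route as the paper: exploit the fact that every $\phi_{\omega_n}$ is increasing together with pointwise polarization of $Z_n$ to produce a single threshold (the paper takes $z_\omega^*=\inf\{z:\phi_{\omega_n}(z)\to 1\}$, you take a sup over a dense set where the limit is $0$, which is the same thing), and then read off the law of the threshold from $\Pr(Z_\infty=1\mid Z_0=z)=z$. The only difference is that you are somewhat more careful on two technical points --- using a countable dense set so that the almost-sure polarization holds simultaneously at all relevant $z$, and distinguishing the events $\{z_\omega^*<t\}$ and $\{z_\omega^*\le t\}$ before squeezing --- which tightens a couple of small imprecisions in the paper's terser argument without changing the method.
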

Looking more closely at \eqref{equ:equivalent}, by the above lemma we conclude that as $n$ grows large, the maps $\phi_{\omega_n}$ that activate the identity function $\mathbbm{1}_{\{\cdot\}}$ must have their threshold point sufficiently close to $z$. Let us now give an intuitive discussion about the idea behind the proof of Theorem~\ref{int_er}. By using \eqref{equ:equivalent}  we can write
\begin{align} \nonumber
 {\rm Pr}(Z_n \in [a,b])  & = \sum_{\omega_n \in \Omega_n}\frac{1}{2^n} \mathbbm{1}_{\{ \phi_{\omega_n}(z) \in  [a,b]\}}\\ \label{inv}
& = \sum_{\omega_n \in \Omega_n}\frac{1}{2^n} \mathbbm{1}_{\{z \in [\phi_{\omega_n}^{-1} (a),\phi_{\omega_n}^{-1}(b)]\}}. 
\end{align}
Hence, by Lemma~\ref{threshold}, for a large choice of $n$ the intervals $ [\phi_{\omega_n}^{-1} (a),\phi_{\omega_n}^{-1}(b)]$ have a very 
short length and are distributed almost uniformly along $[0,1]$. Now, if we assume that the length of the intervals $ [\phi_{\omega_n}^{-1} (a),\phi_{\omega_n}^{-1}(b)]$ 
is very close to their average, then we can replace the average in \eqref{inv} by the 
average length of $[\phi_{\omega_n}^{-1}(a) , \phi_{\omega_n}^{-1} (b)]$. 
That is, 
$${\rm Pr}(Z_n \in [a,b]) \approx \mathbb{E} [\phi_{\omega_n}^{-1} (b) -\phi_{\omega_n}^{-1}(a)]. $$
So intuitively, all that remains is to compute the average length of the random intervals $[\phi_{\omega_n}^{-1}(a) , \phi_{\omega_n}^{-1} (b)]$.

In fact we are not able to make all these heuristics  precise for the point-wise values $\frac{1}{n} \log {\rm Pr}(Z_n \in [a,b]) $. 
Nonetheless, the picture is naturally precise for the 
 average of  ${\rm Pr}(Z_n \in [a,b])$ over $z \in [0,1]$, i.e.,  
\begin{equation}\label{b_n}
\frac{1}{n} \log \bigl \{ \int_{0}^{1} {\rm Pr} (Z_n \in [a,b]) dz \bigr \}.
\end{equation}
To see this, we proceed as follows. By \eqref{inv} we have
\begin{align*}
 \int_{0} ^{1} {\rm Pr} (Z_n \in [a,b] ) dz &=  \int_{0} ^{1} \bigl \{ \sum_{\omega_n \in \Omega_n} \frac{1}{2^n} \mathbbm{1}_{\{z \in  \phi_{\omega_n} ^{-1} [a,b]\}} \bigr \} dz \\
&=  \sum_{\omega_n \in \Omega_n} \frac{1}{2^n} \bigl \{  \int_{0} ^{1}  \mathbbm{1}_{\{z \in  \phi_{\omega_n} ^{-1} [a,b]\}} dz \bigr \}\\
&= \mathbb{E} [\phi_{\omega_n}^{-1} (b) -\phi_{\omega_n}^{-1}(a)],
\end{align*}
and by applying $\frac{1}{n}\log(\cdot)$ to both sides we have
\begin{eqnarray} \nonumber
\frac{1}{n}\log \bigl \{ \int_0 ^1  {\rm Pr} (Z_n \in [a,b] ) dz \bigr \} \!\!\!\!\! &\!\!\!\!\!  = \!\!\!\!\! 
& \frac{1}{n} \log \mathbb{E} [\phi_{\omega_n}^{-1} (b) -\phi_{\omega_n}^{-1}(a))] \nonumber \\ 
 & \geq &  \frac{1}{n} \mathbb{E}[ \log (\phi_{\omega_n}^{-1} (b) -\phi_{\omega_n}^{-1}(a))],\nonumber \\
 &\label{jensen}
\end{eqnarray}
where in the last step we have used  Jensen's inequality. The value of $\lim_{n \to \infty}  \frac{1}{n} \mathbb{E}[
\log( \phi_{\omega_n}^{-1} (b) -\phi_{\omega_n}^{-1}(a) )] $ can be computed precisely.
\begin{lemma} \label{log_measure}
 We have
\begin{equation*}
 \lim_{n \rightarrow \infty}\frac{1}{n} \mathbb{E}  [\log (\phi_{\omega_n}^{-1} (b) -\phi_{\omega_n}^{-1}(a))]  = \frac {1}{2 \ln 2} - 1 \approx -0.2787.
\end{equation*}
\end{lemma}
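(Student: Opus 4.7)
My plan is to derive an additive recursion for
$G_n \defas \mathbb{E}[\log(\phi_{\omega_n}^{-1}(b) - \phi_{\omega_n}^{-1}(a))]$
and then extract its asymptotic rate via a Ces\`aro argument. Set $\alpha_n \defas \phi_{\omega_n}^{-1}(a)$ and $\beta_n \defas \phi_{\omega_n}^{-1}(b)$. Since $\phi_{\omega_n} = t_{b_n}\circ\cdots\circ t_{b_1}$, taking inverses yields $\phi_{\omega_n}^{-1} = t_{b_1}^{-1}\circ \phi_{\omega'_{n-1}}^{-1}$, where $\omega'_{n-1} = (b_2,\ldots,b_n)$ has the same law as $\omega_{n-1}$ and is independent of $b_1$. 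Conditioning on $b_1$ and using $t_0^{-1}(y)=1-\sqrt{1-y}$, $t_1^{-1}(y)=\sqrt{y}$ together with the identity $\sqrt{v}-\sqrt{u} = (v-u)/(\sqrt{u}+\sqrt{v})$, one gets on $\{b_1=1\}$ that
\begin{equation*}
\beta_n - \alpha_n \;=\; \frac{\beta'_{n-1} - \alpha'_{n-1}}{\sqrt{\alpha'_{n-1}}+\sqrt{\beta'_{n-1}}},
\end{equation*}
and the analogous identity on $\{b_1=0\}$ with $\sqrt{1-\alpha'_{n-1}}+\sqrt{1-\beta'_{n-1}}$ in the denominator, where $(\alpha'_{n-1},\beta'_{n-1})$ has the same joint law as $(\alpha_{n-1},\beta_{n-1})$.

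Taking logarithms and averaging over $b_1$ (which is independent of $\omega'_{n-1}$) produces the additive recursion
\begin{equation*}
G_n \;=\; G_{n-1} \;-\; \mathbb{E}\bigl[H(\alpha_{n-1},\beta_{n-1})\bigr],
\end{equation*}
with $H(\alpha,\beta) \defas \tfrac{1}{2}\log(\sqrt{\alpha}+\sqrt{\beta}) + \tfrac{1}{2}\log(\sqrt{1-\alpha}+\sqrt{1-\beta})$. By Lemma~\ref{threshold}, $\alpha_n,\beta_n\to z^*_\omega$ almost surely with $z^*_\omega$ uniform on $[0,1]$, so $H(\alpha_n,\beta_n)\to \tfrac{1}{2}\log(2\sqrt{z^*_\omega}) + \tfrac{1}{2}\log(2\sqrt{1-z^*_\omega})$ a.s. Granting that the limit can be exchanged with the expectation (see the obstacle below), and recalling that all logarithms are base $2$ so $\int_0^1 \log z\,dz = -1/\ln 2$, direct integration yields
\begin{equation*}
\lim_{n\to\infty} \mathbb{E}\bigl[H(\alpha_{n-1},\beta_{n-1})\bigr] \;=\; 1 - \frac{1}{2\ln 2}.
\end{equation*}
Hence $G_n - G_{n-1} \to \tfrac{1}{2\ln 2}-1$, and Ces\`aro--Stolz delivers $G_n/n \to \tfrac{1}{2\ln 2}-1$, as claimed.

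The main technical obstacle is the interchange of limit and expectation in the above step. Although $H\le 1$ on $[0,1]^2$, it diverges to $-\infty$ as $(\alpha,\beta)\to(0,0)$ or $(1,1)$, so dominated convergence does not apply out of the box. I would establish the required uniform integrability by proving $\sup_n \mathbb{E}[|\log\alpha_n|^{1+\eta}]<\infty$ and $\sup_n \mathbb{E}[|\log(1-\beta_n)|^{1+\eta}]<\infty$ for some $\eta>0$. A natural route is to recognize that $(\alpha_n)$ is distributed as the Markov chain $V_k = t_{B_k}^{-1}(V_{k-1})$ on $[0,1]$ started at $V_0=a$ with $B_k$ i.i.d.\ Bernoulli$(1/2)$; a direct check of the balance equation $f(w) = w\,f(w^2) + (1-w)\,f(2w-w^2)$ confirms that the uniform density is invariant for this chain. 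Coupling $(V_k)$ to the stationary chain (which trivially has the required log-moments) should transfer the moment bound to all $n$, closing the argument.
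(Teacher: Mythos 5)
Your proof takes a genuinely different route from the paper's. The paper introduces the reverse Markov chain $\psi_{\omega_n}=t_{b_n}^{-1}\circ\cdots\circ t_{b_1}^{-1}$, shows that Lebesgue measure is its unique (hence ergodic) invariant measure, invokes the mean-value theorem to write $\psi_{\omega_n}(b)-\psi_{\omega_n}(a)=\psi'_{\omega_n}(c)(b-a)$, expands the derivative via the chain rule into a sum of increments along the chain, and applies the pointwise ergodic theorem. You instead peel off the innermost map $t_{b_1}^{-1}$ to obtain the exact additive recursion $G_n=G_{n-1}-\mathbb{E}[H(\alpha_{n-1},\beta_{n-1})]$ with $H(\alpha,\beta)=\tfrac12\log(\sqrt{\alpha}+\sqrt{\beta})+\tfrac12\log(\sqrt{1-\alpha}+\sqrt{1-\beta})$, identify the a.s.\ limit of the increment from Lemma~\ref{threshold}, and finish with Ces\`aro--Stolz. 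This avoids the ergodic theorem entirely and is arguably cleaner: it sidesteps the fact that in the paper's proof the mean-value point $c$ depends on $n$, so the chain-rule expression is not literally a Birkhoff average (a subtlety the paper glosses over). Your identity $\sqrt{v}-\sqrt{u}=(v-u)/(\sqrt{u}+\sqrt{v})$, the decomposition $\phi_{\omega_n}^{-1}=t_{b_1}^{-1}\circ\phi_{\omega'_{n-1}}^{-1}$, and the computation of $\int_0^1\log(2\sqrt{z})\,dz=1-\tfrac{1}{2\ln 2}$ all check out; the sign and constant agree with the claim.

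The one gap you flag---uniform integrability of $H(\alpha_{n-1},\beta_{n-1})$---is real, and your sketch does close. With the same driving bits $\{B_k\}$, the chains started at two initial points stay ordered because $t_0^{-1}$ and $t_1^{-1}$ are strictly increasing; hence $u\mapsto\mathbb{P}(V_n\le x\mid V_0=u)$ is nonincreasing, and since the uniform law is stationary,
\begin{equation*}
a\cdot\mathbb{P}(\alpha_n\le x)\;\le\;\int_0^a\mathbb{P}(V_n\le x\mid V_0=u)\,du\;\le\;\int_0^1\mathbb{P}(V_n\le x\mid V_0=u)\,du\;=\;x,
\end{equation*}
so $\mathbb{P}(\alpha_n\le x)\le x/a$ uniformly in $n$, giving all polynomial moments of $|\log\alpha_n|$ uniformly in $n$ (and symmetrically for $|\log(1-\beta_n)|$). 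Combined with $H\le 1$ and $H(\alpha,\beta)\ge\tfrac14\log\alpha+\tfrac14\log(1-\beta)$, this yields the needed uniform integrability. (Alternatively, one can get the moment bound directly from the drift of $A_n=-\log V_n$: with probability $\tfrac12$, $A_{n+1}=A_n/2$, and with probability $\tfrac12$, $A_{n+1}\le A_n+1$, which yields $\sup_n\mathbb{E}[A_n^p]<\infty$ by induction on $p$.) So once that step is written out, the argument is complete.
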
\
As a result, we have  
\begin{equation*}
\liminf_{n \to \infty} \frac{1}{n}\log \bigl \{ \int_0 ^1  {\rm Pr} (Z_n \in [a,b] ) dz \bigr \}  \geq  \frac {1}{2 \ln 2} - 1.
\end{equation*}

The result of Theorem~\ref{int_er}  provides a lower bound on $\mu$ that is very close to the value we obtained in Section~\ref{heur} but 
is not exactly equal.  This is because we have  used Jensen's inequality  in \eqref{jensen}.

\subsection{Speed of Polarization for General BMS Channels}
In the previous part, we derived bounds on the speed of polarization for the process $Z_n$ associated to the BEC.
To this end, we used the recursion \eqref{Z_n}  for $Z_n$ and the fact that the speed of 
polarization can be ``measured" by computing the rate of decay of a sequence $\{\mathbb{E}[f(Z_n)]\}_{n \in \naturals}$, where $f$ is a suitable ``test" function such as $f(z) = z(1-z)$ or $f(z) = (z(1-z))^{\frac 23}$.

In this part, we use a similar approach to bound the speed of polarization for \textit{any} BMS channel.   
For a BMS channel $W$, there is no simple and closed-form (scalar) recursion for the  process $Z_n$ as for the BEC. 
However, by using  \eqref{Zgen-} and \eqref{Zgen+},  we can provide bounds on how $Z_n$ evolves:
 \begin{equation} \label{Z_nextreme}
Z_{n+1}  \left\{
\begin{array}{lr}
={Z_n}^2 &  ; \text{if $B_n=1$},\\
\in [Z_n \sqrt{2-{Z_n}^2} , 2Z_n-{Z_n}^2] &  ; \text{if $B_n=0$}.
\end{array} \right.
\end{equation}
As a warm-up, we notice that similar techniques as used in Section~\ref{second_approach} can be used to provide  general lower and upper bounds.
For instance, to find upper bounds we can proceed as follows.  For any continuous function 
$g:[0,1]\to\mathbb{R}$ such that $g(0)=g(1)=0$ and $g(z)>0$ for $z \in (0,1)$, let 
\begin{align} \label{L_g}
L_g = \sup_{z\in(0,1),  y \in [z\sqrt{2-z^2}, z(2-z)] \}} \frac{g(z^2) +g(y)}{2g(z)}.
\end{align}
Similar to the discussion in Section~\ref{second_approach} (in particular the proof of Lemma~\ref{a_bec}), it is easy to see from  \eqref{Z_nextreme} and \eqref{L_g} 
that for $n \in \naturals$
\begin{align*}
\mathbb{E}[g(Z_{n+1}) | Z_n] \leq g(Z_n) L_g,
\end{align*} 
and consequently,
\begin{align*}
\mathbb{E}[g(Z_n)] \leq g(Z_0) L_g^n.
\end{align*}
As a result, for the process $Z_n=Z(W_n)$ we have
\begin{equation} \label{g-L_g}
 \mathbb{E}[g(Z_n)] \leq c L_g^n,
\end{equation}
where $c=\sup_{z \in [0,1]} g(z)$ is a constant. Also, by using the Markov inequality we have for
 $a,b \in (0,1)$,
\begin{equation} \label{log-L_g}
 \frac{1}{n} \log {\rm Pr} (Z_n \in [a, b]) \leq \log L_g + O(\frac{1}{n}) . 
\end{equation}
It thus remains to find good candidates for the function $g$ (with the properties mentioned above) such that the value $L_g$ defined in \eqref{L_g} is minimized. 
For instance, we can let the function $g$ take the following closed form: $g(z) =  (a z^2 + bz +c ) (z(1-z))^d$ where $a,b,c,d \in (0,1)$ and optimize the value of $L_g$ over the choice of $a,b,c,d$.  
For example, by choosing  $a=b=0$, $c=1$, and $d = \frac 23$  we have $g(z)=(z(1-z))^{\frac 23}$ and we obtain $\log L_g=-0.169$. That is
\begin{equation} \label{univ-up0}
  \mathbb{E}[(Z_n (1- Z_n))^{\frac 23}] \leq 2^{-0.169 n} ,
\end{equation}
Also, by choosing $ a= \frac 25$, $b=\frac 14$,  $c=\frac{19}{20}$, and $d = \frac 34$ we have $g(z) =  \frac{1}{20}(  8 z^2 + 5  z + 19  ) (z(1-z))^{\frac 34}$. The value of $L_g$ can be computed to a desirable precision using simple numerical methods (see Remark~\ref{rem5}).    We thus obtain $\log L_g = -0.202$ and as a result   
\begin{align} \nonumber
\mathbb{E} [g(Z_n)]  &=  \mathbb{E} [\frac{1}{20}(  8 Z_n^2 + 5  Z_n + 19  ) (Z_n (1- Z_n ))^{\frac 34}] \\ \nonumber
& \!\! \stackrel{\eqref{g-L_g}}{\leq} \!\! \bigl (\max_{z \in [0,1]} \frac{1}{20}(  8 z^2 + 5  z + 19  ) (z(1-z))^{\frac 34} \bigr ) \times 2^{-0.202 n} \\
& \leq \frac 12 2^{-0.202 n}. \label{g-L_g-1}
\end{align}
Also, by \eqref{log-L_g} we obtain
\begin{equation}
\frac{1}{n} \log {\rm Pr} (Z_n \in [a, b]) \leq -0.202 + O(\frac{1}{n}). 
\end{equation}  
As a final remark, we note that for $z \in[0,1]$ we have $g(z) \geq \frac 12 (z(1-z))^{\frac 34} $. Therefore, we can conclude that for any BMS channel $W$ we have
\begin{equation} \label{univ-up}
\mathbb{E} [(Z_n (1- Z_n ))^{\frac 34}] \leq 2  \mathbb{E}[g(Z_n)] \stackrel{\eqref{g-L_g-1}}{ \leq}  2^{-0.202 n}.
\end{equation}


The relations of type \eqref{univ-up0} and  \eqref{univ-up} are upper bounds on the speed of polarization that hold \emph{universally} over all  
BMS channels. 
Let us now compute universal lower bounds. In the rest of this section, it is more convenient for us to consider another stochastic process related to $W_n$, which is the process\footnote{For the BEC the processes $H_n$ and $Z_n$ are identical.} $H_n=H(W_n)$. The main reason to consider $H_n$ rather than $Z_n$ is that the process $H_n$ is a martingale and this martingale property will help us to use the functions $\{f_n\}_{n\in \mathbb{N}}$  defined in \eqref{f_n} (with the starting function $f(z)=z(1-z)$) to provide universal lower
 bounds on the quantity $\mathbb{E}[H_n(1-H_n)]$. We begin by introducing one further technical condition given as follows.
\begin{definition}
 We call an integer $m \in \mathbb{N}$ \emph{suitable} if the function $f_m(z)$, defined in \eqref{f_n} (with the starting function $f(z)=z(1-z)$), is concave on $[0,1]$.
\end{definition}
\begin{remark} \label{rem_b}
For small values of $m$, i.e., $m\leq2$, it is easy to verify by hand that the function $f_m$ is concave.
As discussed previously, for larger values of $m$ we can use Sturm's theorem \cite{sturm}
and a computer algebra system to verify this. Note that
the polynomials $2^mf_m$ have integer coefficients. Hence, all
the required computations can be done exactly. We have checked up to $m=10$
that $f_m$ is concave and we conjecture
 that in fact this is true
for all $m\in \mathbb{N}$.
\end{remark} 
We now show that  for any BMS channel $W$, the value of $a_m$, defined in \eqref{a_m},  is a 
lower bound on the speed of polarization of $H_n$ provided that $m$ is a suitable integer. 
\begin{lemma} \label{BMS}
Let $m \in \mathbb{N}$ be a suitable integer and $W$ a BMS channel with $I(W) \in (0,1)$. We have for $n \geq m$ 
\begin{equation}
 \mathbb{E}[H_{n}(1-H_{n})] \geq (a_m)^{n-m} f_m(H(W)),
\end{equation}
where $a_m$ is given in \eqref{a_m}.
\end{lemma}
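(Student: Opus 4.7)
My plan is to condition on the channel $W_{n-m}$, splitting the desired bound into two pieces: an ``$m$-step BEC-extremality'' piece that converts $\mathbb{E}[f_0(H_n)\mid W_{n-m}]$ into the quantity $f_m(H(W_{n-m}))$, and an ``$a_m$-iteration'' piece that handles the remaining $n-m$ outer steps. The key structural input in both pieces is the extremes of information combining \eqref{ex-}--\eqref{ex+}: among all BMS channels of entropy $h$, the BEC produces the maximally spread split $(1-(1-h)^2, h^2)$, and for any concave function this spread yields a lower bound on the averaged value.

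For the first piece I would establish, by induction on $k \in \{0,1,\ldots,m\}$, that for every BMS channel $Q$,
\[ \mathbb{E}[f_0(H_k) \mid W_0 = Q] \;\geq\; f_k(H(Q)). \]
The base case $k=0$ is an identity. For the inductive step, condition on whether $W_1 = Q^0$ or $Q^1$: the induction hypothesis applied to $Q^0$ and $Q^1$ yields
\[ \mathbb{E}[f_0(H_k) \mid W_0 = Q] \;\geq\; \tfrac12 \bigl( f_{k-1}(H(Q^0)) + f_{k-1}(H(Q^1)) \bigr). \]
Since $H(Q^0) + H(Q^1) = 2 H(Q)$ and $(H(Q^0), H(Q^1))$ is less spread than the BEC split $(1-(1-H(Q))^2, H(Q)^2)$ by \eqref{ex-}--\eqref{ex+}, concavity of $f_{k-1}$ forces the right-hand side to dominate $\tfrac12\bigl(f_{k-1}(H(Q)^2) + f_{k-1}(1-(1-H(Q))^2)\bigr) = f_k(H(Q))$. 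Specialising to $k=m$ and using the tower property at time $n-m$ then gives $\mathbb{E}[f_0(H_n) \mid W_0] \geq \mathbb{E}[f_m(H_{n-m}) \mid W_0]$.

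For the second piece only concavity of $f_m$ itself is needed. The same BEC-extremality argument, now applied one step at a time to $f_m$, gives
\[ \mathbb{E}[f_m(H_{k+1}) \mid W_k] \;\geq\; \tfrac12 \bigl( f_m(H_k^2) + f_m(1-(1-H_k)^2) \bigr) \;=\; f_{m+1}(H_k) \;\geq\; a_m\, f_m(H_k), \]
where the final inequality is just the definition of $a_m$ in \eqref{a_m}. Taking expectation and iterating from $k=0$ through $k = n-m-1$ produces $\mathbb{E}[f_m(H_{n-m}) \mid W_0] \geq a_m^{n-m} f_m(H(W))$, and assembling the two pieces gives the claimed bound on $\mathbb{E}[H_n(1-H_n)] = \mathbb{E}[f_0(H_n)]$.

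The main subtlety I expect is that the induction in the first piece uses concavity of $f_{k-1}$ for every $k \in \{1,\ldots,m\}$, i.e.\ of the whole chain $f_0,\ldots,f_{m-1}$, whereas the lemma's hypothesis explicitly records concavity of $f_m$ only. Concavity of $f_0(z)=z(1-z)$ is trivial, and the Sturm-chain verification described in Remark~\ref{rem_b} checks concavity of every $f_j$ up to $j=10$, so within the regime where ``suitability'' is actually certified numerically the entire chain is concave and the argument goes through without modification; avoiding this implicit strengthening of the definition would require a qualitatively different argument that I do not see how to obtain from concavity of $f_m$ alone.
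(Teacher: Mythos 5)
Your argument is correct and draws on the same ingredients as the paper's proof --- the extremes-of-information-combining bounds \eqref{ex-}--\eqref{ex+}, concavity, and the definition of $a_m$ --- but organizes them differently: you condition on $W_{n-m}$ and split the bound into a BEC-extremality stage over the first $m$ steps, $\mathbb{E}[f_0(H_m)\mid W_0=Q]\geq f_m(H(Q))$, followed by an $a_m$-iteration over the remaining $n-m$ steps. The paper instead runs a single induction on $n-m$ that peels off the first polarization step $W\mapsto(W^0,W^1)$ and applies the induction hypothesis at $W^0$ and $W^1$, so the two stages are interleaved rather than separated.

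On the subtlety you flag: you are right that your first piece uses concavity of $f_0,\ldots,f_{m-1}$, which formally goes beyond the paper's definition of ``suitable'' (concavity of $f_m$ alone). But this is not a defect of your route relative to the paper's. The base case $n-m=0$ of the paper's induction, dismissed as ``there is nothing to prove,'' is precisely the statement $\mathbb{E}[H_m(1-H_m)\mid W_0=W]\geq f_m(H(W))$ for an arbitrary BMS channel $W$ --- i.e.\ BEC-extremality of $\mathbb{E}[H_m(1-H_m)]$ over $m$ polarization steps. That statement holds by equality for the BEC but for a general BMS channel requires the very argument you give in your first piece, step by step, and hence concavity of $f_0,\ldots,f_{m-1}$. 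So both proofs silently lean on the same strengthening. Remark~\ref{rem_b} in fact certifies (by Sturm chains) concavity of every $f_j$ up to $j=10$, not only of $f_{10}$, so within the range in which the lemma is applied the required hypothesis is met and both arguments go through; a cleaner formulation would simply define ``suitable'' to mean that all of $f_0,\ldots,f_m$ are concave.
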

\begin{proof}
We use induction on $n-m$: for $n-m=0$ there is nothing to prove. Assume that the result of the lemma
 is correct for $n-m=k$. 
Hence, for any BMS channel $W$ with $H_n=H(W_n)$ we have
\begin{equation} \label{BMS_m}
 \mathbb{E}[H_{m+k}(1-H_{m+k})] \geq (a_m)^k f_m(H(W)).
\end{equation}
We now prove the lemma for $m-n=k+1$. For the BMS channel $W$, let us recall from Section~\ref{chantrans} that the  transform
$W \to (W^0,W^1)$ yields two channels $W^0$ and $W^1$ such that  \eqref{I_preserve} holds. Define the process $\{{(W^0)}_n, n \in \mathbb{N}\}$ as the channel process that 
starts with $W^0$ and evolves as in \eqref{W_n}.  We define $\{{(W^1)}_n, n \in \mathbb{N}\}$ similarly.
Furthermore,  define the two processes $H_n^0=H({(W^0)}_n)$ and $H_n^1=H({(W^1)}_n)$. We have,
\begin{align*}
& \mathbb{E}[H_{m+k+1}(1-H_{m+k+1})] \\
&\stackrel{(a)}{=} \frac{\mathbb{E}[H_{m+k}^0(1-H_{m+k}^0)] +\mathbb{E}[H_{m+k}^1 (1-H_{m+k}^1)] }{2} \\
&\stackrel{(b)}{\geq} (a_m)^k \frac{ f_m(H(W^0)) + f_m(H(W^1))}{2} \\
& \stackrel{(c)}{\geq} (a_m)^k \frac{f_m( 1-(1-H(W))^2) +f_m( H(W)^2)}{2} \\
&\stackrel{(d)}{=}  (a_m)^k f_{m+1}( H(W)) \\
& =  (a_m)^k \frac{f_{m+1}( H(W))}{f_m(H(W))} f_m(H(W)) \\
& \geq   (a_m)^k \biggl [ \inf_{h\in(0,1)}  \frac{f_{m+1}(h)}{f_m(h)} \biggr] f_m(H(W)) \\
& \stackrel{(e)}{=} (a_m)^{m+1}    f_m(H(W)).
\end{align*}  
In the above chain of inequalities, relation (a) follows from the fact that $W_m$ has $2^m$ possible outputs among which half of 
them are branched out from $W^0$ and the other half are
 branched out from $W^1$. 
 Relation (b) follows from the induction hypothesis 
given in \eqref{BMS_m}. Relation (c) follows from \eqref{ex-}, \eqref{ex+} and 
the fact that the function $f_m$ is concave. More precisely, because $f_m$ is
 concave on $[0,1]$, we have the following inequality for any sequence of 
numbers $0 \leq x' \leq x \leq y \leq y' \leq 1$ that satisfy 
 $\frac{x+y}{2} = \frac{x'+y'}{2}$:  
\begin{equation} \label{conc}
 \frac{f_m(x') + f_m(y')}{2} \leq \frac{f_m(x) + f_m(y)}{2}.
\end{equation}  
In particular, we set $x'=H(W)^2$, $x=H(W^1)$, $y=H(W^0)$, $y'=1-(1-H(W))^2$ and we know from \eqref{ex-} and \eqref{ex+} that
 $0 \leq x' \leq x \leq y \leq y' \leq 1$. Hence, by \eqref{conc} we obtain (c). Relation (d) follows 
from the recursive definition of $f_m$ given in \eqref{f_n}. Finally, 
 relation (e) follows from the definition of $a_m$ given in \eqref{a_m}.
\end{proof}

Up to now, we have provided bounds on the speed of polarization for the BEC as well as general BMS channels.   
In the final part of this section, we rigorously relate the results  obtained in previous parts to 
finite-length performance of polar codes. In other words, answering Question~\ref{Q4} stated in Section~\ref{prob_form} is the main focus for the remaining part of this section.

\subsection{Universal Bounds on the Scaling Behavior of Polar Codes}
\subsubsection{Universal Lower Bounds}
Consider a BMS channel $W$ and let us assume that a  
polar code is required with block-error probability  at most a given
 value $P_{\rm e} >0$.  One way to accomplish this is to ensure 
that the right side of \eqref{P_e}  is less than $P_{\rm e}$. 
However, this is only a sufficient condition that 
might not be necessary.   Hence, we call the right side of \eqref{P_e} 
\emph{the strong reliability condition}.  Numerical and analytical investigations (see \cite{KMTU10} and \cite{mani}) 
suggest that once the sum of individual errors in the right side of \eqref{P_e} is less than $1$, then it provides a fairly good estimate of $P_{\rm e}$. In fact, the smaller the sum is the closer it is to $P_{\rm e}$. Hence, the sum of individual errors can be considered as a fairly accurate proxy for $P_{\rm e}$.  
 Based on 
this measure of the block-error probability, we provide bounds on how the rate $R$ scales in terms of the block-length $N$. 
\begin{theorem} \label{main_univ}
For any BMS channel $W$ with capacity $I(W) \in (0,1)$, 
there exist constants $P_{\rm e},\alpha>0$, that depend only on $I(W)$, 
such that 
\begin{align} \label{sum-eps}
 \sum_{i \in \mathcal{I}_{N,R}} E(W_{N}^{(i)}) \leq P_{\rm e},
\end{align}
implies 
\begin{equation} \label{Rb}
R <I(W)- \frac{\alpha}{N^\frac{1}{\mu}}.
\end{equation}
Here, $\mu$ is a universal parameter equal to $\mu = 3.579$.
\end{theorem}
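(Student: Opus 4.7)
The plan is to convert the universal lower bound $\mathbb{E}[H_n(1-H_n)] \geq (a_m)^{n-m} f_m(H(W))$ of Lemma~\ref{BMS} (taking $m=10$, which by Remark~\ref{rem_b} is suitable and, via Table~\ref{a_val}, yields $a_m^n = N^{-1/\mu}$ with $\mu = 3.579$) into an upper bound on the fraction of sub-channels with small entropy, and then translate this bound into a rate constraint via $2E \leq H \leq h_2(E)$. Concretely, for a threshold $\delta > 0$ and $p_1 := {\rm Pr}[H_n < \delta]$, I will split $\mathbb{E}[H_n(1-H_n)]$ over $\{H_n < \delta\}$ and its complement, bound $H_n(1-H_n) \leq H_n$ on the first event and $H_n(1-H_n) \leq 1-H_n$ on the second, and invoke the martingale identity $\mathbb{E}[1-H_n] = I(W)$ to obtain
\[
c_W a_m^n \;\leq\; \mathbb{E}[H_n(1-H_n)] \;\leq\; I(W) - p_1 (1 - 2\delta),
\]
with $c_W = (a_m)^{-m} f_m(H(W)) > 0$ depending only on $I(W)$. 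Choosing $\delta = c_W a_m^n / 4$ rearranges to $p_1 \leq I(W) - C\, a_m^n$ for an explicit $C > 0$ (e.g.\ $C = c_W(1+H(W))/4$), valid once $n$ is large enough.

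The second step argues by contraposition. Suppose $R \geq I(W) - \alpha N^{-1/\mu}$ with $0 < \alpha < C$. Since $H \leq h_2(E)$ by \eqref{bounds2}, any sub-channel with $E(W_N^{(i)}) < h_2^{-1}(\delta)$ automatically satisfies $H(W_N^{(i)}) < \delta$; hence the number of such ``very good'' sub-channels is at most $p_1 N \leq N(I(W) - C a_m^n)$. Consequently, the $NR$ indices of smallest $E$ must include at least $N(R - p_1) \geq (C-\alpha) N^{1-1/\mu}$ indices with $E(W_N^{(i)}) \geq h_2^{-1}(\delta)$. Using $h_2^{-1}(x) = \Omega(x/\log_2(1/x))$ for small $x$, so $h_2^{-1}(\delta) = \Omega(a_m^n / n)$, the error sum satisfies
\[
\sum_{i \in \mathcal{I}_{N,R}} E(W_N^{(i)}) \;\geq\; \Omega\!\left(\frac{(C-\alpha)\, N^{\,1-2/\mu}}{\log N}\right),
\]
which tends to infinity with $N$ because $1 - 2/\mu > 0$ for $\mu = 3.579$, eventually exceeding any fixed $P_{\rm e}$ and producing the required contradiction. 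The finitely many small values of $N$ not covered by this asymptotic argument are absorbed by shrinking $\alpha$ and $P_{\rm e}$ (in terms of $I(W)$) so that the hypothesis fails or the conclusion becomes trivial on that range.

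The main technical difficulty lies in the constant-tracking of the first step: the threshold $\delta$ must be taken proportional to $a_m^n$ so that the $(1-2\delta)$ factor does not erase the $c_W a_m^n$ gain coming from Lemma~\ref{BMS}, yet large enough that $h_2^{-1}(\delta)$ remains a useful lower bound in the second step. The logarithmic slack introduced by the inequality $E \geq h_2^{-1}(H)$---tight for the BSC but loose for the BEC, where $E = H/2$---is harmless here because $\mu > 2$, but it is exactly this slack that accounts for the gap between the universal exponent $\underline{\mu} = 3.579$ proved here and the conjectured BEC-matching value $\underline{\mu} = 3.627$.
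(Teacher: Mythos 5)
Your proof is correct and follows essentially the same route as the paper: Lemma~\ref{BMS} with $m=10$ gives the universal lower bound on $\mathbb{E}[H_n(1-H_n)]$, this is turned into an upper bound on ${\rm Pr}(H_n < \delta)$, and the rate constraint is forced because the selected set of good indices must contain on the order of $N^{1-1/\mu}$ channels with $E \geq h_2^{-1}(\delta)$, whose error sum grows like $N^{1-2/\mu}/\log N$ and hence diverges since $\mu > 2$. The only departure from the paper is cosmetic: you rederive the content of Lemma~\ref{gamma} in-line by a direct split of $\mathbb{E}[H_n(1-H_n)]$ over $\{H_n < \delta\}$ and its complement together with the martingale identity $\mathbb{E}[1-H_n]=I(W)$, whereas the paper states and proves this intermediate step (by contradiction) as a separate lemma.
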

A few comments are in order: 

(i) The value of $\mu$ stated in Theorem~\ref{main_univ} (i.e. $\mu = 3.579$) can be slightly improved by the following procedure. 
As we will see shortly, we can obtain an increasing sequence
of candidates, call this sequence $\{\mu_m\}_{m\in \mathbb{N}}$,
for the universal parameter $\mu$ in \eqref{Rb}. For each $m$, in order to show
the validity of $\mu_m$, we need to verify the concavity of
a certain polynomial on $[0, 1]$ (the polynomial is defined in \eqref{f_n} with $f(z) = z(1-z)$). 
We explained in  Remark~\ref{rem_b} how we can accomplish this using the Sturm chain method.
The value of $\mu$
stated in Theorem~\ref{main_univ} is the one corresponding to $m=10$, an
arbitrary choice.  If we increase $m$, we get a new candidate for $\mu$ to plug into \eqref{Rb}, i.e., $\mu_{16}=3.614$.
We conjecture that the sequence $\mu_m$ converges to $\mu_\infty=3.627$,
the parameter for the BEC. If such  a conjecture holds, then the channel BEC polarizes the fastest among the BMS channels (see Question~\ref{Q2}). 

(ii) Let $P_{\rm e},\alpha,\mu$ be as in Theorem~\ref{main_univ}.  If we
require the block-error probability to be less than $P_{\rm e}$ (in
the sense that the condition  \eqref{sum-eps} is fulfilled), then
the block-length $N$ should be at least \begin{equation} N >
(\frac{\alpha}{I(W)-R})^\mu.  \end{equation}

(iii) From \eqref{min_block} we know that the value of $\mu$ for the random linear ensemble is $\mu=2$, which is the optimal 
value since  the
variations of the channel itself require $\mu \geq 2$. Thus, given a rate $R$, 
reliable transmission by polar codes requires a larger block-length   than the optimal 
value.

{\em Proof of Theorem~\ref{main_univ}:} To fit the bounds of Section~\ref{second_approach} into the framework of Theorem~\ref{main_univ}, let us first introduce the sequence
 $\{\mu_m\}_{m \in \mathbb{N}}$ as
\begin{equation}
 \mu_m= -\frac{1}{\log a_m},
\end{equation}
where $a_m$ is defined in \eqref{a_m} with starting function $f(z)=z(1-z)$.
From Lemma~\ref{BMS} we know that for a suitable $m$, the speed with which the quantity 
$\mathbb{E}[H_n(1-H_n)]$ decays is lower bounded by $a_m=2^{-\frac{1}{\mu_m}}$. More precisely, for 
$n\geq m$ we have $\mathbb{E}[H_n(1-H_n)] \geq 2^{-\frac{(n-m)}{\mu_m}} f_m(H(W))$.
To relate the strong reliability condition in \eqref{sum-eps} to the rate bound in \eqref{Rb}, we need the following lemma. 
\begin{lemma} \label{gamma}
Consider a BMS channel $W$ and assume that there exist positive
 real numbers $\gamma, \theta$, and $m\in \mathbb{N}$ such that $\mathbb{E}[H_n(1-H_n)] \geq \gamma 2^{-n \theta}$ for $n \geq m$. 
Let $\alpha, \beta \geq 0$ be such that $2\alpha+ \beta=\gamma$, we have for $n \geq m$
\begin{equation}
{\rm{Pr}}(H_n \leq \alpha 2^{-n \theta}) \leq I(W)-\beta2^{-n\theta}.
\end{equation}
\end{lemma}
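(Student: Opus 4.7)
The plan is to combine the hypothesis $\mathbb{E}[H_n(1-H_n)] \geq \gamma 2^{-n\theta}$ with the martingale identity $\mathbb{E}[1-H_n] = 1 - H(W) = I(W)$, using a two-region decomposition of the unit interval based on the threshold $\alpha 2^{-n\theta}$. Let $p \triangleq \Pr(H_n \leq \alpha 2^{-n\theta})$; this is the quantity we want to upper bound.

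First I would split the expectation as
\begin{equation*}
\mathbb{E}[H_n(1-H_n)] = \mathbb{E}[H_n(1-H_n)\mathbf{1}_{H_n \leq \alpha 2^{-n\theta}}] + \mathbb{E}[H_n(1-H_n)\mathbf{1}_{H_n > \alpha 2^{-n\theta}}]
\end{equation*}
and apply the crude but tight pointwise bounds $H_n(1-H_n) \leq H_n$ on the low region (giving at most $\alpha 2^{-n\theta}\, p$) and $H_n(1-H_n) \leq 1-H_n$ everywhere (so the second term is at most $\mathbb{E}[(1-H_n)\mathbf{1}_{H_n > \alpha 2^{-n\theta}}]$).

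Next I would use the martingale property: since $\mathbb{E}[1-H_n] = I(W)$ and on the event $\{H_n \leq \alpha 2^{-n\theta}\}$ we have $1-H_n \geq 1-\alpha 2^{-n\theta}$,
\begin{equation*}
\mathbb{E}[(1-H_n)\mathbf{1}_{H_n > \alpha 2^{-n\theta}}] = I(W) - \mathbb{E}[(1-H_n)\mathbf{1}_{H_n \leq \alpha 2^{-n\theta}}] \leq I(W) - (1-\alpha 2^{-n\theta})\, p.
\end{equation*}
Chaining these bounds with the hypothesis $\mathbb{E}[H_n(1-H_n)] \geq \gamma 2^{-n\theta}$ yields the linear inequality
\begin{equation*}
\gamma 2^{-n\theta} \leq I(W) - p + 2\alpha 2^{-n\theta}\, p.
\end{equation*}

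Finally I would solve for $p$: rearranging and using the trivial bound $p \leq 1$ on the term $2\alpha 2^{-n\theta} p$ gives $p \leq I(W) - \gamma 2^{-n\theta} + 2\alpha 2^{-n\theta} = I(W) - (\gamma - 2\alpha) 2^{-n\theta} = I(W) - \beta 2^{-n\theta}$, which is exactly the claim. I do not foresee a real obstacle here: the only subtlety is choosing the right pointwise majorants of $H_n(1-H_n)$ in the two regions so that the hypothesis (which is quadratic in $H_n$) couples cleanly with the linear martingale identity to give a bound on the linear quantity $p$.
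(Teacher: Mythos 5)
Your proof is correct and uses essentially the same ingredients as the paper's: the same two-region split of $\mathbb{E}[H_n(1-H_n)]$ at the threshold $\alpha 2^{-n\theta}$, the same pointwise majorants $H_n(1-H_n)\le H_n$ and $H_n(1-H_n)\le 1-H_n$, and the same invocation of the martingale identity $\mathbb{E}[1-H_n]=I(W)$. The only difference is presentational: the paper assumes the negation and derives $\mathbb{E}[1-H_n]>I(W)$ as a contradiction, while you solve the resulting linear inequality in $p$ directly (using $p\le 1$ to absorb the $2\alpha 2^{-n\theta}p$ term), which is arguably a touch cleaner but not a different method.
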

The proof of this lemma is provided in the appendices. 
Let us now use the result of Lemma~\ref{gamma} to conclude the proof of Theorem~\ref{main_univ}.
By Lemma~\ref{BMS}, we have for  $n \geq m$
\begin{align*}
\mathbb{E}[H_n(1-H_n)] & \geq 2^{-\frac{(n-m)}{\mu_m}} f_m(H(W)). 
\end{align*}   
Thus, if we now let 
$\gamma=  2^{\frac{m}{\mu_m}} f_m (H(W))$, $\theta=\frac{1}{\mu_m}$, and $2\alpha=\beta=\frac{\gamma}{2}$,
then by using Lemma~\ref{gamma} we obtain
\begin{equation} \label{ineqtm}
\text{Pr}(H_n \leq \frac{\gamma}{4} 2^{-\frac{n }{\mu_m}}) \leq I(W)-\frac{\gamma}{2}2^{-\frac{n }{\mu_m}}.
\end{equation}
Assume that we desire to achieve a rate $R$ equal to 
\begin{equation} \label{Rm}
R= I(W)- \frac{\gamma}{4}2^{-\frac{n }{\mu_m}}. 
\end{equation}
Let $\mathcal{I}_{N,R}$ be the set of indices chosen for such a rate $R$, i.e., $\mathcal{I}_{N,R}$ includes the $2^nR$ 
indices of the sub-channels with the least value of error probability.  Define the set $A$ as
\begin{align} \label{A_def}
A= \{ i \in \mathcal{I}_{N,R}: H(W_N^{(i)}) \geq \frac{\gamma}{4} 2^{-\frac{n }{\mu_m}}\}.
\end{align}
 In this regard, note that \eqref{ineqtm} and \eqref{Rm} imply that 
\begin{equation} \label{|A|_bound}
\mid A \mid \geq \frac{\gamma}{4} 2^{n(1-\frac{1}{\mu_m})}.
\end{equation}
 As a result, by using \eqref{bounds1} and \eqref{bounds2}  we obtain for $n\geq m$
\begin{align} 
\nonumber \sum_{i \in \mathcal{I}_{N,R}} E(W_N^{(i)}) 
& \geq \sum_{i \in A} E(W_N^{(i)}) \\
& \stackrel{\eqref{bounds2}}{\geq} \sum_{i \in A} h_2^{-1}(H(W_N^{(i)}))\\
& \stackrel{\eqref{A_def}}{\geq} \sum_{i \in A} h_2^{-1}(\frac{\gamma}{4} 2^{-\frac{n }{\mu_m}})\\ 
& \geq \mid A \mid (\frac{\gamma}{4} 2^{-\frac{n }{\mu_m}})\\ 
&\stackrel{\eqref{|A|_bound}}{\geq} \frac{\gamma}{4} 2^{n(1-\frac{1}{\mu_m})} h_2^{-1}(\frac{\gamma}{4}2^{-\frac{n }{\mu_m}})\\
& \label{sum111} \geq  \frac{\gamma^2}{16} \frac{2^{n(1-2\ \frac{1}{\mu_m} )}}{8(  \frac{n}{\mu_m} + \log \frac{4}{\gamma})},
\end{align}
 where the last step follows from the fact that for $x \in [0,\frac{1}{\sqrt{2}}]$, we have $h_2^{-1}(x) \geq \frac{x}{8 \log(\frac{1}{x})}$. Thus, having 
a block-length $N=2^n$, in order to have error probability (measured by \eqref{P_e})
  less than   $ \frac{\gamma^2}{16} \frac{2^{n(1-2\ \frac{1}{\mu_m} )}}{8 (\ \frac{n}{\mu_m} + \log \frac{4}{\gamma})}$, the rate can be at
 most $I(W)-\frac{\gamma}{4}2^{-\frac{n }{\mu_m}}$. 

Finally, if we let $m=10$ (by the discussion in Remark~\ref{rem_b}, we know that $m=10$ is suitable), 
then $\mu_{10}=\frac{1}{-\log(a_{10})}=3.579$ and choosing
\begin{equation} \label{P_e_def}
P_{\rm e}= \inf_{n \in \mathbb{N}} \bigl [ \sum_{i \in \mathcal{I}_{N,R}} E(W_N^{(i)}) \bigr ], 
\end{equation}
where $R$ is given in \eqref{Rm}, then it is easy to see from \eqref{sum111} that $P_{\rm e}>0$ (since $\frac{1}{\mu_{10}}<\frac 12$). In other words, from the definition of 
$P_{\rm e}$ in \eqref{P_e_def}, we see that $P_{\rm e}$  is the infimum of a sequence of numbers. Each member of this sequence is lower bounded in \eqref{sum111}. However, it is easy to that this lower bound (and hence the sequence) diverges in $n$ (note that $\frac{1}{\mu_{10}}<\frac 12$). As a result, the value of $P_{\rm e}$, which is defined as the infimum of this sequence, is strictly positive, i.e., $P_{\rm e} > 0$. 
Furthermore, from \eqref{P_e_def}, it is easy to see that
to have the value of the sum $\sum_{i \in \mathcal{I}_{N,R}} E(W_N^{(i)})$ to be less than $P_{\rm e}$, the rate should be less than $R$ given in \eqref{Rm}.

\subsubsection{Universal Upper Bounds}
In this part, we provide upper bounds on the block-length $N$ for polar codes, in terms of the rate $R$, that is required to obtain an error probability less than a given value $P_{\rm e}$ (see Question~\ref{Q4} in Section~\ref{prob_form}). Again, the key component here is the upper-bounds on the speed of polarization, e.g. the bounds derived in Table~\ref{nb_n} for the BEC and the universal bound \eqref{univ-up}. 
\begin{theorem}\label{thm_scaling_up}
Let $Z_n=Z(W_n)$ be the Bhattacharyya process associated to a BMS channel $W$. Assume that for
$n \in \mathbb{N}$ we have
\begin{equation} \label{speed_up}
\mathbb{E}[(Z_n (1-Z_n) )^\alpha] \leq \beta 2^{-\rho n},
\end{equation}
where $\alpha, \beta, \rho$ are positive constants and $\alpha<1$.  Then, the block-length $N$ required   to achieve an error probability $P_{\rm e}>0$ at a given rate $R<I(W)$ is bounded from above by
\begin{equation}\label{1logNbound}
\log N \leq  (1+ \frac{1}{\rho}) \log \frac{1}{d} + c(\log(\log \frac {4}{d}))^2 ,
\end{equation}  
where $d=I(W)-R$ and $c$ is a universal positive constant that depends on $\alpha, \beta, \rho, P_{\rm e}$.
\end{theorem}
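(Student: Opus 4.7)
The plan is to split $n=n_1+n_2$ into two phases: Phase~1 (a Markov-plus-capacity-martingale step) identifies a large fraction of indices whose Bhattacharyya parameter has already dropped to some moderate threshold $\tau_1$, and Phase~2 (an Ar\i kan--Telatar-type doubly-exponential polarization step) shows that almost all descendants of those indices further polarize, bringing their Bhattacharyya value below the ultimate target $\tau_n := 2P_{\rm e}/(NR)$.

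In Phase~1 I will apply Markov's inequality to the hypothesis \eqref{speed_up},
\[
\Pr(Z_{n_1}\in[\tau_1,1-\tau_1]) \;\leq\; \frac{\beta\,2^{-\rho n_1}}{(\tau_1(1-\tau_1))^\alpha}\,,
\]
and combine this with the capacity-preservation identity $\mathbb{E}[I_{n_1}]=I(W)$ together with the inequality $I\leq\sqrt{2(1-Z)}$ on $\{Z\geq 1-\tau_1\}$ that comes from \eqref{bounds3}. Since $I\leq 1$ on the mid-range, these two ingredients yield
\[
\Pr(Z_{n_1}\leq\tau_1) \;\geq\; I(W) - \frac{\beta\,2^{-\rho n_1}}{\tau_1^\alpha} - C\sqrt{\tau_1}\,,
\]
and choosing $\tau_1$ small and $n_1\approx(1/\rho)\log(1/d)$ makes the right-hand side exceed $R+d/2$.

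In Phase~2, for every Phase~1-good index $i$ I will track the log-Bhattacharyya $L_k:=-\log Z_{n_1+k}^{(j)}$ of a descendant $j$; by \eqref{Zgen+} and \eqref{Zgen-} this process obeys $L_{k+1}=2L_k$ when the bit $b_{n_1+k+1}=1$ and $L_{k+1}\geq L_k-1$ when it is $0$. A one-sided Chernoff / binomial-tail estimate on the number $S_{n_2}$ of $b=1$ steps shows that with conditional probability at least $1-d/2$ one has $S_{n_2}\geq\lceil\log\log(4/d)\rceil$, and an inductive bookkeeping of the $-1$ losses then gives $L_{n_2}\geq\log N+O(1)$, i.e., $Z_n\leq \tau_n$. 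Making the binomial tail $\binom{n_2}{k}2^{-n_2}\leq d/2$ with $k=\lceil\log\log(4/d)\rceil$ forces $n_2\geq\log(1/d)+(\log\log(4/d))^2$ up to lower-order corrections. Combining both phases gives $\Pr(Z_n\leq\tau_n)\geq(I(W)-d/2)(1-d/2)\geq R$, so the $NR$ indices with the smallest $E$-values all satisfy $Z\leq\tau_n$, and hence by \eqref{P_e} and \eqref{bounds1} we obtain $\sum_{i\in\mathcal{I}_{N,R}} E(W_N^{(i)})\leq\tfrac12 NR\,\tau_n\leq P_{\rm e}$. Summing $n_1+n_2$ then produces the stated bound $\log N\leq(1+1/\rho)\log(1/d)+c(\log\log(4/d))^2$.

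The main obstacle is the Phase~2 random-walk analysis: both the multiplicative doublings and the additive $-1$ losses must be tracked carefully, because $-1$ steps that occur before a long run of doublings can wipe out the accumulated progress, so it is not enough to control only the \emph{count} of $b=1$ steps but also their \emph{positions}. The precise form of the $(\log\log(4/d))^2$ correction arises from the binomial tail estimate dictating how small $S_{n_2}$ can be while the failure probability remains at most $d/2$, and one still has to check that the resulting Phase~2 cost of $\Theta(\log(1/d))$ combines additively with Phase~1's cost of $(1/\rho)\log(1/d)$ to give the claimed leading order $(1+1/\rho)\log(1/d)$.
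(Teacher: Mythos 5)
Your two-phase plan — first use the hypothesis to show most indices reach a moderate Bhattacharyya threshold, then use the doubly-exponential regime to drive them below the final target — is the same high-level strategy as the paper, and Phase~2's identification of the threshold $\log\log(\cdot)$ and the resulting $(\log\log)^2$ correction is right. However, there are two genuine gaps.

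\emph{Phase~1 fails to deliver the claimed leading constant.}  The only elementary bound you invoke, $I(W)\le\sqrt{1-Z(W)^2}\le\sqrt{2(1-Z(W))}$ (from \eqref{bounds3}), is far too weak near $Z=1$.  To make the ``near-$Z=1$'' capacity leak $C\sqrt{\tau_1}$ at most a fraction of $d$, you are forced to take $\tau_1\downarrow0$ with $d$, roughly $\tau_1\sim d^2$.  But then Markov on \eqref{speed_up} only gives ${\rm Pr}(Z_{n_1}\in[\tau_1,1-\tau_1])\lesssim\beta2^{-\rho n_1}/\tau_1^{\alpha}$, and requiring this to be $O(d)$ forces $n_1\gtrsim\frac{1+2\alpha}{\rho}\log\frac1d$, not $\frac{1}{\rho}\log\frac1d$.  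So the leading coefficient in \eqref{1logNbound} becomes $1+\frac{1+2\alpha}{\rho}$, which is strictly worse than the stated $1+\frac1\rho$ (and, after plugging in the universal $\alpha=3/4$ from \eqref{univ-up}, worse than the $\overline\mu=6$ advertised).  The paper avoids this by \emph{not} using the elementary bound: Lemma~\ref{1aux1} instead applies the run-decomposition Lemma~\ref{1aux2} to the reversed process $Y_n=1-Z_n^2$ to derive the much sharper, non-elementary pointwise inequality $I(W)\le c_2(1-Z^2)\bigl(1+\log\frac{1}{1-Z^2}\bigr)$, then sums the capacity deficit over $\{Z_{n_0}>\tfrac12\}$ and controls it by $(Z_{n_0}^2(1-Z_{n_0}^2))^\alpha$ via Lemma~\ref{1aux3} and \eqref{speed_up}.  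That is what lets a \emph{fixed} threshold $\tfrac12$ work and keeps $n_0\approx\frac1\rho\log\frac1d$.

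\emph{Phase~2 is not actually proved.}  You correctly flag, and then leave open, ``the main obstacle'': showing that the doubling/$-1$ random walk reaches $\ge2^{\sum B_i}$ with overwhelming probability despite adversarially-placed $-1$ steps.  This is precisely the content of the paper's Lemma~\ref{1aux2}, proved via the decomposition into runs $(r_1,\dots,r_K)$, an embedding into the infinite-$n$ limit where the runs become i.i.d., and a moment-generating-function bound on $X=\sum_{i\ \text{even}}R_i 2^{-\sum_{j\ \text{odd}<i}R_j}$.  Without that lemma (or an equivalent), Phase~2 is a statement of intent, not an argument.  Moreover, the failure probability your Phase~2 would deliver is of the form $c_2\,z\,(1+\log\frac1z)$ where $z$ is the Bhattacharyya value of the individual Phase-1-good sub-channel, not a uniform $d/2$; for a fixed threshold like $\tfrac12$ this is a constant per channel.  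The paper handles this by \emph{averaging} $\sum_{i\in\mathcal A}c_2 Z(W_{N_0}^{(i)})(1+\log\frac1{Z(W_{N_0}^{(i)})})$ over the good set and controlling it with Lemma~\ref{1aux3} and \eqref{speed_up} again — a step your outline omits and which your escape hatch ($\tau_1$ tiny) re-triggers the Phase~1 problem.
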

Before proceeding with the proof of Theorem~\ref{thm_scaling_up}, let us note a few comments:  

(i) In
the previous sections we have computed several candidates for the value $\rho$ required in Theorem~\ref{thm_scaling_up}.  
As an example,  using the universal candidate for $\rho$ given  in \eqref{univ-up} (i.e., $\rho =0.202$), we obtain the following corollary.
\begin{corollary}
For any BMS channel $W$,  the block-length $N$ required to achieve a rate $R < I(W)$ scales at most as
\begin{equation}
N \leq \Theta \bigl ( \frac{1}{(I(W) - R)^{6}}\bigr).
\end{equation} 
\end{corollary}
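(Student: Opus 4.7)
The plan is to obtain the corollary as a direct specialization of Theorem~\ref{thm_scaling_up}, using the universal upper bound on the polarization speed that was derived earlier, namely the bound in \eqref{univ-up}. That bound states that for every BMS channel $W$ and every $n\in\mathbb{N}$,
\begin{equation*}
 \mathbb{E}[(Z_n(1-Z_n))^{3/4}]\ \leq\ 2^{-0.202\,n}.
\end{equation*}
This is exactly the hypothesis of Theorem~\ref{thm_scaling_up} with the explicit choice of constants $\alpha=\tfrac{3}{4}$ (so $\alpha<1$), $\beta=1$, and $\rho=0.202$. The point of this step is that \eqref{univ-up} was proven uniformly in $W$, so the resulting scaling estimate will likewise be universal.

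Next I would feed these constants into the conclusion \eqref{1logNbound} of Theorem~\ref{thm_scaling_up}. Writing $d=I(W)-R$, this gives, for some positive constant $c$ depending only on $\alpha,\beta,\rho,P_{\rm e}$,
\begin{equation*}
 \log N\ \leq\ \Bigl(1+\tfrac{1}{0.202}\Bigr)\log\tfrac{1}{d}\ +\ c\bigl(\log\log\tfrac{4}{d}\bigr)^{2}.
\end{equation*}
A numerical check shows $1+1/0.202 = 1+4.9505\ldots = 5.9505\ldots<6$, so the leading coefficient is strictly below $6$. This gap is what leaves room to absorb the lower-order correction.

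Finally I would handle the $(\log\log(4/d))^{2}$ term. Since $(\log\log(4/d))^{2}=o(\log(1/d))$ as $d\to 0^{+}$, there exists $d_{0}>0$ (depending on $c$ and on $I(W)$) such that for all $d\in(0,d_{0}]$ one has $c(\log\log(4/d))^{2}\leq (6-5.9505)\log(1/d)$, hence $\log N\leq 6\log(1/d)$, i.e.\ $N\leq 1/d^{6}$. For $d\in[d_{0},I(W))$ the quantity $1/d^{6}$ is bounded below by a positive constant, so one can enlarge the hidden constant in the $\Theta(\cdot)$ to cover this ``easy'' range without changing the exponent. Combining the two ranges yields $N\leq \Theta\bigl(1/(I(W)-R)^{6}\bigr)$, which is the claim.

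There is no real obstacle here beyond bookkeeping: the corollary is a plug-in application of Theorem~\ref{thm_scaling_up} to the universal speed bound \eqref{univ-up}. The only point requiring any care is verifying that $1/\rho<5$ so that $(1+1/\rho)<6$, and then checking that the sub-polynomial correction $(\log\log(4/d))^{2}$ really can be swallowed into the exponent $6$ (rather than forcing an extra polylog factor); both are straightforward once one observes the strict gap $6-(1+1/0.202)\approx 0.0495>0$.
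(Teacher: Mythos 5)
Your proposal is correct and follows exactly the route the paper intends: instantiate Theorem~\ref{thm_scaling_up} with the universal bound \eqref{univ-up} (i.e., $\alpha=\tfrac34$, $\beta=1$, $\rho=0.202$), note that $1+1/0.202\approx 5.95<6$, and absorb the $(\log\log(4/d))^2$ correction into the strict gap below the exponent $6$. Your explicit treatment of the lower-order term is a sound way to make rigorous what the paper states only implicitly.
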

One important consequence of this corollary is that polar codes require a block-length that 
scales polynomially in terms of the reciprocal of gap to capacity.\footnote{The fact that polar codes need 
a polynomial block-length in terms of the reciprocal of the gap to capacity is also proven in the 
recent independently-derived result of \cite{patrick-venkat}.}

(ii) As we will see in the proof of Theorem~\ref{thm_scaling_up}, the result of this theorem is also valid if we replace $P_{\text{e}}$ 
with the sum of Bhattacharyya values of
the channels that correspond to the good indices (this sum is indeed an upper bound for $P_{\rm{e}}$).

{\em{Proof of Theorem~\ref{thm_scaling_up}:}}    
Throughout the proof we will be using two key lemmas (Lemma~\ref{1aux1} and Lemma~\ref{1aux2}) that are stated in the appendices.  Let 
\begin{equation}
d= I(W)-R.
\end{equation}
We define $n_0 \in \mathbb{N}$ to be
\begin{equation}\label{1n_0}
n_0=\biggl \lceil \frac{1}{\rho} \log \frac{3 (1+c_1)(1+2c_2 c_3 \beta) }{d} \biggr \rceil,
\end{equation}  
where $\beta$ is given in \eqref{speed_up} and the constants $c_1$, $c_2$ and $c_3$ are given in Lemmas \ref{1aux1}, \ref{1aux2} and \ref{1aux3}, respectively. 
As a result of Lemma~\ref{1aux1} and \eqref{1n_0}, we have for $n \geq n_0$ 
\begin{align} \nonumber
{\rm{Pr}} ( Z_n \leq \frac 12) & \geq I(W) - c_1 2^{-n \rho} \\ \nonumber
& \stackrel{(a)}{\geq} I(W) - \frac{d}{3} \\  \label{pZ<nim}
& = R + \frac{2}{3} d,
\end{align} 
where step (a) is a consequence of \eqref{1n_0} that for $n \geq n_0$ we have $c_1 2^{-n \rho} \leq \frac{d}{3}$.
We now define the set $\mathcal{A}$ as follows. Let $N_0=2^{n_0}$ and
\begin{equation} \label{1Bdef}
\mathcal{A} =\bigl \{i \in \{0, \cdots, N_0-1\}: Z(W_{N_0}^{(i)}) \leq \frac 12 \bigr\}.
\end{equation}
In other words $\mathcal{A}$ is the set of indices at level $n_0$ of the corresponding infinite binary tree of $W$ (see Section~\ref{chanpol}) whose
Bhattacharyya parameter is not so large.  Also, from \eqref{pZ<nim} the set $\mathcal{A}$ contains more than a fraction $R$ of all the sub-channels at level $n_0$. The idea is then to go further down through the infinite binary  tree at a level $n_0+n_1$ (the value of $n_1$ will be specified shortly).  We  then observe that the sub-channels at level $n_0+n_1$ 
that are branched out from the set $\mathcal{A}$ are polarized to a great extent in the sense that sum of their Bhattacharyya parameters is below $P_{\rm e}$ (see Figure~\ref{fig:n0n1} for a schematic illustration of the idea).  
 \begin{figure}[ht!]
\begin{center} 
 \includegraphics[width=8cm]{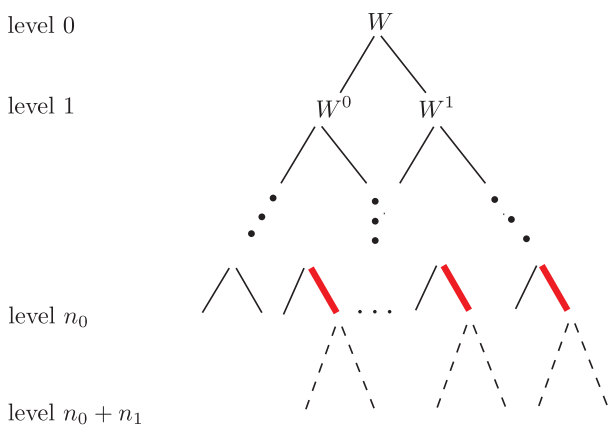}
\end{center}
\caption{The infinite binary tree of channel $W$. The red (also bold) edges at level $n_0$ of this tree correspond to the sub-channels at level $n_0$ whose Bhattacharyya
parameter is less that $\frac 12$ (i.e., the set $\mathcal{A}$). The idea is then to focus on these red (bold) indices.  We consider  the sub-channels that are branched out from these red indices at a level $n_0+n_1$ (as shown in the figure). By a careful choice of $n_1$, we observe that many of these specific sub-channels at level $n_0+n_1$ are greatly polarized in the sense that sum of their Bhattacharyya parameters is less than $P_{\rm e}$.  We also show that the fraction of these sub-channels is larger than $R$.  }\label{fig:n0n1}
\end{figure}        

We proceed by finding a suitable candidate for $n_1$.  Our objective is to choose $n_1$ large enough s.t. there is a set of indices at level $n_0+n_1$ with the following properties: (i) sum of the Bhattacharyya parameters of the sub-channels in this set  is less than $P_{\rm e}$ and (ii) the cardinality of this set is at least $R 2^{n_0+n_1}$.   In what follows, we will first use the hypothesis of Lemma~\ref{1aux2} to give a candidate for $n_1$ and then we make it clear that such a candidate is suitable for our needs.  
Let $\{B_m\}_{m \in \mathbb{N}}$ be a sequence 
of iid Bernoulli($\frac12$) random variables. We let $n_1$ be the smallest positive integer such that the following holds
\begin{equation}\label{sumpowBeps}
{\rm Pr}(2^{-2^{\sum_{i=1}^{n_1} B_i}} \leq \frac{P_{\rm e}}{2^{n_0+n_1}}) \geq 1- \frac{d}{6}.
\end{equation}  
It is easy to see that \eqref{sumpowBeps} is equivalent to 
\begin{equation}\label{sumBeps0}
{\rm Pr}\bigl(\sum_{i=1}^{n_1} B_i \geq \log ( \log \frac{1}{P_{\rm e}} + n_0+n_1 ) \bigr) \geq 1- \frac{d}{6}.
\end{equation}  
Now, note that we can write
\begin{align} \nonumber
& \log ( \log \frac{1}{P_{\rm e}} + n_0+n_1 ) \\ \nonumber
& = \log ( 1 + \log \frac{1}{P_{\rm e}} + 1 + n_0 + n_1 -2) \\ \nonumber
& \stackrel{(a)}{\leq} \log(1+ \log \frac{1}{P_{\rm e}}) + \log (1 + n_0 + n_1 -2) \\ \label{concav_log}
& \leq \log(\log \frac{2}{P_{\rm e}}) + \log (n_0 + n_1), 
\end{align}
where (a) follows from the fact that the function $f(x) = \log(1+x)$ is a concave function with $f(0) = 0$, and for any such function the following is true:
$f(x+ y) \leq f(x) + f(y), \forall x,y \geq 0$. As a result of \eqref{sumBeps0} and \eqref{concav_log}, in order for \eqref{sumpowBeps} to hold the following 
is sufficient:
\begin{equation}\label{sumBeps}
{\rm Pr}\bigl(\sum_{i=1}^{n_1} B_i \geq \log ( \log \frac{2}{P_{\rm e}}) + \log( n_0+n_1 ) \bigr) \geq 1- \frac{d}{6}.
\end{equation}  
Also, as the random variables $B_i$ are  Bernoulli($\frac 12$) and iid, the relation \eqref{sumBeps} is equivalent to
\begin{equation}\label{suffsumB}
\frac{\displaystyle \sum_{j=0}^{\log(\log \frac{2}{P_{\rm e}}) + \log(n_0+n_1)}  \binom{n_1}{j} }{2^{n_1}} <
  \frac{d}{6}.
\end{equation} 
A sufficient condition for \eqref{suffsumB} to hold is as follows:
\begin{equation*}
\frac{ n_1 ^{1+\log(\log \frac{2}{P_{\rm e}}) + \log(n_0+n_1)}}{2^{n_1}}
\leq   \frac{d}{6},
\end{equation*}
 and after applying the function $\log(\cdot)$ to both sides and some further simplifications we reach to
 \begin{equation} \label{n1log}
n_1- \bigl({1+\log(\log \frac{2}{P_{\rm e}}) + \log(n_0+n_1)} \bigr) \log n_1 \geq \log \frac{6}{d}.
 \end{equation}  
It can be shown through some simple steps that there is a constant $ c_6>0$ (that also depends on $P_{\rm e}$) s.t. if we choose 
 \begin{equation}
 n_1= \biggl \lceil \log \frac{6}{d} + c_6 (\log(\log \frac {6}{d}))^2 \biggr \rceil,
 \end{equation}
 then the inequality \eqref{n1log} holds. Now, let $\tilde{N}=2^{n_0+n_1}$ and consider the set $\mathcal{A}_1$ defined as
 \begin{equation}\label{1B1def}
\mathcal{A}_1 =\bigl \{i \in \{0, \cdots, \tilde{N}-1\}: Z(W_{\tilde{N}}^{(i)}) \leq \frac{P_{\rm e}}{\tilde{N}}  \bigr\}.
\end{equation}
We now show that 
\begin{equation}  \label{1B1card}
\frac{ | \mathcal{A}_1 |}{\tilde{N}} \geq R.
\end{equation}  
This relation together with \eqref{1B1def} shows that block error probability of the  polar code of block-length $\tilde{N}$ and rate $R$ is at most $P_{\rm e}$. 

 In order to show \eqref{1B1card}, we consider the sub-channels  in $\mathcal{A}_1$ that  
are branched out from the ones in the set $\mathcal{A}$ (defined in \eqref{1Bdef}). 
Let $i\in \mathcal{A}$ and consider the sub-channel $W_{N_0}^{(i)}$. 
At level $n_0 + n_1$ there are in total $2^{n_1}$ sub-channels that branch out from the sub-channel $W_{N_0}^{(i)}$ (which is itself at level $n_0$). By using \eqref{Z_nextreme} it is easy to see that the process $Z_n$ fulfills the condition \eqref{1genericX_n} of Lemma~\ref{1aux2}.  From Lemma~\ref{1aux2}, relation \eqref{sumpowBeps}, 
and the fact that for any two events $A$ and $B$ we have $\text{Pr}(A \cap B) \geq  \text{Pr}(A) + \text{Pr}(B) -1$, we obtain the following: 
At level $n_0+n_1$, there are in total $2^{n_1}$ 
sub-channels that are branched out from $W_{N_0}^{(i)}$, and among these sub-channels, a fraction at least
\begin{equation*}
1- \frac{d}{6} - c_2Z(W_{N_0}^{(i)})(1+ \log \frac{1}{Z(W_{N_0}^{(i)})}),
\end{equation*}  
have Bhattacharyya value less than $\frac{P_{\rm e}}{\tilde{N}}$. Therefore, the number of channels at  
level $n_0+n_1$ that are branched out from $W_{N_0}^{(i)}$ and have Bhattacharyya value less than $\frac{P_{\rm e}}{\tilde{N}}$  
is at least 
\begin{equation*}
2^{n_1} \bigl (1- \frac{d}{6} - c_2Z(W_{N_0}^{(i)})(1+ \log \frac{1}{Z(W_{N_0}^{(i)})})  \bigr ).
\end{equation*}  
Hence,  the total number of sub-channels at level $n_0+n_1$ that are branched out from a sub-channel in $\mathcal{A}$ and have Bhattacharyya value less that $\frac{P_{\rm e}}{\tilde{N}}$ is at least
\begin{equation} \label{lowerB1} 
2^{n_1}\sum_{i \in \mathcal{A}} \bigl (1-\frac{d}{6} - c_2Z(W_{N_0}^{(i)})(1+ \log \frac{1}{Z(W_{N_0}^{(i)})})  \bigr ).
\end{equation}
We can further write
\begin{align*}
\eqref{lowerB1} = 
2^{n_1} (1-\frac{d}{6}) | \mathcal{A} |  - c_2 2^{n_1}\sum_{i \in \mathcal{A}}  Z(W_{N_0}^{(i)})(1+ \log \frac{1}{Z(W_{N_0}^{(i)})}) .
\end{align*}
Now, by using \eqref{pZ<nim} and \eqref{1Bdef} we have $|\mathcal{A}| \geq 2^{n_0} (R+\frac 23 d)$, and hence \eqref{lowerB1} can be lower bounded by
 \begin{equation}
 2^{n_0+n_1}   \bigl( \! (R + \frac 23 d)(1 - \frac{d}{6} ) - c_2 2^{-n_0}\!\! \sum_{i \in \mathcal{A}} \!  Z(W_{N_0}^{(i)}) (1+ \log \frac{1}{Z(W_{N_0}^{(i)})}) \! \bigr). \label{lowerB11}
\end{equation}
 We further have
\begin{align*}
&c_2 2^{-n_0} \sum_{i \in \mathcal{A}}  Z(W_{N_0}^{(i)})(1+ \log \frac{1}{Z(W_{N_0}^{(i)})}) \\
& \stackrel{(a)}{\leq}  2c_2 2^{-n_0} \sum_{i \in \mathcal{A}}  Z(W_{N_0}^{(i)})\log \frac{1}{Z(W_{N_0}^{(i)})} \\
& \stackrel{\text{Lemma~\ref{1aux3}}}{\leq}  2c_2 c_3 2^{-n_0} \sum_{i \in \mathcal{A}} (Z(W_{N_0}^{(i)}) (1-Z(W_{N_0}^{(i)})))^\alpha \\
& \leq 2c_2 c_3 \mathbb{E}[(Z_{n_0} (1-Z_{n_0}))^\alpha] \\
&\stackrel{\eqref{speed_up}}{\leq} 2c_2 c_3 \beta 2^{-n_0 \rho}\\
& \stackrel{\eqref{1n_0}}{\leq} \frac{d}{3},
\end{align*}    
where (a) follows from the fact that for $x \leq \frac 12$ we have $1 + \log \frac{1}{x} \leq 2 \log \frac{1}{x}$.
Therefore, the expression \eqref{lowerB11} (and hence \eqref{lowerB1}) is lower-bounded by
\begin{equation*}
2^{n_0+n_1} \bigl (  (R + \frac {2}{3} d) (1-\frac{d}{6}) - \frac{d}{3} \bigr ) \geq 2^{n_0+n_1} R = \tilde{N} R.
\end{equation*} 
Hence, the relation \eqref{1B1card} is proved and 
a block-length of size $\tilde{N}$ is sufficient to achieve a rate $R$ and error at most $P_{\rm e}$. It is now easy to see that $\log \tilde{N} = n_0 + n_1$ has the form of \eqref{1logNbound}.

\section{Conclusion} \label{conclusion}
Let us briefly summarize our main results and discuss some interesting avenues for
future research. 

We have considered the tradeoff between the rate and the block-length
for a fixed error probability when we use polar codes and the
successive cancellation (SC) decoder.  For a BMS channel $W$,
consider the setting where we require the error probability (measured
by the sum of the Bhattacharyya parameters) to be a fixed value
$P_{\rm e} > 0$. We have shown that in this setting the block-length $N$
scales in terms of the rate $R < I(W)$ as $N \geq
\frac{\alpha}{(I(W)-R)^{\underline{\mu}}}$, where $\alpha$ is a
positive constant that depends on $P_{\rm e}$ and $I(W)$, and
$\underline{\mu}  = 3.579$.  In other words, the required block-length
$N$ is at least $\Theta \bigl (\frac{1}{(I(W)-R)^{\underline{\mu}}}
\bigr)$.  A comparison with \eqref{min_block} indicates that polar
codes require a larger block-length compared to the best possible
codes (for which $\mu=2$). This provides an analytical explanation
for the rather long blocklenghts which are required in numerical
experiments involving polar codes.

In the same setting, we have also derived an upper bound on the
required blocklenght by showing that $N \leq
\frac{\beta}{(I(W)-R)^{\overline{\mu}}}$, where $\beta$ is a constant
that depends on $P_{\rm e}$ and $I(W)$, and $\overline{\mu}=6$.  In
other words, the required block-length is at most   $\Theta \bigl
(\frac{1}{(I(W)-R)^{\overline{\mu}}} \bigr)$.

We conjecture that the value of $\underline{\mu}$ can be increased
up to $\underline{\mu} = 3.627$ (the corresponding parameter for
the BEC).  In the same vain, the value of $\overline{\mu}$ can be
decreased below $\overline{\mu}  = 6$ by searching for better
candidates for the function $g(\cdot)$ with a smaller $L_g$ (see
\eqref{L_g}). Indeed, in a follow up work \cite{GB13}, such functions
are constructed by carefully evolving a suitable sequence of
candidates $g_m(\cdot)$ through the various polarization levels
$m$.  In this way, a new scaling bound with $\overline{\mu} = 5.77$
is obtained.   

In view of our results, perhaps the most important open question,
both from the theoretical as well as the practical side, is to
improve the finite-length performance of these codes.  We can
approach this problem from two perspectives: (i) by devising better
decoding algorithms and (ii) by changing the construction of polar
codes (e.g., by concatenating them with other codes, use other
polarizing kernels, etc).  In any attempt to improve the finite-length
performance, one main objective should be to improve the scaling
exponent (or  the speed of polarization).

In \cite{TV11}, the authors  combine both of these perspectives and
provide experimental evidence that the short-length performance of
polar codes can be improved considerably. More precisely, a
successive-cancellation \textit{list} decoder (SCL) is proposed in
\cite{TV11} to boost the performance of the SC decoder to that of
the MAP decoder. However, even under MAP decoding the performance of polar codes is
still not competitive.  Hence, by a simple concatenation  with
a very high-rate code, the MAP performance is improved to a
great extent.  The main issue of the successive cancellation list
decoder is its memory consumption which scales linearly with the
list-size. There are by now various other techniques to improve the
finite-length performance of polar codes. For a partial list see
\cite{SH}-\cite{MHU14}. It is also an interesting open question to
find out how the scaling exponent of the coding method of \cite{TV11}
changes with the list-size parameter. For a fixed finite list-size,
it is proven in \cite{MHU13} that the scaling exponent does not
change compared to original polar codes when we use the MAP decoder.  We believe that the methods
developed in this paper can be useful in this regard.

Another approach is to consider polar codes with general $\ell \times \ell$ kernels with the hope that polar codes with larger kernels might  have a better finite-length behavior. The related discussions in \cite[Chapter 1]{hamedthesis} support the fact that when $\ell$ grows large, for almost any  kernel, the scaling exponent ($\mu$) of the associated polar code tends to $\frac 12$.
Recall from  \eqref{min_block} that the optimal value of $\mu$ over all the codes is $\frac 12$, and for  polar codes (with $\ell=2$)  the scaling exponent is at most $\mu=\frac{1}{3.6} \approx 0.27$.  
We keep in mind that, in general, the decoding complexity of (extended) 
polar codes is $O(2^\ell N \log N)$, where $N$ is the block-length.
An interesting question here is to find suitable $\ell \times \ell$ kernels with a better scaling exponent 
than the $\ell =2$ case, as well as a reasonable complexity.

Finally, let us note that all these scaling results are in principle extendable to further applications of 
polarization theory and polar codes in various other scenarios (see e.g. \cite{GB13}).

\section*{Acknowledgment}
The authors wish to thank Erdal Ar\i kan, Alexander Barg, Mani Bastani-Parizi, Ali Goli, Marco Mondelli, and Emre Telatar for their
valuable comments on this topic.


\begin{thebibliography}{1}
\providecommand{\url}[1]{#1}
\csname url@rmstyle\endcsname
\providecommand{\newblock}{\relax}
\providecommand{\bibinfo}[2]{#2}
\providecommand\BIBentrySTDinterwordspacing{\spaceskip=0pt\relax}
\providecommand\BIBentryALTinterwordstretchfactor{4}
\providecommand\BIBentryALTinterwordspacing{\spaceskip=\fontdimen2\font plus
\BIBentryALTinterwordstretchfactor\fontdimen3\font minus
  \fontdimen4\font\relax}
\providecommand\BIBforeignlanguage[2]{{%
\expandafter\ifx\csname l@#1\endcsname\relax
\typeout{** WARNING: IEEEtran.bst: No hyphenation pattern has been}%
\typeout{** loaded for the language `#1'. Using the pattern for}%
\typeout{** the default language instead.}%
\else
\language=\csname l@#1\endcsname
\fi
#2}}

\bibitem{Ari09}
E.~{Ar\i kan}, ``Channel polarization: A method for constructing
  capacity-achieving codes for symmetric binary-input memoryless channels,''
  \emph{IEEE Transactions on Information Theory}, vol.~55 (7), pp.
  3051--3073, 2009.
  
\bibitem{Gal}
R.G. Gallager, ``A simple derivation of the coding theorem and some applications'', 
\emph{IEEE Transactions on Information Theory}, vol.~11, no.~1, pp. 3-18, 1965.



\bibitem{AT09}
{E. Ar\i kan and E. Telatar}, ``On the rate of channel polarization,''
in {\em proc. ISIT}, Seoul, South Korea, pp.1493-1495, 2009.

\bibitem{HMTU}
{S. H. Hassani, R. Mori, T. Tanaka and R. Urbanke}, ``Rate dependent analysis of the asymptotic behavior of channel polarization'', 
 \emph{IEEE Transactions on Information Theory}, vol.~59 (4) pp.
  2267--2276, 2013.


\bibitem{Dob}
R. L. Dobrushin, ``Mathematical problems in the Shannon theory of optimal coding
of information", \emph{in proc. 4th Berkeley Symp. Mathematics, Statistics, and Probability},
vol. 1, pp. 211-252, 1961.



\bibitem{Str}
V. Strassen, ``Asymptotische absch atzungen in Shannon â informationstheorie", \emph{in
Trans. 3d Prague Conf. Inf. Theory}, Prague, pp. 689-723, 1962.


\bibitem{PPV} Y. Polyanskiy, H. V. Poor, and S. Verdu, ``A channel coding rate in the finite block-length regime'', 
\emph{IEEE Transactions on Information Theory},  56 (5), pp. 2307-2359, 2010.

\bibitem{Kor09thesis}
S.~B. Korada, ``Polar codes for channel and source coding,'' Ph.D.
  dissertation, EPFL, Lausanne, Switzerland, 2009.


\bibitem{MT09}
{R. Mori and T. Tanaka}, ``Performance and construction of polar codes on symmetric binary-input memoryless channels'', 
 in {\em proc. ISIT}, Seoul, South Korea, pp.1496-1500, 2009.

\bibitem{HKU09}
{S. H. Hassani, S. B. Korada  and R. Urbanke},``The compound capacity of polar codes'', in {\em proc. 47th Annual Allerton Conference on Communication, Control, and Computing (Allerton)}, pp.16-21, 2009.
 



\bibitem{KMTU10}
{S. B. Korada, A. Montanari, E. Telatar and R. Urbanke }, ``An empirical scaling law for polar codes'', 
 in {\em proc. ISIT}, Texas, USA, pp.884-888, 2010.

	
\bibitem{HKU10}
{S. H. Hassani, K. Alishahi and R. Urbanke}, ``On the scaling of polar codes: II. The behavior of un-polarized
channels'', in {\em proc. ISIT}, Texas, USA, pp.879-883, 2010.


\bibitem{goli}
{A. Goli, S. H. Hassani, and R. Urbanke}, ``Universal bounds on the scaling behavior of polar codes'',
in {\em proc. ISIT}, Boston, USA, pp.1957-1961, 2012.



\bibitem{RiU08}
T.~Richardson and R.~Urbanke, \emph{Modern Coding Theory}.\hskip 1em plus 0.5em
  minus 0.4em\relax Cambridge University Press, 2008.
  
\bibitem{PF}
Chris Godsil and Gordon Royle, \emph{Algebraic Graph Theory}.\hskip 1em plus 0.5em
  minus 0.4em\relax Springer, 2001.  

  \bibitem{TV10}
{I. Tal and A. Vardy}, ``How to construct polar codes,''
\emph{IEEE Transactions on Information Theory},  59 (10), pp. 6562--6582, 2013.



\bibitem{RHTT}
{R. Pedarsani, H. Hassani, I. Tal and E. Telatar}, ``On the construction of polar codes,''
in {\em proc. ISIT}, St. Petersberg, Russia, pp. 11--15, 2011.

\bibitem{sturm}
Q. I. Rahman, G. Schmeisser, ``Analytic theory of polynomials,'' London Mathematical Society Monographs, new Series 26.  Oxford: Oxford University Press, 2002.



  \bibitem{mani}
{M. Bastani Parizi and E. Telatar}, ``On Correlation Between Polarized BECs,''
[online] Available: {\tt arXiv:1301.5536 [cs.IT]}. 


  \bibitem{patrick-venkat}
{V. Guruswami and	P. Xia}, ``Polar codes: Speed of polarization and polynomial gap to capacity,''. 
in {\em Electronic Colloquium on Computational Complexity},  Report No. 50, 2013.


\bibitem{hamedthesis}
S. H. Hassani, ``Polarization and spatial coupling: Two techniques to boost performance,'' Ph.D.
  dissertation, EPFL, Lausanne, Switzerland, 2013.


\bibitem{GB13}
{D. Goldin and D. Burshtein}, ``Improved Bounds on the Finite Length Scaling of Polar Codes,''
[online] Available: {\tt  arXiv:1307.5510 [cs.IT]}.


\bibitem{TV11}
{I. Tal and A. Vardy}, ``List decoding of polar codes,''
[online] Available: {\tt  arXiv:1206.0050 [cs.IT]}.

\bibitem{SH}
{M. Seidl and J. B. Huber}, ``Improving successive cancellation decoding of polar codes by usage of inner block codes",
in {\em proc. of International Symposium on Turbo Codes and Iterative Information Processing}, France, 2010.


\bibitem{LST}
{B. Li, H. Shen, and D. Tse}, ``An adaptive successive cancellation list 
decoder for polar codes with cyclic redundancy check", in {\em IEEE Comm. 
Letters}, vol. 16, pp. 2044--2047,  2012. 



\bibitem{HELK}
{H. Mahdavifar, M. El-Khamy, J. Lee, and I. Kang}, ``Performance limits and practical decoding of interleaved Reed-Solomon polar concatenated codes'',
[online] Available: {\tt  arXiv:1308.1144 [cs.IT]}.


\bibitem{TM} 
{P. Trifonov and V. Miloslavskaya}, ``Polar codes with dynamic frozen symbols and their decoding by directed search", 
[online] Available: {\tt  arXiv:1307.2352 [cs.IT]}.

\bibitem{NCL}
{K. Niu, K. Chen, and J. Lin}, ``Improved successive cancellation decoding of polar codes", 
in {\em IEEE Transactions on Communications}, 61(8)(16), pp. 3100--3107, 2013.

\bibitem{MHU14}
{M. Mondelli, S. H. Hassani, and R. Urbanke},
``From polar to Reed-Muller codes: A technique to improve the finite-length performance",
[online] Available: {\tt  arXiv:1304.5220 [cs.IT]}.


\bibitem{MHU13}
{M. Mondelli, S. H. Hassani, and R. Urbanke},
``Scaling exponent of list decoders with applications to polar codes",
[online] Available: {\tt  arXiv:1401.3127 [cs.IT]}.





\end{thebibliography}

\begin{appendices}

\section{Proofs} \label{proofs}

\subsubsection{Proof of Lemma~\ref{lim:equal}}
The proof of \eqref{33} is an easy application of the Markov inequality:
We have
\begin{equation} \label{f_log_g}
 { \rm{Pr}}(Z_n \in [a,b]) \leq { \rm{Pr}}(f(Z_n) \geq \min_{z \in [a,b]} f(z)) \leq  \frac{\mathbb{E}[f(Z_n)]}{\displaystyle \min_{z \in [a,b]} f(z)},
\end{equation}
and \eqref{33} follows by applying $\frac{1}{n} \log(\cdot)$ to \eqref{f_log_g}.

To prove \eqref{32}, we define 
 sequences $\{x_n\}_{n \geq 1}$ and $\{y_n\}_{n \geq 1}$ as 
 \begin{align}
  &  \label{a_n} x_n=2^{-n},\\
  &  \label{b_n} y_n=1-2^{-n}.
 \end{align}
We start by noting that
\begin{align*}
 \mathbb{E} (Z_n(1-Z_n)) & \leq   
\sum_{i=1}^n 2^{-i} \text{Pr} (Z_n \in [x_{i+1},x_i ]) \\
& \;\;\; +  \sum_{i=1}^n 2^{-i} \text{Pr} (Z_n \in [y_{i},y_{i+1}])\\
& \;\;\;+ 2^{-n}.
\end{align*}
As a result, there exists an index $j \in \{1, \cdots,n\}$  such that at least one of the following cases 
occurs:
\begin{align}
& \label{casex}\mathbb{E} [Z_n(1-Z_n)] \leq 2n \bigl [2^{-j} \text{Pr} (Z_n \in [x_{j+1},x_{j}])  \bigr] +2^{-n}, \\
& \text{or} \nonumber \\
& \label{casey}\mathbb{E} [Z_n(1-Z_n)] \leq  2n \bigl[ 2^{-j}\text{Pr} (Z_n \in [y_{j},y_{j+1}])  \bigr]+2^{-n}.
\end{align}
We show that in each of these cases the statement of the lemma holds. 
Note further that because of the symmetry of $Z_n$ we can write 
\begin{align*}
 \text{Pr} (Z_n \in [y_{j+1},y_{j}] \mid Z_0 =z)  = \text{Pr} (Z_n \in [x_{j+1},x_{j}] \mid Z_0=1-z), 
\end{align*}
Hence, without loss of generality we can assume that \eqref{casex} holds.
The proof consists of two parts:

We first assume that $a=1-b=\frac 14$ and prove \eqref{32} for this choice of $a,b$. 
\begin{lemma} \label{lem_clain_i}
For any $ j \in \{1,\cdots, n\}$ we have
\begin{equation} \label{claim_i}
2^{-j} \text{Pr}(Z_n \in [x_{j+1},x_j]) 
\leq (n+1)\text{Pr}(Z_{n} \in [\frac 14, \frac 34]) + 2^{-n}.
\end{equation}  
\end{lemma}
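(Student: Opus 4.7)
The plan is to prove the bound by induction on $n$, using the explicit BEC recursion $Z_n = Z_{n-1}^2$ (with probability $1/2$) or $Z_n = 2Z_{n-1}-Z_{n-1}^2$ (with probability $1/2$). For $j=1$ the inclusion $[x_2,x_1]=[1/4,1/2]\subset[1/4,3/4]$ gives the bound at once, and for $j=n$ the trivial estimate ${\rm Pr}(Z_n\in[x_{j+1},x_j])\leq 1$ is absorbed into the additive $2^{-n}$ term.

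For the inductive step at level $j\in\{2,\ldots,n-1\}$, I would decompose
\begin{equation*}
{\rm Pr}(Z_n \in [x_{j+1}, x_j]) = \tfrac{1}{2}{\rm Pr}\bigl(Z_{n-1}\in[2^{-(j+1)/2}, 2^{-j/2}]\bigr) + \tfrac{1}{2}{\rm Pr}\bigl(Z_{n-1}\in[1-\sqrt{1-x_{j+1}},\,1-\sqrt{1-x_j}]\bigr).
\end{equation*}
The right-branch interval lies in $[x_{j+2}, x_j]$ and straddles at most the two dyadic levels $j$ and $j+1$, while the squaring interval is contained in the single level $[x_{\lfloor j/2\rfloor+1}, x_{\lfloor j/2\rfloor}]$. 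Applying the inductive hypothesis at levels $\lfloor j/2\rfloor$, $j$, and $j+1$ and multiplying through by $2^{-j}$, the squaring contribution carries a geometric factor $2^{\lfloor j/2\rfloor-j}=O(2^{-j/2})$, so after the $2^{-j}$ normalization it is negligible in the main term and feeds only into the error; the right-branch contribution supplies the inductive increment. The conversion between ${\rm Pr}(Z_{n-1}\in[1/4,3/4])$ (coming from the hypothesis) and ${\rm Pr}(Z_n\in[1/4,3/4])$ (needed for the conclusion) is supplied by the transition identity
\begin{equation*}
{\rm Pr}(Z_n\in[1/4,3/4]) = \tfrac{1}{2}{\rm Pr}\bigl(Z_{n-1}\in[1-\sqrt{3}/2,\sqrt{3}/2]\bigr) \geq \tfrac{1}{2}{\rm Pr}(Z_{n-1}\in[1/4,3/4]),
\end{equation*}
so that ${\rm Pr}(Z_{n-1}\in[1/4,3/4])\leq 2\,{\rm Pr}(Z_n\in[1/4,3/4])$.

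The main obstacle will be tightening the constants so that the final coefficient is exactly $n+1$ rather than a larger multiple: a straightforward bookkeeping following the outline above produces a coefficient of order $\tfrac{7n}{2}$. Achieving the sharp constant requires a finer accounting of where the right-branch image actually lies: since $1-\sqrt{1-x_j}$ is strictly below $x_j$ (and in fact approaches $x_{j+1}$ for large $j$ via $1-\sqrt{1-x}=\tfrac{x}{1+\sqrt{1-x}}$), the overlap of $[1-\sqrt{1-x_{j+1}}, 1-\sqrt{1-x_j}]$ with $[x_{j+1}, x_j]$ is confined to a controllably small subinterval. Splitting the right-branch probability according to its overlap with $[x_{j+2},x_{j+1}]$ and with $[x_{j+1},x_j]$ separately, and exploiting the $1/2$ prefactor on each branch of the recursion, should let the induction close with the sharp coefficient $(n+1)$ and additive error $2^{-n}$.
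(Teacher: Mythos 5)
Your plan is a genuinely different route from the paper's, but as an induction on $n$ it does not close, and the obstacle is structural rather than a matter of tightening constants. Write $P_m=\text{Pr}(Z_m\in[\tfrac14,\tfrac34])$ and suppose, inductively, that $2^{-j'}\text{Pr}(Z_{n-1}\in[x_{j'+1},x_{j'}])\leq n\,P_{n-1}+2^{-(n-1)}$ for all admissible $j'$. Carrying out your decomposition: the squaring interval lies in one dyadic window at level $\lfloor j/2\rfloor$, contributing $\tfrac12\cdot 2^{\lfloor j/2\rfloor-j}\leq\tfrac14$ times the inductive bound; the right-branch interval straddles levels $j$ and $j+1$, contributing $\tfrac12\cdot 2^{-j}\cdot(2^{j+1}+2^{j})=\tfrac32$ times the inductive bound. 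Together
\[
2^{-j}\text{Pr}(Z_n\in[x_{j+1},x_j])\;\leq\;\tfrac{7}{4}\bigl(n\,P_{n-1}+2^{-(n-1)}\bigr)\;\leq\;\tfrac{7}{2}\,n\,P_n+\tfrac{7}{2}\,2^{-n},
\]
after your conversion $P_{n-1}\leq 2P_n$. To close the induction you would need the coefficients $(n+1,1)$ on $(P_n,2^{-n})$ to dominate $(\tfrac72 n,\tfrac72)$, which fails in both slots; even the additive error term inflates by a factor $\tfrac74$ per step. If you instead carry undetermined coefficients $(c_n,d_n)$ the recursion forces $c_n\geq\tfrac72 c_{n-1}$ and $d_n\geq\tfrac74 d_{n-1}$, so the induction only yields an \emph{exponential} coefficient in $n$. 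Your ``coefficient of order $7n/2$'' is the output of a single peeling applied to a hypothesis it fails to reproduce, not the output of a closed induction. The refinement you sketch at the end — exploiting that the right-branch image overlaps $[x_{j+1},x_j]$ only in a thin subinterval near $x_{j+1}$ — also cannot rescue this: the law of $Z_{n-1}$ is a sum of $2^{n-1}$ atoms whose locations depend on $Z_0=z$, and the lemma is claimed uniformly in $z$, so nothing prevents that thin subinterval from carrying the entire mass of $\text{Pr}(Z_{n-1}\in[x_{j+1},x_j])$.

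The deeper reason the one-step induction multiplies constants is that the branch $z\mapsto 2z-z^2$ keeps the dyadic level essentially fixed (it maps $[x_{j+1},x_j]$ back into roughly $[x_{j+2},x_j]$), so a trajectory can dwell near level $j$ for an arbitrarily long run of steps, and a per-step argument pays its overhead once for each of those steps. The paper sidesteps this entirely with a global path-counting argument. Its Step 1 fixes the horizon $m$ and injects each trajectory $\omega_m$ landing in the low region $[x_{2j+1},x_j]$ into a trajectory landing in $[x_j,\tfrac34]$ by flipping one carefully chosen bit $b_k$ to $1$ and setting all later bits to $0$; the factor $m$ in $\text{Pr}(Z_m\in[x_{2j+1},x_j])\leq m\,\text{Pr}(Z_m\in[x_j,\tfrac34])+2^{-m}$ comes from the $m$ possible flip positions, not from compounding anything over time. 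Step 2 then conditions $j$ time-steps back (on $Z_{n-j}\in[x_{2j+1},x_j^{2^{-j}}]$), applies Step 1 at horizon $n-j$, and uses two further explicit path constructions to push a $[x_j,\tfrac34]$ event at time $n-j$ forward into a $[\tfrac14,\tfrac34]$ event at time $n$ at a cost of exactly $2^{-j}$ in probability. If you want to salvage an inductive flavor, the natural unit to peel off is a whole run of consecutive identical bits (which is what the paper's $A_k$/$B$ injection implicitly captures), not a single time step.
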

The proof of this lemma will appear shortly. But before that, 
we note that by using the result of this lemma and \eqref{casex} we obtain
\begin{align*} 
\mathbb{E} (Z_n(1-Z_n)) \leq 2n(n+1) \bigl[\text{Pr}(Z_n \in [\frac 14, \frac 34]) \bigr] + (2n+1)2^{-n} ,
\end{align*}
and as a result, by taking $\frac{1}{n} \log(\cdot)$ from both sides, \eqref{32} is 
proved for $a=1-b=\frac 14$.  

Now, for other choices of $a,b \in (0,1)$ s.t. $\sqrt{a} \leq1-\sqrt{1- b}$ we can proceed as follows.
Let us first recall the definition of the maps $t_0,t_1$ from \eqref{T} as well as the maps $\phi_{\omega_n}$ from Definition~\ref{phi}.
Also, let $p_n(z,a,b)$ be defined as in \eqref{p_n}. 
 We have
\begin{align*} 
& p_{n+1}(z,a,b) =
  \sum_{\omega_{n+1} \in \Omega_{n+1}} \frac{1}{2^{n+1}} \mathbbm{1}_{\{z \in \phi_{\omega_{n+1}} ^{-1} [a,b]\}}
\\&=\sum_{\omega_n \in \Omega_n} \frac{1}{2^n} \frac{\mathbbm{1}_{\{z \in \phi_{\omega_n} ^{-1} [t_0^{-1}(a),t_0^{-1}(b)]\}}+\mathbbm{1}_{\{z \in \phi_{\omega_n} ^{-1} [t_1^{-1}(a), t_1^{-1}(b)]\}}}{2}\\
&= \frac 12 \bigl(p_n(z,t_0^{-1}(a), t_0^{-1}(b))+p_n(z,t_1^{-1}(a), t_1^{-1}(b)) \bigr).
\end{align*}
It is easy to see that if $\sqrt{a}\leq 1-\sqrt{1-b}$, then 
\begin{equation*}
[t_0^{-1}(a),t_1^{-1}(b)]  \subseteq [t_0^{-1}(a), t_0^{-1}(b)] \cup [t_1^{-1}(a), t_1^{-1}(b)],
\end{equation*}
and hence,
\begin{equation*}
2 p_{n+1} (z,a,b) \geq p_n(z,t_0^{-1}(a),t_1^{-1}(b)).   
\end{equation*}
Continuing this way, we can show that for $m \in \mathbb{N}$
\begin{multline} \label{1ineq_m}
2^{m} p_{n+m} (z,a,b) \\
\geq p_n(z,\overbrace{t_0^{-1} \circ \cdots \circ t_0^{-1}}^{m \text{ times}} (a),
\overbrace{t_1^{-1} \circ \cdots \circ t_1^{-1}}^{m \text{ times}} (b)).   
\end{multline}
As $m $ grows large, we have 
\begin{align*}
& \overbrace{t_0^{-1} \circ \cdots \circ t_0^{-1}}^{m \text{ times}} (a) \to 0,\\
& \overbrace{t_1^{-1} \circ \cdots \circ t_1^{-1}}^{m \text{ times}} (b) \to 1.
\end{align*}
Therefore, by \eqref{1ineq_m} there exists a positive integer $m_0 \in \naturals$ that only depends on $a,b$ and for $n \in \mathbb{N}$ and $z \in [0,1]$
\begin{equation*}
2^{m_0} p_{n+m_0}(z,a,b) \geq p_n(z, \frac 14,\frac 34 ). 
\end{equation*}
The proof of \eqref{32} now follows from this relation together with \eqref{claim_i} and the result of Lemma~\ref{a_bec}. It remains to prove Lemma~\ref{lem_clain_i}.

\emph{Proof of Lemma~\ref{lem_clain_i}:}  Consider the relation \eqref{claim_i} for $ 1 \leq j \leq n$. If $j=1$, 
then there is nothing to prove. Hence, in the following we assume that $2 \leq j \leq n$. 
We  prove that for any fixed $j$, such that  $ 2 \leq j \leq n$, the claim of \eqref{claim_i} holds true.  
So let us fix the index $j$ and prove \eqref{claim_i}  for any value of $n \in \mathbb{N}$.
The proof consists of two steps.

\emph{Step 1:}  We first show that $ \forall m \in \mathbb{N}$,
\begin{equation} \label{claim_m}
 {\rm{Pr}}(Z_m \in [x_{2j+1}, x_j])
 \leq m \text{Pr}(Z_m \in [x_j, \frac 34]) + \frac{1}{2^m}.
\end{equation}
To prove \eqref{claim_m}, fix $m \in \mathbb{N}$ and define the sets $A$ and $B$ as
\begin{align*}
 &A= \{ (b_1, \cdots,b_m)  \in \Omega_m : t_{b_m} \circ  \cdots \circ t_{b_1}(z) \in [x_{2j+1}, x_j] \}.\\
 &B= \{ (b_1, \cdots,b_m) \in \Omega_m :  t_{b_m} \circ \cdots \circ t_{b_1}(z) \in [x_j, \frac 34] \}.
\end{align*}
 In words, $A$ is the set of all the paths that start from $z=Z_0$ 
and end up in $[x_{2j+1},x_j]$ and $B$ is the set of paths that start from $z$
and end up in $[x_j, \frac 34]$. 
Consider the sets $A_{k}$, $k \in \{1,\cdots,m\}$, 
defined as
\begin{equation}
 A_{k}=\{(b_1, \cdots,b_m) \in A: b_k=1 ;    b_i=0 \,\,\,  \forall i >k \}.
\end{equation}
It is easy to see $A_k$'s are disjoint and 
\begin{equation} \label{A-partition}
 \mid A- \cup_{k} A_{k} \mid \leq 1.
\end{equation} 
Our aim is now to show that for  $k \in \{1,\cdots,m\}$, 
\begin{equation} \label{A_k<B}
\mid A_{k} \mid \leq \mid B \mid.  
\end{equation}
Before proving \eqref{A_k<B}, let us show how the relation \eqref{claim_m} follows 
from \eqref{A-partition} and \eqref{A_k<B}. We have
\begin{align*}
 {\rm{Pr}}(Z_m \in [x_{2j+1}, x_j]) 
 & = \frac{\mid A \mid}{2^m} \\
 &\stackrel{\eqref{A-partition}}{ \leq} \frac{\sum_{i=1}^m  \mid A_i \mid \, +1}{2^m}\\
 &\stackrel{\eqref{A_k<B}}{\leq} m \frac{ \mid B \mid}{2^m} + \frac{1}{2^m}\\
 &=m \text{Pr}(Z_m \in [x_j, \frac 34]) + \frac{1}{2^m}.
\end{align*}
It thus remain to prove \eqref{A_k<B} and Step 1 is over. We show that there exists a one-to-one correspondence 
between  $A_{k}$ and a subset of $B$. In other words, 
we claim that we can map each member of $A_{k}$ to a distinct member of $B$. In this way, the relation \eqref{A_k<B} is immediate. 
Consider $(b_1, \cdots, b_m) \in A_{k}$. We now construct a  
distinct member $(b'_1, \cdots, b'_m) \in B$ corresponding 
to $(b_1, \cdots, b_m)$. We first 
set $b'_i=b_i$ for $i < k$ and hence the uniqueness condition is fulfilled (i.e., the choice of $b'_i=b_i$ for $i < k$  guarantees that the mapping from $A_k$ to $B$ is an invertible mapping).
Consider the number $y$ defined as
\begin{equation} \label{x_def}
y=  \left\{
\begin{array}{lr}
z &  ; \text{if } k=1  ,\\
t_{b_{k-1}} \circ \cdots \circ t_{b_1}(z) &  ; \text{if } k>1.
\end{array} \right.
\end{equation}
Note that since $(b_1, \cdots, b_m) \in A_{k}$ we have
\begin{equation} \label{circ_k}
 t_{b_m} \circ \cdots \circ t_{b_{k}}(y) \in [x_{2j+1}, x_j].
\end{equation}
Now, note that as $(b_1, \cdots, b_m) \in A_{k}$, we have $b_k=1$ and $b_i=0$ for $i>k$. Thus, in this setting  \eqref{circ_k} becomes  
\begin{equation} \label{circ_k1}
 \overbrace{t_{0} \circ \cdots \circ t_{0}}^{m-k \text{ times}}(y^2) \in [x_{2j+1}, x_j].
\end{equation}
Hence,
\begin{align} \label{x-x_j}
x_{2j+1} \leq 1-(1-y^2)^{2^{m-k}} \leq  x_j.
\end{align}
From the left side of \eqref{x-x_j} and by using Bernoulli's inequality
\begin{align*}
1-(1-x)^{\beta} \leq \beta x, \text{ where } \beta \geq 1 \text{ and } x \in [0,1], 
\end{align*}
we obtain
\begin{equation}\label{x-down}
x_{2j+1} \leq 2^{m-k} y^2 \Rightarrow 2^{-j+ \frac{k-m-1}{2}} \leq y.
\end{equation} 
From the right side of \eqref{x-x_j}  we have
\begin{equation*}
\ln (1-x_j) \leq 2^{m-k} \ln (1-y^2),
\end{equation*}
and by using the inequality 
\begin{align*}
-x-\frac{x^2}{2} \leq \ln (1-x) \leq -x,  \text{ where } x \in (0,1),
\end{align*}
we obtain 
\begin{equation} \label{x-up}
y \leq 2^{\frac{-j}{2}+\frac{k-m+1}{2}}.
\end{equation}
Let us recall that we let $b'_i=b_i$ for $i<k$ (and this makes the mapping from $A_k$ to $B$ an invertible mapping). 
We now construct the remaining values $b'_{k}, \cdots, b'_m$ by the following algorithm:  
 Consider the number $y$ given in \eqref{x_def}. In the following,  we will also construct a sequence $y=y_{k-1}, y_k, y_{k+1}, \cdots, y_m$ such 
 that for $i \geq k$ we have $y_i= t_{b'_{i}} (y_{i-1})$.   Begin with 
the initial value $y_{k-1}=y$  and for $i\geq k$ recursively construct 
$b'_i$ and $y_i$ from $y_{i-1}$ by the following rule: if 
$t_{0}(y_{i-1}) \leq \frac 34$, then $b'_{i}=0$ and $y_{i}=t_0(y_{i-1})$, 
otherwise $b'_{i}=1$ and $y_{i}=t_1(y_{i-1})$. 
We show that by this construction the value of $y_m$ would always fall in the interval $[ x_j, \frac 34]$. 
In this regard, an important observation is that
for $i$ s.t. $k-1 \leq i \leq m$, once the value of $y_{i}$  
lies in the interval $[ x_j, \frac 34]$, then
 for all $i \leq t \leq m$ we have $y_t \in [ x_j, \frac 34]$ (this is clear from construction rule of $y_t$). 
Hence, we only need to show that by the above algorithm, the 
exists an index $i$, s.t. $k-1 \leq i \leq m$, and the 
value of $y_i$  lies inside the interval $[ x_j, \frac 34]$. On the one hand, observe that due 
to \eqref{x-up} and the fact that $j\geq 2$, we have $y\leq 2^{-\frac 12}<\frac 34$.
Thus, the value of
$y_i$ is definitely less than $\frac 34$ for $i\geq k$. If the value of $y_{k-1}$ is also greater than $x_{j}$ then we have nothing to prove. 
 Else, it might be the case that 
$ y <   x_j$. We now prove that in this case the algorithm moves in a way that the value of 
$y_m$ falls eventually in the desired region $[ x_j, \frac 34]$. To show this, a moment of thought reveals that this is equivalent to
showing that we always have 
\begin{equation} \label{show}
\overbrace{t_{0} \circ \cdots \circ t_0}^{m-k+1 \text{ times }} (y) = 1-(1-y)^{2^{m-k+1}}  \geq x_{j}.  
\end{equation}
In order to have \eqref{show} it is equivalent that
 \begin{equation*}
 2^{m-k+1} \ln(1-y) \leq \ln (1-x_j),
 \end{equation*}
and after some further simplification using the inequality $ -x-\frac{x^2}{2} \leq \ln (1-x) \leq -x$, we deduce that a sufficient condition to have \eqref{show} is
\begin{equation}
x_j \leq 2^{m-k} y \Rightarrow 2^{-j+k-m} \leq y. 
\end{equation} 
But this sufficient condition is certainly met by considering the inequality \eqref{x-down} and noting the fact that $-j+ \frac{k-m+1}{2} \geq -j+k-m$ (recall that $k \leq m$).
Hence, the claim in \eqref{A_k<B} is proved and as a result,  the claim in \eqref{claim_m} is true.

\emph{Step 2:} Firstly note that in order for $Z_n$ 
to be in the interval $[x_{j+1},x_j]$, the value of $Z_{n-j}$ should certainly
lie somewhere in the interval $[x_{2j+1},  x_j^{2^{-j}}]$.
As a result, we can write
\begin{align} \nonumber
& \text{Pr} (Z_{n} \in [x_{j+1}, x_j]) \\   \nonumber
&= \text{Pr} (Z_{n} \! \in [x_{j+1}, x_j] \mid Z_{n-j} \!  \in [x_{2j+1},x_{j}])\\  \nonumber
& \,\,\, \,\,\, \,\,\,\,\,\,\,\,\,\,\,\,  \,\,\, \,\,\, \,\,\,\,\,\,\,\,\,\,\,\,   \,\,\, \,\,\, \,\,\,\,\,\,\,\,\,\,\,\, 
\times \text{Pr}(Z_{n-j} \! \!  \in [x_{2j+1},  x_{j}]) \nonumber \\ 
&  \,\,\, \,\,\, + \text{Pr} (Z_{n} \! \in \! \! [x_{j+1}, x_j] \mid \! Z_{n-j}  \nonumber
\in \! ( x_{j},x_j^{2^{-j}} ] ) \\ 
&  \,\,\, \,\,\, \,\,\,\,\,\,\,\,\,\,\,\,  \,\,\, \,\,\, \,\,\,\,\,\,\,\,\,\,\,\,  \,\,\, \,\,\, \,\,\,\,\,\,\,\,\,\,\,\, 
\times \text{Pr}(Z_{n-j} \! \!  \in (x_{j}, x_j^{2^{-j}}]), \label{n-i1}
\end{align}
and by letting $m=n-j$ in relation \eqref{claim_m}, we can easily obtain
\begin{equation} \label{n-i2}
 \text{Pr}(Z_{n-j} \in [x_{2j+1},x_{j}]) 
\leq n \text{Pr}(Z_{n-j} \in [x_{j}, 
\frac 34]) + \frac{1}{2^{n-j}}.
\end{equation}
Thus, by combining \eqref{n-i1} and \eqref{n-i2}, we obtain
\begin{align} \nonumber
&\text{Pr} (Z_{n} \in [x_{j+1},x_j]) \\ \label{last-2}
& \leq n \text{Pr} (Z_{n-j} \in [ x_{j}, \frac 34]) + \text{Pr}(Z_{n-j} 
\in [ x_{j},  x_j^{2^{-j}}]) + \frac{1}{2^{n-j}}.
\end{align}
Finally, in order to conclude the proof of \eqref{claim_i} (for $j \in \{2,\cdots,n\}$), we prove the following relations: 
\begin{equation} \label{claim_l1}
2^{-j}\text{Pr}(Z_{n-j} \in [ x_j, \frac 34]) \leq 
\text{Pr} (Z_n \in [\frac 14, \frac 34] ),   
\end{equation}
and 
\begin{equation} \label{claim_l2}
2^{-j} \text{Pr}(Z_{n-j} \in [ x_{j},  x_j^{2^{-j}}])\leq 
\text{Pr} (Z_n \in [\frac 14, \frac 34]).
\end{equation}
It is easy to see that these two relations combined with \eqref{last-2} will result in \eqref{claim_i}.
Firstly, note that for $j=2$ the relations \eqref{claim_l1} and \eqref{claim_l2} are trivial. Also, for $j \geq 3$ because of the fact that  $x_j^{2^{-j}} \geq \frac 34$, 
then \eqref{claim_l1} will be a direct consequence of \eqref{claim_l2}, and hence it is enough to prove 
\eqref{claim_l2}.

To prove  \eqref{claim_l2}, we show that
\begin{equation} \label{sufficient1}
 \text{Pr}(Z_n \in [\frac 14, \frac 34] \mid Z_{n-j} \in [ x_j, x_j^{2^{-j}}]) \geq 2^{-j}, 
\end{equation}
and from this we can conclude \eqref{claim_l2} by writing
\begin{align*}
 & \text{Pr}(Z_n \in [\frac 14, \frac 34]) \\
 & \geq   \text{Pr}(Z_n \in [\frac 14, \frac 34] \mid Z_{n-j} \in [ x_j, x_j^{2^{-j}}]) \times 
  \text{Pr}( Z_{n-j} \in [ x_j, x_j^{2^{-j}}]) \\
  & \geq 2^{-j} \text{Pr}( Z_{n-j} \in [ x_j, x_j^{2^{-j}}]). 
\end{align*}
It thus remains to show \eqref{sufficient1}. A moment of thought reveals that \eqref{sufficient1} is an immediate consequence of 
the following statement:
For any value $y$ s.t. $y \in [ x_j, x_j^{2^{-j}}]$,
there exists a $j$-tuple $(b_1, \cdots,b_j)\in \Omega_j$ 
such that $t_{b_1} \circ \cdots \circ t_{b_j} (y) \in  [\frac 14, \frac 34]$. 
We show this last statemet by constructing the binary values $b_1, \cdots, b_j$ in terms of $y$ (we use a similar approach as in Step 1). 
Consider the following algorithm: start with $y_0=y$ and for $1 \leq i \leq j$, we
recursively construct $b_{i}$ from $y_{i-1}$ by the following rule: 
If $t_0(y_{i-1}) \leq \frac 34$, then $b_i=0$ and $y_i=t_0(y_{i-1})$. Otherwise, let $b_i=1$ 
and $y_i=t_1(y_{i-1})$. To show that this algorithm succeeds in the sense that $y_j \in [\frac 14, \frac 34]$, we first observe that
once the value of $y_i$  
lies in the interval $[ \frac 14, \frac 34]$ (for some $1\leq i \leq j$), then
 for all $i \leq t \leq j$ we have $y_t \in [\frac 14 , \frac 34]$. 
Hence, we only need to show that by the above algorithm, the 
exists an index $i$, s.t. $1 \leq i \leq j$, and the 
value of $y_i$  lies in the interval $[\frac 14, \frac 34]$. On the one hand, assume 
$y \in [ x_j, \frac 14)$. We can then write
\begin{align*}
{\overbrace{t_0 \circ  \cdots \circ t_0}^{j \text{ times}}} (y)& =  1-(1-y)^{2^{j}}\\
& \geq  1-(1-x_j)^{2^{j}} \\
& \geq  \frac 12,
\end{align*}
where the last steps follows from the fact that $x_j=2^{-j}$.  
On the other hand,
assume $y \in (\frac 34, x_j^{2^{-j}}]$. 
We can write
\begin{align*}
{\overbrace{t_1 \circ  \cdots \circ t_1}^{j \text{ times}}}(y) & \leq (x_j^{2^{-j}})^{2^j}\\
& = x_j < \frac 34.
\end{align*}
As a result, the above algorithm always succeeds and the lemma is proved for $a=1-b=\frac 14$. \QED

\subsubsection{ Proof of Lemma~\ref{threshold}}
Recall that for a realization $\omega = \{b_k\}_{k \in \mathbb{N}} \in \Omega$ we define $\omega_n = (b_1, \cdots,b_n)$. The maps $t_0$ and $t_1$, hence the maps $\phi_{\omega_n}$, are strictly increasing maps on $[0,1]$. Thus
$\phi_{\omega_n}(z) \rightarrow 0$ implies that $\phi_{\omega_n}(z') \rightarrow 0$ for $z'
\leq z$ and $\phi_{\omega_n}(z) \rightarrow 1$ implies that $\phi_{\omega_n}(z') \rightarrow
1$ for $z' \geq z$. Moreover, we know that for almost every $z\in (0,1)$,
$\lim_{n \to \infty} \phi_{\omega_n} (z)$ is either $0$ or $1$ for almost every realization $\{\phi_{\omega_n}\}_{n \in \mathbb{N}}$. Hence, it suffices to let
\begin{equation}\nonumber
 z_{\omega}^*=\inf \{ z: \phi_{\omega_n}(z)\rightarrow 1 \}.
\end{equation}
To prove the second part of the lemma, notice that
\begin{align*}
z &= {\rm Pr}(Z_{\infty}=1)\\
&= {\rm Pr}(\phi_{\omega_n}(z)\rightarrow 1 )\\
&= {\rm Pr}(\inf \{ z: \phi_{\omega_n}(z)\rightarrow 1 \} \leq z ) \\
&= {\rm Pr}(z_{\omega}^* < z).
\end{align*}
Which shows that $z_{\omega}^*$ is uniformly distributed on $[0,1]$.

\subsubsection{Proof of Lemma~\ref{log_measure}}
In order to compute $\lim_{n \rightarrow \infty} \mathbb{E}[ \frac{1}{n}
\log (\phi_{\omega_n} ^{-1}(b)- \phi_{\omega_n} ^{-1}(a))] $,  we first define the process $\{ \bar{Z}_n \}_{n \in \mathbb{N}}$ with $\ \bar{Z}_0=z \in [0,1]$ and 
\begin{equation} \label{bZ_n}
 \bar{Z}_{n+1} =\left\{ \begin{array}{cc} 
\sqrt{ \bar{Z}_n},&\text{ w.p. }\frac12,\\
1-\sqrt{1- \bar{Z}_n},&\text{ w.p. }\frac12.
\end{array}\right.
\end{equation}
We can think of $ \bar{Z}_n$ as the reverse stochastic
process of $ Z_n$. Equivalently, we can also define $ \bar{Z}_n$  via the inverse
maps $t_0 ^{-1}$, $t_1 ^{-1}$. Consider the sequence of i.i.d. symmetric
Bernoulli random variables $B_1, B_2, \cdots$ and define
$\bar{Z}_n = \psi_{\omega_n} (z)$ where $\omega_n \triangleq (b_1, \cdots,b_n) \in \Omega_n$ and
\begin{equation}
 \psi_{\omega_n} = t_{b_n} ^{-1} \circ t_{b_{n-1}} ^{-1} \circ \cdots \circ t_{b_1} ^{-1}.
\end{equation}
We show that
the Lebesgue measure (or the uniform probability measure) on
$[0,1]$, denoted by $\nu$, is the unique, hence ergodic, invariant measure for the
Markov process $\bar{Z}_n$.
To prove this result, first note that if $\bar{Z}_n$ is distributed according to the Lebesgue measure, then
\begin{align*}
 {\rm Pr}( \bar{Z}_{n+1} < x)&=\frac{1}{2}{\rm Pr}( \bar{Z}_{n} < t_0(x))+ \frac{1}{2}{\rm Pr}( \bar{Z}_{n} < t_1(x))\\
&= \frac{1}{2}x^2 + \frac{1}{2} (2x-x^2)=x.
\end{align*}
Thus, $\bar{Z}_{n+1}$ is also  distributed according to the Lebesgue measure and  this implies the invariance of the Lebesgue measure for $\bar{Z}_{n}$. 
In order to prove
the uniqueness, we will show that for any $z \in (0,1)$, $\bar{Z}_n$
converges weakly to a uniformly distributed random point in $[0,1]$, i.e.,
\begin{equation} \label{conv_Lebesgue}
 \bar{Z}_n= \psi_{\omega_n}(z) \stackrel{d} {\rightarrow}\nu.
\end{equation}
Note that with \eqref{conv_Lebesgue} the uniqueness of $\nu$ is proved since for any invariant measure
$\rho$ assuming $\bar{Z}_n$ is distributed according to $\rho$, we have 
\begin{equation}
 \rho(\cdot)= {\rm Pr} (\bar{Z}_n \in  \cdot ) = \int {\rm Pr} (\bar{Z}_n \in \cdot) \rho (dz) \stackrel{d} {\rightarrow} \nu(\cdot).
\end{equation}
To prove \eqref{conv_Lebesgue}, note that $\psi_{\omega_n}$  has the same
(probability) law as $\phi_{\omega_n} ^{-1}$ and we know that $\phi_{\omega_n} ^{-1}(z)
\rightarrow z_{\omega}^*$ almost surely and hence weakly. Also, $z_{\omega}^*$ is distributed
according to $\nu$, which proves \eqref{conv_Lebesgue}.
We are now ready to show that 
\begin{equation} \label{s0}
 \lim_{n \rightarrow \infty} \mathbb{E}[ \frac{1}{n} 
 \log ( \phi_{\omega_n} ^{-1} (b) - \phi_{\omega_n} ^{-1} (a) ) ] = \frac {1}{2 \ln 2} - 1.
\end{equation}
Using the mean-value theorem, we can write
\begin{equation} \label{s1}
 \psi_{\omega_n}(a)-\psi_{\omega_n} (b) = \psi'_{\omega_n} (c)(b-a),
\end{equation}
for some $c \in (a,b)$.
And by chain rule,
\begin{small}
\begin{align*}
\psi'_{\omega_n} (c)&= (t_{b_n} ^{-1} \circ t_{b_{n-1}} ^{-1} \circ \cdots \circ t_{b_1} ^{-1})'(c)\\
&={t_{b_1} ^{-1}}'(c) \times {t_{b_2} ^{-1}}'(t_{b_1} ^{-1}(c)) \times \cdots \times {t_{b_n} ^{-1}}'(t_{b_{n-1}} ^{-1} \circ  \cdots \circ t_{b_1} ^{-1}(c))\\
&={t_{b_1} ^{-1}}'(\psi_{\omega_0} (c)) \times {t_{b_2} ^{-1}}'(\psi_{\omega_1} (c)) \times \cdots \times {t_{b_n} ^{-1}}'(\psi_{\omega_{n-1}}(c))),
\end{align*}
\end{small}
and after applying  $\frac{1}{n}\log(\cdot)$ to both sides we obtain
\begin{equation}\label{ergodic}
 \frac{1}{n} \log (\psi '_{\omega_n} (c))=\frac{1}{n} \sum_{j=1} ^{n} \ln {t_{b_j} ^{-1}}' (\psi_{\omega_{j-1}}(c)).
\end{equation}
By the ergodic theorem, the last expression  converges almost surely to the expectation of $\log {t_{B_1} ^{-1}}'(U)$,
where $U$ is assumed to be distributed according to $\nu$. Hence, the asymptotic value of \eqref{ergodic} can be
 computed as
\begin{align} \nonumber
\mathbb{E}& [\log {t_{B_1} ^{-1}}'(U)] \\ \nonumber
&= \frac{1}{2} \int_{0} ^{1} \log (\sqrt{x})' dx +\frac{1}{2} \int_{0} ^{1} \log (1-\sqrt{1-x})' dx \\
&=\frac {1}{2 \ln 2} - 1. \label{s2}
\end{align}
The proof now follows as a result of \eqref{s0}, \eqref{s1}, \eqref{ergodic}, and \eqref{s2}.

\subsubsection{Proof of Lemma~\ref{gamma}}
The proof is by contradiction. Let us assume the contrary, i.e., we assume there exists $n\geq m$ s.t.,
\begin{equation} \label{contr}
\text{Pr}(H_n \leq \alpha 2^{-n \theta}) > I(W)-\beta2^{-n\theta}.
\end{equation}
 In the following, we show that with such an assumption we reach to a contradiction.  We have
 \begin{align}\nonumber
 &\mathbb{E}[H_n(1-H_n)] \\ \nonumber
 &= \mathbb{E}[H_n(1-H_n) \mid H_n \leq \alpha 2^{-n\theta}] \text{Pr}(H_n \leq \alpha 2^{-n\theta}) \\
 & \; \; \: \; +  \mathbb{E}[H_n(1-H_n) \mid H_n > \alpha 2^{-n\theta}] \text{Pr}(H_n > \alpha 2^{-n\theta}). \label{cont1}
 \end{align}
 It is now easy to see that 
 \begin{align*}
 \mathbb{E}[H_n(1-H_n) \mid H_n \leq \alpha 2^{-n\theta}]  \leq \alpha 2^{-n\theta},
\end{align*}
and since $\mathbb{E}[H_n(1-H_n)] \geq \gamma 2^{-n\theta}$, by using \eqref{cont1} we get
 \begin{equation} \label{cont2}
 \mathbb{E}[H_n(1-H_n) \mid H_n > \alpha 2^{-n\theta}] \text{Pr}(H_n > \alpha 2^{-n\theta}) \geq 2^{-n\theta}(\gamma-\alpha).
 \end{equation}
  We can further write
  \begin{align}\nonumber
 \mathbb{E}[(1-H_n)] &= \mathbb{E}[1-H_n \mid H_n \leq \alpha 2^{-n\theta}] \text{Pr}(H_n \leq \alpha 2^{-n\theta}) \\
 & \; \; \: \; +  \mathbb{E}[1-H_n \mid H_n > \alpha 2^{-n\theta}] \text{Pr}(H_n > \alpha 2^{-n\theta}), \label{cont3}
 \end{align}
 and noticing fact that $1-H_n \geq H_n (1-H_n)$ we can plug \eqref{cont2} in \eqref{cont3} to obtain
 \begin{align} 
 \mathbb{E}[(1-H_n)]  & \geq  \mathbb{E}[1-H_n \mid H_n \leq \alpha 2^{-n\theta}] \text{Pr}(H_n \leq \alpha 2^{-n\theta}) \nonumber \\ 
  & \;\;\; +2^{-n\theta}(\gamma-\alpha). \label{cont4}
 \end{align}
 We now continue by using \eqref{contr} in \eqref{cont4} to obtain
 \begin{align*}
 \mathbb{E}[(1-H_n)]  &> (1-\alpha2^{-n\theta})(I(W)-\beta2^{-n\theta})+ 2^{-n\theta}(\gamma-\alpha) \\
 &\geq I(W)+2^{-n \theta} (\gamma- \alpha (1+ I(W))-\beta),
 \end{align*} 
 and since $2\alpha+ \beta=\gamma$, we get $\mathbb{E}[1-H_n]>I(W)$. This is a contradiction
 since $H_n$ is a martingale and $\mathbb{E}[1-H_n]=I(W)$.

\section{Auxiliary Lemmas}
\begin{lemma} \label{1aux1}
Consider a channel $W$ with its Bhattacharyya process $Z_n=Z(W_n)$ and assume that for $n \in \mathbb{N}$
\begin{equation}\label{1auxlineq}
\mathbb{E}[(Z_n(1-Z_n))^\alpha]  \leq \beta 2^{-n \rho},
\end{equation}
where $\alpha, \beta, \rho$ are positive constants with $\alpha<1$. 
We then have for $n \in \mathbb{N}$
\begin{equation}
{\rm{Pr}} ( Z_n \leq \frac 12) \geq I(W) - c_1 2^{-n \rho}, 
\end{equation}
where $c_1$ is a positive constant that depends on $\alpha, \beta, \rho$.
\end{lemma}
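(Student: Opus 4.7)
The statement is equivalent to the upper bound $\Pr(Z_n > 1/2) \leq H(W) + c_1 2^{-n\rho}$, and my plan rests on the martingale identity $\mathbb{E}[H_n] = H(W)$ combined with the extremes-of-information-combining inequality $1 - H \leq \sqrt{1 - Z^2}$ from \eqref{bounds3}. The first step is the trivial decomposition
\begin{equation*}
\Pr(Z_n > 1/2) = \mathbb{E}[H_n \indicator{Z_n > 1/2}] + \mathbb{E}[(1-H_n)\indicator{Z_n > 1/2}] \leq H(W) + \mathbb{E}[(1-H_n)\indicator{Z_n > 1/2}],
\end{equation*}
obtained from $H_n \geq 0$, so the whole problem reduces to showing that the error term $\mathbb{E}[(1-H_n)\indicator{Z_n > 1/2}]$ decays at rate $2^{-n\rho}$.

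For the second step I would move from $H_n$ back to $Z_n$ via the pointwise chain $(1-H)^2 \leq 1-Z^2 \leq 2(1-Z)$ combined with the elementary inequality $1-Z \leq 2 Z(1-Z)$ valid on $\{Z \geq 1/2\}$; together these give $(1-H_n)\indicator{Z_n > 1/2} \leq 2\sqrt{Z_n(1-Z_n)}$ pointwise, hence $\mathbb{E}[(1-H_n)\indicator{Z_n > 1/2}] \leq 2\,\mathbb{E}\bigl[\sqrt{Z_n(1-Z_n)}\bigr]$. When $\alpha \leq 1/2$, the trivial inequality $\sqrt{x} \leq x^{\alpha}$ on $[0,1]$ converts the hypothesis into $\mathbb{E}[\sqrt{Z_n(1-Z_n)}] \leq \beta 2^{-n\rho}$ and the lemma follows with $c_1 = 2\beta$.

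I expect the main obstacle to arise in the regime $\alpha \in (1/2, 1)$, where $\sqrt{x}$ cannot be dominated pointwise by $x^\alpha$ on $[0, 1/4]$ and a naive Jensen step yields only the slower rate $\rho/(2\alpha)$. To recover the full rate $\rho$ I would replace the uniform bound by a splitting of $\{Z_n > 1/2\}$ into a middle region $\{1/2 < Z_n \leq 1-\epsilon_n\}$, handled by Markov applied to the hypothesis (with probability $O(\epsilon_n^{-\alpha}2^{-n\rho})$), and a tail region $\{Z_n > 1-\epsilon_n\}$, on which $(1-H_n) \leq \sqrt{2\epsilon_n}$ and the probability is controlled by the martingale identity $\mathbb{E}[H_n]=H(W)$ through the implication $Z_n > 1-\epsilon_n \Rightarrow H_n > 1-\sqrt{2\epsilon_n}$; tuning $\epsilon_n$ as a decreasing function of $n$ to balance the two contributions and absorbing the constants into $c_1$ should deliver the claimed decay.
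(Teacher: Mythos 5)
Your opening decomposition ${\rm Pr}(Z_n>\frac12)=\mathbb{E}[H_n\mathbbm{1}_{\{Z_n>1/2\}}]+\mathbb{E}[(1-H_n)\mathbbm{1}_{\{Z_n>1/2\}}]$ and the resulting reduction (via the martingale identity $\mathbb{E}[H_n]=H(W)$) to bounding $\mathbb{E}[(1-H_n)\mathbbm{1}_{\{Z_n>1/2\}}]$ is correct, and it is in fact the same quantity the paper's proof controls (written there as $N^{-1}\sum_{i\in\mathcal{A}^c}I(W_N^{(i)})$). Your handling of $\alpha\le\frac12$ via $(1-H_n)^2\le 1-Z_n^2\le 4Z_n(1-Z_n)$ on $\{Z_n>\frac12\}$ and $\sqrt{x}\le x^\alpha$ is also correct and, if that were the only regime needed, would give a noticeably shorter proof than the paper's. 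But the applications in the paper use $\alpha=\frac23$ in \eqref{univ-up0} and $\alpha=\frac34$ in \eqref{univ-up}, both in $(\frac12,1)$, and there your argument has a genuine gap.

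The gap is in the tail region. Your split gives a middle contribution $O(\epsilon_n^{-\alpha}2^{-n\rho})$ (Markov, using $Z_n(1-Z_n)\ge\epsilon_n/2$ on $\{\frac12<Z_n\le 1-\epsilon_n\}$) and a tail contribution $\sqrt{2\epsilon_n}\cdot{\rm Pr}(Z_n>1-\epsilon_n)$. The martingale identity only yields ${\rm Pr}(H_n>1-\sqrt{2\epsilon_n})\le H(W)/(1-\sqrt{2\epsilon_n})=\Theta(1)$, which carries no $n$-decay, so the tail is effectively $\Theta(\sqrt{\epsilon_n})$; optimizing $\epsilon_n$ gives at best rate $\rho/(2\alpha+1)$, and a dyadic layering using $1-H_n\le\sqrt{2(1-Z_n)}$ inside the middle region improves this only to $\rho/(2\alpha)$, still strictly less than $\rho$ when $\alpha>\frac12$. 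The obstruction is not technical: the generic inequality $(1-H(W))^2\le 1-Z(W)^2$ from \eqref{bounds3}, which underlies your pointwise bound, loses a square root precisely as $Z\to1$, and that is the regime that dominates $\mathbb{E}[(1-H_n)\mathbbm{1}_{\{Z_n>1/2\}}]$. The paper's proof replaces this step by applying Lemma~\ref{1aux2} to the process $Y_n=1-Z_n^2$ to derive the far sharper relation $I(W)\le c_2(1-Z(W)^2)\bigl(1+\log\frac{1}{1-Z(W)^2}\bigr)$, valid for every BMS channel and nearly linear (not square-root) near $Z\to1$, and then passes it through Lemma~\ref{1aux3} (which dominates $x\log\frac1x$ by $c_3(x(1-x))^\alpha$) to recover the full rate $\rho$ for every $\alpha\in(0,1)$. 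That sharper relation between $I(W)$ and $1-Z(W)^2$ — obtained from the polarization dynamics rather than from a pointwise information-combining bound — is the idea your proof is missing.
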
 
\begin{proof}
The proof consists of three steps. 
First, consider an arbitrary BMS  channel $W$ and let $Z_n=Z(W_n)$. 
Also, consider the process $Y_n = 1-Z_n^2$ . By using the relations \eqref{Zgen-} and \eqref{Zgen+}, it can easily be
checked that the process $Y_n$
has the form of \eqref{1genericX_n} and hence Lemma~\ref{1aux2} is applicable to $Y_n$. We thus have from \eqref{1aux2main}
that for $n \in \mathbb{N}$
\begin{equation*}
 \text{Pr}(Y_n > \frac 12) \leq c_2 Y_0 (1 + \log \frac{1}{ Y_0} ).
\end{equation*}
As a consequence 
\begin{align}
I(W) &= \lim_{n \to \infty}  \text{Pr}(Y_n > \frac 12) \nonumber\\
& \leq c_2 (1-Z(W)^2)(1 + \log \frac{1}{ 1-Z(W)^2} ). \label{1HZ}
\end{align}
In the second step, we consider a channel $W$ for which \eqref{1auxlineq} holds for $n \in \mathbb{N}$.
 By using \eqref{1auxlineq}, it is easy to see that for $n \in \mathbb{N}$
 \begin{align} \nonumber
 & \mathbb{E}[(Z_n^2(1-Z_n^2))^\alpha \,\, \mathbbm{1}_{\{Z_n > \frac 12 \}}] \\ \nonumber
 & =  \mathbb{E}[(Z_n (1+Z_n))^\alpha (Z_n(1-Z_n))^\alpha \mathbbm{1}_{\{Z_n > \frac 12 \}}] \\ \nonumber
 & \leq \sup_{z \in [\frac 12, 1]} (z (1+z))^\alpha \,\,  \mathbb{E}[(Z_n(1-Z_n))^\alpha \mathbbm{1}_{\{Z_n > \frac 12\}}] \\
 & \leq 2^\alpha \beta 2^{-n \rho} \leq \beta 2^{1-n \rho}. \label{11ineq}
 \end{align}
In the final step, we consider a number $n \in \mathbb{N}$ and let $N=2^n$. We then define the set $\mathcal{A}$ as
\begin{equation*}
\mathcal{A} = \{ i \in \{0,1, \cdots,N-1\}: Z(W_N^{(i)}) \leq \frac 12\},
\end{equation*} 
with $\mathcal{A}^c$ being its complement.  We have
\begin{align*}
& \sum_{i \in \mathcal{A}^c} I(W_N^{(i)}) \\
&\stackrel{(a)}  { \leq}\sum_{i \in \mathcal{A}^c} c_2 (1-Z(W_N^{(i)})^2)(1 + \log \frac{1}{ 1-Z(W_N^{(i)})^2} ) \\
& \stackrel{(b)}{\leq}  \sum_{i \in \mathcal{A}^c} 4c_2 c_3 \bigl(Z(W_N^{(i)})^2(1-Z(W_N^{(i)})^2) \bigr )^ \alpha \\
&= 4 c_2 c_3N\mathbb{E}[(Z_n^2(1-Z_n^2))^\alpha \,\, \mathbbm{1}_{\{Z_n > \frac 12 \}}] \\
&\stackrel{(c)}{\leq}  8c_2c_3 N \beta 2^{-n \rho}.
\end{align*}
Here (a) follows from  \eqref{1HZ}, (b) follows from Lemma~\ref{1aux3} and the fact that for $x \leq \frac 34$ we have $1+\log \frac{1}{x} \leq 4 \log \frac{1}{x}$, and (c) follows from \eqref{11ineq}.
Now, as a consequence of the above chain of inequalities we have
\begin{align*}
|\mathcal{A}| &\geq  \sum_{i \in \mathcal{A}} I(W_N^{(i)})\\
 & =  NI(W) - \sum_{i \in \mathcal{A}^c} I(W_N^{(i)}) \\ 
& \geq N(I(W) - 8c_2c_3\beta2^{-n \rho}),
\end{align*}
and consequently
\begin{equation*}
{\rm Pr}(Z_n \leq \frac 12) =\frac{|\mathcal{A}|}{N}  \geq I(W) - 8c_2c_3\beta2^{-n \rho}.
\end{equation*}
Hence, the proof follows by letting $c_1 = 8c_2c_3\beta$. 
\end{proof}


\begin{lemma}\label{1aux2}
Consider a generic stochastic process $\{X_n\}_{n\geq0}$ s.t. $X_0=x$, where $x \in (0,1)$, and for $n\geq 1$
\begin{equation} \label{1genericX_n}
X_{n} \leq \left\{
\begin{array}{lr}
 X_{n-1}^2 &  ; \text{if } B_n=1,\\
 2X_{n-1} &   ; \text{if } B_n=0.
\end{array} \right.
\end{equation}
Here,  $\{B_n\}_{n\geq 1}$ is a sequence of iid random variables with distribution  Bernoulli($\frac 12$). We then have for $n \in \mathbb{N}$
\begin{equation} \label{1aux2main}
{\rm Pr}(X_n \leq 2^{-2^{\sum_{i=1}^{n} B_i}}) \geq 1- c_2 x (1+ \log \frac {1}{x}), 
\end{equation}
where $c_2$ is a positive constant.
\end{lemma}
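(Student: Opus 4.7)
The plan is to pass to a pathwise deterministic upper bound for $X_n$, rewrite the target event as a tail estimate for a simple functional of the random walk $\{S_n\}$ (with $S_n := \sum_{i=1}^n B_i$), and conclude via an exponential moment bound. First, define the deterministic process $\{X_n^*\}$ by $X_0^* = x$ and
\[ X_n^* = (X_{n-1}^*)^2 \quad \text{if } B_n = 1, \qquad X_n^* = 2 X_{n-1}^* \quad \text{if } B_n = 0. \]
An easy induction using \eqref{1genericX_n} together with the monotonicity of $y \mapsto y^2$ and $y \mapsto 2y$ on $[0,\infty)$ shows $X_n \leq X_n^*$ pathwise, so it suffices to bound ${\rm{Pr}}(X_n^* > 2^{-2^{S_n}})$.

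Setting $Y_n := \log_2(1/X_n^*)$, the recursion becomes $Y_n = 2 Y_{n-1}$ when $B_n = 1$ and $Y_n = Y_{n-1} - 1$ when $B_n = 0$. Iteration yields the closed form
\[ Y_n = 2^{S_n}\log_2(1/x) - \sum_{k \leq n,\; B_k = 0} 2^{S_n - S_k}, \]
so the event $\{X_n^* \leq 2^{-2^{S_n}}\} = \{Y_n \geq 2^{S_n}\}$ is equivalent to $\{T_n \leq \log_2(1/x) - 1\}$, where
\[ T_n := \sum_{k \leq n,\; B_k = 0} 2^{-S_k}. \]
Grouping the contributions to $T_n$ by the level of the walk, one has $T_n \leq T^{\infty} := \sum_{j \geq 0} N_j \cdot 2^{-j}$, with $N_j$ denoting the number of $k \geq 1$ in the infinite Bernoulli sequence for which $S_{k-1} = j$ and $B_k = 0$. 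By memorylessness, the $N_j$'s are i.i.d.\ Geometric on $\{0,1,2,\ldots\}$ with ${\rm{Pr}}(N_j = m) = 2^{-(m+1)}$.

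For the tail of $T^\infty$, we use exponential moments. For $s \in (0, \ln 2)$, the MGF factorises as
\[ \mathbb{E}\bigl[e^{s T^\infty}\bigr] = \prod_{j = 0}^{\infty} \frac{1}{2 - e^{s/2^j}} = \frac{K_s}{2 - e^s}, \qquad K_s := \prod_{j \geq 1} \frac{1}{2 - e^{s/2^j}}. \]
The factor $K_s$ stays bounded as $s \nearrow \ln 2$: each $j \geq 1$ term converges to $(2 - 2^{1/2^j})^{-1}$ and the logarithms are summable since $2^{1/2^j} = 1 + (\ln 2)/2^j + O(4^{-j})$. Thus the only singularity of the MGF at $s = \ln 2$ is the simple pole coming from the $j = 0$ factor. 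Markov's inequality with $s = \ln 2 - 1/u$ yields $e^{-su} = e \cdot 2^{-u}$ and $2 - e^s = 2(1 - e^{-1/u}) \geq 1/u$, hence
\[ {\rm{Pr}}(T^\infty > u) \leq K e \cdot u \cdot 2^{-u}, \qquad K := \sup_{s < \ln 2} K_s < \infty. \]

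Substituting $u = \log_2(1/x) - 1$ (so that $2^{-u} = 2x$ and $u + 1 = \log_2(1/x)$) gives ${\rm{Pr}}(T_n > \log_2(1/x) - 1) \leq 2Ke \cdot x \log_2(1/x) \leq c_2 \cdot x(1 + \log(1/x))$, which is the desired estimate in the regime $x \leq 1/2$. The regime $x > 1/2$, where $u < 1$ and the Chernoff estimate is not effective, is handled by the trivial bound ${\rm{Pr}}(\cdot) \leq 1$ together with the elementary fact that $x(1 + \log_2(1/x)) \geq 3/4$ throughout $[1/4, 1)$; enlarging $c_2$ absorbs these cases. The main obstacle of the argument is precisely the Chernoff step: a direct Markov or Chebyshev estimate on $T^\infty$ gives only $O(1/u)$ or $O(1/u^2)$ decay, and a union bound over $\{N_j > a_j\}$ with optimally chosen thresholds $a_j$ produces at best $O(2^{-u/2})$. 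Only by pushing $s$ against the pole of the MGF at the sharp rate $s = \ln 2 - \Theta(1/u)$ does one obtain the correct prefactor $u \cdot 2^{-u}$, which translates into the linear $\log(1/x)$ factor in the lemma.
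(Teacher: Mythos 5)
Your proof is correct and reaches the same bound, but it reorganises the argument in a way that differs from the paper's proof at two key technical points. Both proofs first reduce to the equality case, pass to $A_n = -\log X_n$ (your $Y_n$), and then reduce to a tail bound on a random series built from the times the walk ``stays put''. From there the paths diverge. The paper encodes the sequence $(b_1,\dots,b_n)$ through its \emph{runs} $(r_1,\dots,r_k)$ and splits on $B_1$; conditioned on $B_1=1$ the relevant sum $X=\sum_{i\text{ even}}R_i 2^{-\sum_{j\text{ odd}<i}R_j}$ has a \emph{finite} MGF at the fixed exponent $\ln 2$ (i.e.\ $\mathbb{E}[2^X]<\infty$), which the paper bounds by a self-similar bootstrap inequality $M\le cM^{1/2}$; the case $B_1=0$ is then peeled off by conditioning on $R_1$. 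You instead derive a clean closed form for $Y_n$ directly and decompose the sum by \emph{occupation level} $j=S_{k-1}$, which produces the nicer i.i.d.\ geometric sequence $\{N_j\}$ and a perfectly factorised MGF $\prod_j (2-e^{s/2^j})^{-1}$. The price is that including the level $j=0$ (which the paper eliminates by conditioning on $B_1=1$) creates a simple pole at $s=\ln 2$, so $\mathbb{E}[2^{T^\infty}]=\infty$ and the fixed-exponent Markov bound fails; you compensate with a variable Chernoff exponent $s=\ln 2-\Theta(1/u)$, which produces exactly the $u2^{-u}$ tail and hence the $x\log(1/x)$ factor. In effect, your proof trades the paper's case analysis on $B_1$ and its (somewhat delicate) self-similar fixed-point bound for a single transparent Chernoff optimisation against a pole. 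Two minor imprecisions to clean up: taking $s=\ln 2-1/u$ requires $u>1/\ln 2\approx 1.44$ rather than $u\ge 1$ (so the ``small $x$'' regime where the Chernoff step is used should be $x<2^{-1-1/\ln 2}\approx 0.18$, not $x\le 1/2$), and correspondingly the trivial bound must cover $x\in[0.18,1)$, on which $x(1+\log_2(1/x))$ is bounded below by roughly $0.63$ rather than $3/4$; neither point affects correctness, since the constant $c_2$ absorbs any compact exceptional range.
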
 
\begin{proof}
We begin by recalling some related notation. 
Assuming $\{B_n\}_{n \in \naturals}$ is a sequence of iid Bernoulli($\frac12$) random variables, 
we denote by $(\mathcal{F}, \Omega,  \text{Pr})$ the probability space generated by this 
sequence. We also let  $(\mathcal{F}_n, \Omega_n, \text{Pr}_n)$ be the probability space 
generated by $(B_1, \cdots,B_n)$.  Finally, we denote by $\theta_n$ the natural embedding 
of $\mathcal{F}_n$ into  $\mathcal{F}$, i.e., for every $F \in \mathcal{F}_n$
\begin{equation*}
\theta_n(F) =\{ (b_1, b_2,\cdots, b_n, b_{n+1}, \cdots) \in \Omega \mid (b_1, \cdots, b_n ) \in F \}.
\end{equation*}
We thus have $ \text{Pr}_n (F) =  \text{Pr}(\theta_n (F))$.

We slightly modify  $X_n$ to start with
 $X_0=x$, where $x \in (0,1)$, and for $n\geq 1$
\begin{equation} 
X_{n} = \left\{
\begin{array}{lr}
 X_{n-1}^2 &  ; \text{if } B_n=1,\\
 2X_{n-1} &   ; \text{if } B_n=0.
\end{array} \right.
\end{equation}
It is easy to see that if we prove the lemma for this version of $X_n$, then the result of the lemma is valid for 
any generic $X_n$ that satisfies \eqref{1genericX_n}.

Equivalently, we can analyze the process 
$A_n= -\log X_n$ , i.e., $A_0=- \log  x \triangleq a_0$ and
\begin{equation} \label{A}
A_{n+1}=  \left\{
\begin{array}{lr}
2{A_n} &  ; \text{if } B_n=1,\\
A_n - 1   &  ; \text{if } B_n=0.
\end{array} \right.
\end{equation}
Note that in terms of the process $A_n$, the statement of the lemma can be phrased as
\begin{equation} \label{P(A_n)}
 {\rm{Pr}} (A_n \geq  2^{\sum_{i=1} ^{n} B_i}) \geq 1- c_2 \frac{1+a_0}{2^{a_0}}.
\end{equation}
Let us first explain how to associate to each $ (b_1, \cdots, b_n) \triangleq \omega_n  \in
\Omega_n$ a sequence of ``runs" $(r_1, \cdots , r_{k(\omega_n)})$. This sequence
is constructed by the following procedure. Each of the $r_i$'s is a positive integer. We construct the integers $r_i$ one by one starting from $r_1$.  We define $r_1$ as the smallest index
$i \in \mathbb{N}$ so that $b_{i+1} \neq b_1$. In general, $r_k$ is constructed from the previous $r_i$'s, $1 \leq i < k$, in the following way. 
If $\sum_{j=1}
^{k-1} r_j < n$ then
\begin{align*}
r_k = \min \{i \mid  \sum_{j=1} ^{k-1} r_j < i \leq n , b_{i+1} \neq b_{\sum_{j=1} ^{k-1} r_j}\}-\sum_{j=1} ^{k-1} r_j.
\end{align*}
The process stops whenever the sum of the runs equals $n$ (i.e., whenever $\sum_{i=1}^k r_i$ is exactly equal to $n$). Denote the
stopping time of the process by $k(\omega_n)$. In words, the sequence
$(b_1,\cdots, b_n)$ starts with $b_1$. It then repeats $b_1$, $r_1$ times.
Next follow $r_2$ instances of $\overline{b_1}$ ($\overline{b_1}:=1-b_1$), followed again by $r_3$
instances of $b_1$, and so on. We see that $b_1$ and $(r_1, \cdots,
r_{k(\omega_n)})$ fully describe $\omega_n =(b_1,\cdots, b_n) $. Therefore,
there is a one-to-one map
\begin{equation} \label{one-to-one}
 (b_1,\cdots, b_n) \longleftrightarrow \{ b_1, (r_1, \cdots, r_{k(\omega_n)})\}.
\end{equation}
As an example, for the sequence $\omega_{8} \triangleq (b_1, b_2, \cdots, b_8) = (1, 0, 0, 1, 0,0,0,1)$, we have $k(\omega_8) = 5$, and the corresponding sequence of runs is $(r_1, r_2, r_3, r_4, r_5) = (1,2,1,3,1)$. Also, the knowledge of the sequence $(r_1, r_2, r_3, r_4, r_5)$ and the fact that $b_1 = 1$ will uniquely determine the sequence $(b_1, b_2, \cdots, b_8)$.  

We think of $\omega_n = (b_1, \cdots, b_n)$ as a realization of the random vector$(B_1, \cdots, B_n)$. In this regard, each realisation $(b_1,\cdots, b_n )$ is associated with a value $k(w_n)$ and a run sequence $(r_1, \cdots, r_{k(\omega_n)})$. Thus,  $k(\omega_n)$ and $(r_1, \cdots, r_{k(\omega_n)})$ are similarly the corresponding realizations of random objects which we denote by $K$ and $(R_1, \cdots, R_K)$. 

Note that for a generic sequence $(b_1, \cdots, b_n)$ we can either have $b_1 = 1$ or $b_1 = 0$. We start with the first case, i.e., we first condition ourselves on the event $B_1 = 1$. 

\underline{Case I ($b_1 = 1$):} 
It is easy to see that assuming $b_1  = 1$ we have:
\begin{equation} \label{l1}
\sum_{i=1}^{n} b_i = \sum_{ \text{$j$ odd $\leq k(\omega_n)$}} r_j,
\end{equation}
and
\begin{equation} \label{l2} 
 n = \sum_{j=1} ^{k(\omega_n)} r_j.
\end{equation}
Analogously, for a realization $(b_1,b_2, \cdots) \triangleq \omega \in \Omega$ of the infinite sequence of random variable $\{B_i\} _{i \in \mathbb{N}}$, we can associate a sequence of runs $(r_1,r_2, \cdots)$. In this regard, considering the infinite sequence of random variables $\{B_i\} _{i \in  \mathbb{N}}$ (with the extra condition $B_1 =1$), the corresponding sequence of runs, which we denote by $\{ R_k \}_{k \in \mathbb{N}}$, is an iid sequence with ${\rm{Pr}}(R_i= j) = \frac{1}{2^j}$.
Let us now see how we can express the output of $A_n$ in terms of the runs $r_1,
r_2, \cdots, r_{k(\omega_n)}$.  We begin by a simple example: Consider
a sequence $(b_1=1, b_2, \cdots,b_8)$ that has an associated run sequence
$(r_1,\cdots, r_5)=(1,2,1,3,1)$.  For such a choice of $b_i$'s,  we will now write the value of the process $A_n$ for several small values of $n$. 
In this way,  it is easy to notice a simple pattern for the evolution of $A_n$ in terms of the sequence of runs. We have  
\begin{align*}
A_{1}    &= a_0 2^{r_1} , \\
A_{3}    &= a_0 2^{r_1}  - r_2, \\
A_{4}    &= (a_0 2^{r_1}  - r_2) 2^{r_3}= a_0 2^{r_1+r_3}  - r_2 2^{r_3}  , \\
A_{7}    &= (a_0 2^{r_1}  - r_2) 2^{r_3} - r_4= a_0 2^{r_1+r_3}  - r_2 2^{r_3} - r_4  ,\\
A_{8}    &= ((a_0 \times 2^{r_1} - r_2) \times 2^{r_3} -r_4) \times 2^{r_5} \\
         & = a_0 2^{r_1+r_3+r_5} -r_2 2^{r_3+r_5} - r_4 2^{r_5} \\
         & = 2^{r_1+r_3+r_5}( a_0 - 2^{-r_1} r_2- 2^{-(r_1+r_3)} r_4).
\end{align*}
In general, for a sequence $(b_1, \cdots, b_n)$ with the associated run
sequence $(r_1,\cdots, r_{k(\omega_n)})$ we can write (note that $b_1 = 1$):
\begin{align} \nonumber
A_{n} & = a_0 2^{\sum_{\text{$i$ odd $\leq k(\omega_n)$}} r_i} - \!  \!  \!  \!  \!  \!  \! \sum_{\text{$i$ even $\leq k(\omega_n)$}} \!  \!  \!  \!  \!  \!  \! r_i  2 ^{\sum_{\text{$j$ odd $, i< j \leq k(\omega_n)$}} r_j} \\ \nonumber
&= a_0 2^{\sum_{\text{$i$ odd $\leq k(\omega_n)$}} r_i} -\!  \!  \!  \!  \!  \!  \!  \!  \!  \!  \!  \!  \sum_{\!  \text{$i$ even $\leq k(\omega_n)$}} \!  \!  \!  \!  \!  \!  \!  \!   r_i  2 ^{(-\sum_{\text{$j$ odd $< i$}} r_j + \sum_{\text{$j$ odd $\leq k(\omega_n)$}} r_j )} \\ \nonumber
&= [2^{\sum_{\text{$i$ odd $\leq k(\omega_n)$}} r_i}][a_0 -  (\!  \!  \!  \!  \!  \!  \! \sum_{\text{$i$ even $\leq k(\omega_n)$}} \!  \!  \!  \!  \!  \!  \! r_i  2 ^{-\sum_{\text{$j$ odd $< i$}} r_j} ) ] \\ \label{A_n_eq}
&\stackrel{\eqref{l1}}{=} [2^{\sum_{i=1} ^{n} b_i}] [a_0 -  (\!  \!  \!  \!  \!  \!  \! \sum_{\text{$i$ even $\leq k(\omega_n)$}} \!  \!  \!  \!  \!  \!  \! r_i  2 ^{-\sum_{\text{$j$ odd $< i$}} r_j }) ].
\end{align}
Here, by $\sum_{\text{$i$ even $\leq j$}}$ we mean that the sum is over all the positive integers $i$ that are even and are also less than the given value $j$.  
Similarly, for example by $\sum_{\text{$i$ odd: $j < i \leq k(\omega_n)$}}$ we mean that the sum is over all 
integers $i$  that are odd and also satisfy $j < i \leq k(\omega_n)$.     
Now, if we consider the random vector $(B_1, B_2, \cdots, B_n)$ and its associated run sequence $(R_1, \cdots, R_K)$, we can write
\begin{align} \nonumber
& {\rm{Pr}} (A_n \geq  2^{\sum_{i=1} ^{n} B_i})\\ \nonumber
&\stackrel{\eqref{A_n_eq}}{=} {\rm{Pr}}  \bigl (2^{\sum_{i=1}^n B_i} (a_0 -  \!\! \sum_{\text{$i$ even $\leq K$}} \!\! R_i  2 ^{-\sum_{\text{$j$ odd $< i$}} R_j} ) \geq 2^{\sum_{i=1}^n B_i} \bigr ) \\ \label{eql1}
&= {\rm{Pr}}  (a_0 -  \sum_{\text{$i$ even $\leq K$}} R_i  2 ^{-\sum_{\text{$j$ odd $< i$}} R_j} \geq 1 ),
\end{align}
Our objective is to find a lower bound on the left-hand side of \eqref{P(A_n)}. In this regard, by using \eqref{eql1}, we can equivalently  find an upper-bound on the probability of the complementary event:
\begin{equation}  \label{prob}
{\rm{Pr}} (\sum_{\text{$i$ even $\leq K$}} R_i  2 ^{-\sum_{\text{$j$ odd $< i$}} R_j} > a_0-1).
\end{equation}
For $n \in \mathbb{N}$, define the set $U_n \in \mathcal{F}_n$ as
\begin{equation*}
U_n =  \{ \omega_n \in \Omega_n \mid  \exists \, l \leq k(\omega_n) : \sum_{\text{$i$ even $\leq l$}} r_i  2 ^{-\sum_{\text{$j$ odd $< i$}} r_j} \geq a_0-1   \}.
\end{equation*}
Clearly we have:
\begin{equation*}
{\rm{Pr}}  (\sum_{\text{$i$ even $\leq K$}} R_i  2 ^{-\sum_{\text{$j$ odd $< i$}} R_j} \geq a_0-1) \leq {\rm{Pr}}(U_n).
\end{equation*}
Obtaining an upper bound on $\text{Pr}(U_n)$ for finite $n$ seems to be a difficult task. This is because for finite $n$ handling the distribution of the runs is cumbersome. The idea here is to show that we can obtain useful bounds on $\text{Pr}(U_n)$ (for any finite $n$) by considering the case when $n$ tends to $\infty$. In the infinite $n$ limit, the run sequence $\{R_i\}_{i \in \naturals}$ becomes an iid sequence (note that $B_1=1$) and this makes the proofs much simpler.  

In the following we show that if   $(b_1, \cdots,b_n) \in U_{n} $, then for any choice of $b_{n+1}$, it is true that $(b_1, \cdots,b_n,b_{n+1}) \in U_{n+1}$. The two bits $b_n$ and $b_{n+1}$ and can jointly take four possible values. Here, for the sake of brevity, we will only consider the case when $b_n,b_{n+1}=1$, and the other three cases can be verified similarly. Let $\omega_n = (b_1, \cdots, b_{n-1},b_n=1) \in U_n$. Hence, $k(\omega_n)$ is an odd number (recall that $b_1=1$) and the quantity $\sum_{\text{$i$ even $\leq k(\omega_n)$}} r_i  2 ^{-\sum_{\text{$j$ odd $< i$}} r_j}$ does not depend on the value of $r_{k(\omega_n)}$. Now consider the sequence  $\omega_{n+1}=(b_1, \cdots,b_n=1, 1)$. Since the last bit ($b_{n+1}$) equals $1$, then $k(\omega_{n+1})= k(\omega_n)$ (i.e. the two sequences $\omega_n$ and $\omega_{n+1}$ have the same number of runs). Therefore, it is easy to see that 
\begin{equation*}
\sum_{\text{$i$ even $\leq k(\omega_n)$}} r_i  2 ^{-\sum_{\text{$j$ odd $< i$}} r_j}
=
\sum_{\text{$i$ even $\leq k(\omega_{n+1})$}} r_i  2 ^{-\sum_{\text{$j$ odd $< i$}} r_j}.
\end{equation*}
 As a result $(b_1, \cdots,b_n,1) \in U_{n+1}$.
From above, we conclude that for any $i \in \naturals$ we have $\theta_i(U_i) \subseteq \theta_{i+1}(U_{i+1})$ and as a result
\begin{align*}
{\rm{Pr}}_i (U_i) &= {\rm{Pr}}(\theta_i(U_i)) \leq {\rm{Pr}}(\theta_{i+1}(U_{i+1})) ={\rm{Pr}}_{i+1}(U_{i+1}).
\end{align*}
Hence, the quantity $\lim_{n \to \infty} {\rm{Pr}}_n (U_n) =\lim_{n \to \infty} {\rm{Pr}}  (\theta_n(U_n))=\lim_{n \to \infty} {\rm{Pr}}(\cup_{i=1}^{n} \theta_i (U_i))$ is an upper bound on \eqref{prob}.
Let us now consider the set\footnote{Note here that the $V \subseteq \Omega$, while $U_n \subseteq \Omega_n$.}
\begin{equation*}
V =  \{ \omega \in \Omega \mid \exists \, l  \in \naturals : \sum_{\text{$i$ even $\leq l$}} r_i  2 ^{-\sum_{\text{$j$ odd $< i$}} r_j} \geq a_0 -1 \}.
\end{equation*}
By the definition of  $V$ we  have $ \cup_{i=1} ^{\infty} \theta_i (U_i) \subseteq V$, and as a result,  ${\rm{Pr}} (\cup_{i=1} ^{\infty} \theta_i (U_i)) \leq {\rm{Pr}}(V)$.   In order to bound the probability of the set $V$, note that assuming $B_1=1$,  the sequence $\{ R_k \}_{k \in \mathbb{N}}$ (i.e., the sequence of runs when associated with the sequence $\{B_i\}_{i \in \mathbb{N}}$) is an iid sequence with ${\rm{Pr}}(R_i= j) = \frac{1}{2^j}$.  We also have
\begin{align} \nonumber
\text{Pr}(V) &=  {\rm{Pr}} (a_0 -  \sum_{\text{$i$ even}} R_i  2 ^{-\sum_{\text{$j$ odd $ < i$}} R_j} \leq 1 )\\
& = {\rm{Pr}}  (\sum_{\text{$i$ even}} R_i  2 ^{-\sum_{\text{$j$ odd $ < i$}} R_j} \geq  a_0 -1) \nonumber \\
& = {\rm{Pr}}  (2^{\sum_{\text{$i$ even}} R_i  2 ^{-\sum_{\text{$j$ odd $ < i$}} R_j}} \geq  2^{a_0 -1})\nonumber \\ 
& \leq \frac{\mathbb{E}[2^{\sum_{\text{$i$ even }} R_i  2 ^{-\sum_{\text{$j$ odd $ < i$}} R_j }}]}{2^{a_0-1}}, \label{markov}
\end{align}
where the last step follows from the Markov inequality. The idea is now to provide an upper bound on the quantity $\mathbb{E}[2^{\sum_{\text{$i$ even}} R_i  2 ^{-\sum_{\text{$j$ odd $ < i$}} R_j }}]$. Let $X=\sum_{\text{$i$ even }} R_i  2 ^{-\sum_{\text{$j$ odd $ < i$}} R_j }$. We have
\begin{align} \nonumber
 \mathbb{E}[2^X] 
& =\sum_{s=1}^{\infty} {\rm{Pr}}(R_1 = s) \mathbb{E}[2^X \mid R_1= s]\\ \nonumber
& = \sum_{s=1}^{\infty} \frac{1}{2^{s}} \mathbb{E}[2^X \mid R_1= s]\\ \nonumber
& \stackrel{(a)}{=} \sum_{s=1}^{\infty} \frac{1}{2^{s}}  \mathbb{E}[2^{\frac{R_2}{2^{s}}}]\mathbb{E}[2^{\frac{X}{2^{s}}}]\\ \nonumber
& \stackrel{(b)}{=} \sum_{s=1}^{\infty} \frac{1}{2^{s}( 2^{1-\frac{1}{2^s}} -1)} \mathbb{E}[2^{\frac{X}{2^{s}}}]\\ \label{l3}
& \stackrel{(c)}{\leq} \sum_{s=1}^{\infty} \frac{1}{2^{s}( 2^{1-\frac{1}{2^s}} -1)}  (\mathbb{E}[2^X])^{\frac{1}{2^s}},
\end{align}
where (a) follows from the fact that $R_i$'s are iid and $X$ is self-similar, (b) follows from the relation $\mathbb{E}[2^{\frac{R_2}{2^s}}] = \frac{1}{2^{1 - 2^{-s}} -1}$, and (c) follows from Jensen inequality. Now, because $X$ is a positive random variable we have $\mathbb{E}[2^X] \geq 1$, and consequently the sequence $\{(\mathbb{E}[2^X])^{2^{-s}}\}_{s \in \mathbb{N}}$ is a decreasing sequence.
As a result, from \eqref{l3} an upper bound on the the value of $\mathbb{E}[2^X]$ can be derived as follows. We have
\begin{align*}
\mathbb{E}[2^X] & \leq \frac{1}{2(2^{\frac 12} -1) }( \mathbb{E}[2^X] )^{\frac 12} (1 + \frac 12 + \frac 14 + \cdots)\\   
& \leq \frac{1}{(2^{\frac 12} -1) }( \mathbb{E}[2^X] )^{\frac 12},
\end{align*}
and as a result, 
we have 
\begin{equation} \label{tild_c_2}
\mathbb{E}[2^X] \leq \frac{1}{(2^{\frac 12} - 1)^2} \triangleq \tilde{c}.
\end{equation}
Thus, by \eqref{markov} we obtain 
\begin{align*}
{\rm{Pr}} (a_0 -  \sum_{\text{$i$ even $\leq m$}} R_i  2 ^{-\sum_{\text{$j$ odd $ < i$}} R_j} \leq 1 ) \leq \frac{\tilde{c}}{2^{a_0 - 1}}.
\end{align*}
Thus, given that $B_1=1$, we have:
\begin{equation*}
 {\rm{Pr}} (A_n \geq  2^{\sum_{i=1} ^{n} B_i}) \geq 1- \frac{\tilde{c}}{2^{a_0-1}}.
\end{equation*}
Or more precisely we have
\begin{equation} \label{B_1=1-final}
 {\rm{Pr}} (A_n \geq  2^{\sum_{i=1} ^{n} B_i} \mid B_1=1, A_0 = a_0) \geq 1- \frac{\tilde{c}}{2^{a_0-1}}.
\end{equation}
\underline{Case II ($b_1 =0$):} Now consider the case $b_1 = 0$ (i.e., we condition on the event $B_1=0$). We show that a similar bound applies for $A_n$. Firstly, note that by fixing the value of $n$ the distribution of $R_1$ is as follows: ${\rm{Pr}}(R_1=i | B_1 =0)=\frac{1}{2^i}$ for $1 \leq i \leq n-1$ and ${\rm{Pr}}(R_1=n | B_1 =0) = \frac{1}{2^{n-1}}$. We have
\begin{small}
\begin{align} \nonumber
 & {\rm{Pr}} (A_n \geq  2^{\sum_{i=1} ^{n} B_i} \mid B_1=0)\\ \nonumber
& = \sum_{j=1}^{n} {\rm{Pr}} (A_n \geq  2^{\sum_{i=1} ^{n} B_i} \mid R_1 = j, B_1=0) {\rm{Pr}}(R_1 =j \mid B_1=0)\\ \nonumber
& \leq  \sum_{j =1}^{\text{min}( a_0 -1,  n)}  \!\!\!\! {\rm{Pr}} (A_n \geq  2^{\sum_{i=1} ^{n} B_i}\mid R_1 = j, B_1=0){\rm{Pr}}(R_1 =j \mid B_1=0)\\ \nonumber
& \,\,\, \,\,\, \,\,\, \,\, \,\,\, \,\,\, \,\,\,  + \sum_{j > a_0 -1}^n {\rm{Pr}}(R_1=j \mid B_1 =0)\\ \nonumber
&\stackrel{\eqref{A}}{\leq}  \sum_{j =1}^{\text{min}( a_0 -1,  n)}  {\rm{Pr}} (A_{n-j} \geq  2^{\sum_{i=j+1} ^{n} B_i}\mid  B_{j+1}=1, A_{j} = a_0 - j ) \times \frac{1}{2^j} \\ \nonumber
&\,\,\, \,\,\, \,\,\, \,\, \,\,\, \,\,\, \,\,\, +  2\sum_{j > a_0 -1}^\infty  \frac{1}{2^{j}} \\ \nonumber
&\stackrel{\eqref{B_1=1-final}}{\leq}  \sum_{j =1}^{\text{min}( a_0 -1,  n)} \frac{1}{2^j} \frac{\tilde{c}}{2^{a_0 - 1 -j}} + \frac{4}{2^{a_0-1}}\\ \nonumber
&\leq \frac{\tilde{c} a_0 +4 }{2^{a_0-1}} .
\end{align}
\end{small}
Thus, we can write (note by \eqref{tild_c_2} that $\tilde{c} > 4$)
\begin{equation} \label{B_1=0-final}
 {\rm{Pr}} (A_n \geq  2^{\sum_{i=1} ^{n} B_i} \mid B_1=0, A_0 = a_0) \geq 1- \frac{\tilde{c} (1+a_0)}{2^{a_0-1}}.
\end{equation}

Finally, by considering the two cases together, we obtain from \eqref{B_1=1-final} and \eqref{B_1=0-final} the following:
\begin{equation*}
 {\rm{Pr}} (A_n \geq  2^{\sum_{i=1}^{n} B_i}) \geq 1- \frac{2\tilde{c}(1+a_0)}{2^{a_0}}.
\end{equation*}
Hence, the proof of the lemma  follows with $c_2 =2 \tilde{c} = \frac{2}{(2^\frac 12 -1)^2}$.
\end{proof}


\begin{lemma}  \label{1aux3}
Let $\alpha<1$ be a constant. We have for $x \in (0,\frac{3}{4}]$
\begin{equation}\label{1ineqlog}
x \log(\frac{1}{x}) \leq c_3 (x(1-x))^\alpha,
\end{equation}
where 
\begin{equation} \label{1lemc3}
c_3=  \frac{2}{(1-\alpha)\ln 2}.
\end{equation}
\end{lemma}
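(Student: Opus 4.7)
The plan is to reduce \eqref{1ineqlog} to a one-variable optimization, handling the $x$-dependence and the $(1-x)$-dependence separately, and then check that the resulting constant fits inside $c_3 = \frac{2}{(1-\alpha)\ln 2}$.

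First I would prove the auxiliary inequality
\[
x\log(1/x) \;\leq\; \frac{x^\alpha}{e(1-\alpha)\ln 2} \qquad \text{for all } x \in (0,1].
\]
To do this, set $h(x) = x^{1-\alpha}\log(1/x) = \frac{-x^{1-\alpha}\ln x}{\ln 2}$. A direct differentiation gives
\[
h'(x) = \frac{x^{-\alpha}}{\ln 2}\bigl[-(1-\alpha)\ln x - 1\bigr],
\]
whose unique critical point on $(0,1]$ is $x^\star = e^{-1/(1-\alpha)}$; since $1-\alpha>0$, this is an interior maximum (the function vanishes at both endpoints). Plugging in $x^\star$ gives $h(x^\star) = \frac{1}{e(1-\alpha)\ln 2}$, so $x^{1-\alpha}\log(1/x) \leq \frac{1}{e(1-\alpha)\ln 2}$ on $(0,1]$, and multiplying both sides by $x^\alpha$ yields the desired bound.

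Next I would absorb the factor $(1-x)^\alpha$. For $x \in (0,3/4]$ we have $1-x \geq 1/4$, so (treating the relevant case $\alpha \in (0,1)$; for $\alpha \leq 0$ the inequality is immediate since $(1-x)^\alpha \geq 1$) we get $(1-x)^\alpha \geq 4^{-\alpha}$, hence
\[
\bigl(x(1-x)\bigr)^\alpha \;\geq\; 4^{-\alpha}\, x^\alpha.
\]
Combining this with the first step,
\[
x\log(1/x) \;\leq\; \frac{x^\alpha}{e(1-\alpha)\ln 2} \;\leq\; \frac{4^\alpha}{e(1-\alpha)\ln 2}\,\bigl(x(1-x)\bigr)^\alpha.
\]

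Finally I would check that the leading constant is within the target. Since $\alpha < 1$ we have $4^\alpha < 4$, and $4/e < 2$ (numerically $4/e \approx 1.47$), so $\frac{4^\alpha}{e} \leq 2$, giving
\[
x\log(1/x) \;\leq\; \frac{2}{(1-\alpha)\ln 2}\bigl(x(1-x)\bigr)^\alpha \;=\; c_3\bigl(x(1-x)\bigr)^\alpha,
\]
which is exactly \eqref{1ineqlog}. No step here is a serious obstacle: the only delicate point is making sure the inequality $4^\alpha/e \leq 2$ is not violated at the endpoint $\alpha \uparrow 1$, but the strict inequality $4 < 2e$ leaves comfortable slack.
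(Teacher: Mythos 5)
Your proposal is correct and follows essentially the same route as the paper: both absorb the $(1-x)^\alpha$ factor via $1-x\geq\tfrac14$, and both reduce the remaining $x$-dependence to a one-variable optimization with the same interior critical point $x^\star=e^{-1/(1-\alpha)}$. The only cosmetic difference is that the paper takes logarithms and substitutes $u=\log\tfrac1x$ before optimizing, while you maximize $x^{1-\alpha}\log\tfrac1x$ directly; the underlying calculation is identical.
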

\begin{proof}
By applying the function $\log(\cdot)$ to both sides of \eqref{1ineqlog} and some further simplifications, the inequality \eqref{1ineqlog} is equivalent to the following:
For $x \in (0, \frac 34]$
\begin{equation*}
\log(\log\frac {1}{x}) \leq \log c_3 + (1-\alpha) \log \frac{1}{x} + \alpha \log(1-x).
\end{equation*}  
As $x \leq \frac 34$, we have $\alpha \log(1-x) \geq -\log 4$. Hence, in order for the above inequality to hold it is sufficient that for $x \in (0, \frac 34]$
\begin{equation*}
\log(\log \frac {1}{x} ) \leq \log \frac{c_3}{4} + (1-\alpha) \log \frac{1}{x}.
\end{equation*}  
Now, by letting $u = \log \frac {1}{x}$, the last inequality becomes
\begin{equation}
(1-\alpha)u- \log u+ \log \frac{c_3}{4} \geq 0,
\end{equation}
for $u \geq \log(\frac 43)$. It is now easy to check that by the choice of $c_3$ as in \eqref{1lemc3}, the minimum of the above expression over the range $u \geq  \log(\frac 43)$ is always non-negative and hence the proof follows. 
\end{proof}

\end{appendices}

\end{document}